\theoremstyle{definition}
\newtheorem{theorem}{Theorem}
\newtheorem{lemma}[theorem]{Lemma}
\newtheorem{corollary}{Corollary}
\newtheorem{assumption}{Assumption}
\newtheorem{remark*}{Remark*}
\newcommand{\var}{\text{Var}}
\newcommand{\avar}{\text{AVar}}
\newcommand{\redss}{\text{AERSS}}
\newcommand{\are}{\text{ARE}}
\newcommand{\cov}{\text{Cov}}
\newcommand{\unadj}{\text{unadjusted}}
\newcommand{\eff}{\text{efficient}}
\newcommand{\progwl}{{\rm progn}_{W,L}}
\newcommand{\progw}{{\rm progn}_{W}}
\newcommand{\progl}{{\rm progn}_{L}}
\newcommand{\prognon}{{\rm progn}_{\emptyset}}
\newcommand{\itprogwl}{progn_{W,L}}
\newcommand{\itprogw}{progn_{W}}
\newcommand{\itprogl}{progn_{L}}
\newcommand{\itprognon}{progn_{\emptyset}}
\newcommand{\nmax}{n_{\rm max}}
\newcommand{\indic}{{\mathds 1}}
\begin{document}



  \title{Improving Precision through Adjustment for Prognostic Variables in Group Sequential Trial Designs: Impact of Baseline Variables, Short-Term Outcomes, and Treatment Effect Heterogeneity}
  \author{
  Tianchen Qian\thanks{Department of Statistics, Harvard University. qiantianchen@fas.harvard.edu},
  Michael Rosenblum\thanks{Department of Biostatistics, Johns Hopkins University.}, and
  Huitong Qiu\thanks{Vatic Labs, New York.}}
  \maketitle

\bigskip

\begin{abstract}
In randomized trials, appropriately adjusting for baseline variables and short-term outcomes can lead to increased precision and reduced sample size. We examine the impact of such adjustment in group sequential designs, i.e., designs with preplanned interim analyses where enrollment may be stopped early for efficacy or futility. We address the following questions: how much precision gain can be obtained by appropriately adjusting for baseline variables and a short-term outcome?  How is this precision gain impacted by  factors such as the proportion of pipeline participants (those who enrolled but haven't yet had their primary outcomes measured) and treatment effect heterogeneity? What is the resulting impact on power and average sample size in a group sequential design?
We derive an asymptotic formula  that decomposes the overall precision gain from adjusting for baseline variables and a short-term outcome into contributions from factors mentioned above, for efficient estimators in the  model that only assumes randomization and independent censoring.
%
We use our formula to approximate the precision gain from a targeted minimum loss-based estimator applied to data from a completed trial of a new surgical intervention for stroke.
 Our formula implies that (for an efficient estimator) adjusting for a prognostic baseline variable leads to at least as much asymptotic precision gain as adjusting for an equally prognostic short-term outcome.
In many cases, such as our stroke trial application, the former leads to substantially greater precision gains than the latter.
In our simulation study, we show how precision gains from adjustment can be converted into sample size reductions (even when there is no treatment effect). 
\end{abstract}

\noindent%
{\it Keywords:}  Short-term outcome; Semiparametric efficiency; Targeted minimum loss-based estimator
\vfill

\newpage

\section{Introduction} \label{sec:introduction}

Group sequential designs for randomized clinical trials involve interim analyses that may result in early stopping for efficacy or futility. We consider trial designs where the primary outcome is measured at a fixed time (called the delay) after enrollment.
Prior to observing the primary outcome, participants may have baseline variables and a short-term outcome measured. For example, in the MISTIE-II trial \citep{MISTIE_trial2008} for evaluating a surgical treatment for intracerebral hemorrhage the primary outcome is the modified Rankin Scale (mRS), which measures degree of disability, 180 days after enrollment. A short-term outcome is mRS 30 days after enrollment. The baseline (pre-randomization) variables include age and measures of stroke severity.

Estimators that adjust for baseline variables are used in randomized trials because of the potential for increased precision and reduced sample size.
\citet{pocock2002review}, who surveyed 50 clinical trial reports from major medical journals, found that 36 adjusted for baseline variables. Adjusting for  prognostic baseline variables is recommended by regulators, e.g., \citet{ICH9, EMAguideline2015, FDA-guideline2019}.
Also, in order to address missing outcomes, it  may be useful to appropriately adjust 
 for baseline and post-randomization variables (e.g., short-term outcomes)
 \citep{committee2010national,EMAmissingdata}. 

Various methods for adjustment in randomized trials have been developed; see, e.g., \citet{leon2003semiparametric,davidian2005semiparametric,tsiatis2008covariate, rubin2008covariate, moore2009binary,moore2011relativeefficiency}.
There has also been discussion on the benefits and limitations of adjustment, compared to the standard, unadjusted estimator \citep{yang2001efficiency,freedman2008regression,lin2013agnostic}.
\citet{moore2011relativeefficiency} present a formula for the precision gain from adjusting for baseline variables in single stage trials.



We focus on estimating the average treatment effect for continuous or binary outcomes. We derive a formula for the asymptotic precision gain (measured by the relative efficiency compared to the unadjusted estimator) due to adjustment for baseline variables and a short-term outcome using efficient estimators, that is, estimators that extract the maximum prognostic information from these variables.  
The formula can be used in trial planning to approximate the precision gain from adjustment, which can translate to reduced sample size.

We show, using data of the MISTIE-II trial, how a modified version of these formulas can be used to approximate the precision gain from using the targeted minimum loss-based estimator (TMLE) of \citet{Gruber2012TMLE} while accounting for model misspecification. 
 

We highlight a few implications of our formula. Holding other factors fixed, larger treatment effect heterogeneity decreases the value added from adjusting for baseline variables of all participants; however, for the pipeline participants, adjusting for their baseline variables leads to increased precision only when there is treatment effect heterogeneity. 
 Adjusting for prognostic baseline variables typically leads to precision gains  even if all participants have their primary outcomes  observed. In contrast, adjusting for a prognostic short-term outcome can only improve precision  when there are  participants with the short-term outcome observed but the primary outcome unobserved. 

In Section \ref{sec:notation}, we introduce notation and assumptions. 
In Section \ref{sec:theory-ate}, we present  our formula for precision gain. 
We show how to approximate the precision gain  from using a TMLE in Section \ref{sec:tmle}. 
In Section \ref{sec:application}, we apply our  formula to data from the  MISTIE-II trial.
In Section \ref{sec:simulation}, we conduct simulation studies. Limitations and open problems are discussed in Section \ref{sec:discussion}.

\section{Notation and Assumptions} \label{sec:notation}


For participant $i$, let $A_i$ denote the indicator of study arm assignment. We assume $A_i$ is binary-valued with $A_i=1$ for treatment and $A_i=0$ for control. Denote by $W_i$ a vector of baseline variables measured before randomization. 
All variables must be preplanned in the trial protocol. Denote by $L_i$ a short-term outcome, which is observed at time $d_L$ after enrollment. Though all of our results hold if $L_i$ is any prespecified variable measured after randomization, not necessarily the outcome measured at an earlier time, we focus on the case of short-term outcomes. Also, for simplicity, we focus on cases with binary $L_i$; the results can be generalized to multidimensional, continuous $L_i$. Denote by $Y_i$ the primary outcome (continuous or binary-valued), which is observed at time $d_Y$ after enrollment with $d_Y \geq d_L$. The delays $d_L$ and $d_Y$ are prespecified and are common to all participants. When followed up completely, participant $i$ contributes full data $(W_i, A_i, L_i, Y_i)$. 

Let $n$ denote the sample size. We assume that  the set of vectors  
$\{(W_i, A_i, L_i, Y_i)\}_{1\leq i \leq n}$ are independent and identically distributed draws from the unknown joint distribution $P_0$ on generic data vector $(W, A, L, Y)$. 
The following assumption follows from randomization:
\begin{assumption} \label{assump:randomization}
The study arm assignment $A$ is  independent of the baseline variables $W$ and 
  $P(A=1)=P(A=0)=1/2$.
\end{assumption}
\noindent Results for randomization probabilities other than $1/2$ are in Supplementary Material \ref{proof:thm:avar-ate}.
 
Define the average treatment effect as 
$\Delta = E(Y|A=1)-E(Y|A=0)$. 
The goal is to test the null hypothesis of no average treatment benefit: $H_0: \Delta \leq 0$ versus $H_1: \Delta > 0$.

We assume that participants are enrolled at a constant rate. Since not all participants have full data observed at an interim analysis, we introduce indicators $C^L$ and $C^Y$ to denote  that $L$ and $Y$, respectively, have been observed at or before a given analysis time. For a participant, $C^L = 1$ if and only if $L$ is observed, and $C^Y = 1$ if and only if $Y$ is observed.
 These variables get updated at each analysis. 
We make the following assumptions:
\begin{assumption} \label{assump:independent-censoring}
(Independent Censoring) $(C^L, C^Y)$ is independent of $(W,A,L,Y)$.
\end{assumption}
\begin{assumption} \label{assump:monotone-censoring}
(Monotone Censoring) $C^L \geq C^Y$ with probability $1$.
\end{assumption}
\noindent An example where Assumptions \ref{assump:independent-censoring} and \ref{assump:monotone-censoring} hold is when administrative censoring is the only source of missingness and there are no changes over time in the population who are enrolled; this is what we simulate in Section~\ref{sec:simulation}. 
At any analysis time, each enrolled participant has one of the following missingness patterns: 
\begin{itemize}
    \item[(i)] $(C^L, C^Y) = (0,0)$: a pipeline participant with only $(W,A)$ observed;
    \item[(ii)] $(C^L, C^Y) = (1,0)$: a pipeline participant with $(W,A,L)$ observed;
    \item[(iii)] $(C^L, C^Y) = (1,1)$: a participant with $(W,A,L,Y)$ observed.
\end{itemize}

We assume a nonparametric model for the joint distribution of the variables $(W, A, L, Y,C_L,C_Y)$, except that we make Assumptions 1-3.
The semiparametric efficiency results in the following sections are with respect to this model.  

For a group sequential design with $K$ stages, we consider the asymptotic setting where the maximum sample size goes to infinity such that at each interim analysis the proportions of enrolled participants with final and short-term outcomes observed converge to constants $p_y$ and $p_l$, respectively (where these limit proportions may differ by analysis time, but for conciseness we suppress their dependence on the analysis time). This asymptotic regime corresponds to fixing the delay times $d_L, d_Y$ and analysis times, while increasing the enrollment rate. We only consider analysis times with $p_y > 0$. 

The unadjusted estimator of $\Delta$ is the difference between the two arms of the sample mean of the primary outcome $Y$, using data from all participants with $Y$ observed, i.e., 
$\sum_i A_i C^Y_i Y_i/\sum_i A_i C^Y_i - \sum_i (1-A_i) C^Y_i Y_i/\sum_i (1-A_i) C^Y_i.$

All estimators of the average treatment effect $\Delta$ that we consider are regular, asymptotically linear (RAL), defined, e.g., by \cite{BKRW1993}. Any such estimator $\widehat{\Delta}$ of $\Delta$ is asymptotically normal, i.e.,  $\sqrt{n}(\widehat{\Delta} - \Delta) \stackrel{d}{\to} N(0, \sigma^2)$ as sample size $n$ goes to infinity, where $\stackrel{d}{\to}$ denotes convergence in distribution; the variance $\sigma^2$ is called the 
 asymptotic variance of $\widehat{\Delta}$ and is denoted by $\avar(\widehat{\Delta})$.
 For example, the asymptotic variance of the unadjusted estimator is  $2\{\var(Y|A=1)+\var(Y|A=0)\}/p_y$
  at an analysis time where $p_y$ fraction of the enrolled participants have their primary outcome observed.
The asymptotic relative efficiency (ARE) between two RAL estimators $\widehat{\Delta}_1$ and $\widehat{\Delta}_2$ of $\Delta$ is the inverse of the ratio of their asymptotic variances:
$\are(\widehat{\Delta}_1, \widehat{\Delta}_2) = \avar(\widehat{\Delta}_2)/\avar(\widehat{\Delta}_1).$ 

For any random vector $X$ and $a\in\{0,1\}$, let $E_a(Y \mid X)$ and $\var_a(Y \mid X)$ denote $E(Y\mid X, A=a)$ and $\var(Y\mid X, A=a)$, respectively. For example, $E_1(Y \mid W) = E(Y \mid W, A=1)$, and $E_1(Y) = E(Y \mid A=1)$. When no subscript is used on $E$ or $\var$, these refer to expectation and variance, respectively, under the population distribution of the corresponding variables.

\section{Best Possible Precision Gain}
\label{sec:theory-ate}

\subsection{Formula for Precision Gain}

The following result gives the best possible asymptotic variance that can be achieved by a RAL estimator of the average treatment effect $\Delta$, in terms of the proportions $p_l, p_y$ of enrolled participants with $L$ and $Y$ observed, respectively, at a given analysis time:

\begin{lemma} 
\label{thm:avar-ate}
Suppose Assumptions \ref{assump:randomization}, \ref{assump:independent-censoring}, and \ref{assump:monotone-censoring} hold. The asymptotic variance of any RAL estimator of $\Delta$ is at least 
\begin{align}
   & \var\{E_1(Y \mid W) - E_0(Y \mid W) \}
+  \sum_{a\in\{0,1\}} \frac{2}{p_l} \var_a\{ E_a(Y \mid L,W) - E_a(Y \mid W) \} \nonumber \\
+ &  \sum_{a\in\{0,1\}} \frac{2}{p_y} \var_a\{ Y - E_a(Y \mid L,W) \}. \label{eq:thm:var-bound-ate}
\end{align}

\end{lemma}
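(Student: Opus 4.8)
The plan is to compute the semiparametric efficiency bound for $\Delta$ in the observed-data model and then invoke the semiparametric lower bound (convolution) theorem, which guarantees that the asymptotic variance of any RAL estimator is at least the variance of the efficient influence function (EIF). The key structural observation is that, by Assumptions~\ref{assump:independent-censoring} and \ref{assump:monotone-censoring}, the observed data are a \emph{monotone} coarsening of the full data $(W,A,L,Y)$ with three nested observation levels --- $(W,A)$, then $L$, then $Y$ --- and this coarsening is completely at random, so the censoring mechanism is variation-independent of (and ancillary to) the full-data law. Under such MCAR coarsening the observed-data tangent space splits as an orthogonal direct sum of the (coarsened) full-data tangent space and the censoring tangent space, which is exactly what makes the EIF computable layer by layer.

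First I would write down the full-data EIF for $\Delta$ under the randomization-only model. Writing $\pi=P(A=1)=1/2$, this is the usual augmented form
\begin{equation*}
D_F = \frac{A}{\pi}\{Y - E_1(Y\mid W)\} - \frac{1-A}{1-\pi}\{Y - E_0(Y\mid W)\} + E_1(Y\mid W) - E_0(Y\mid W) - \Delta .
\end{equation*}
I would then split the within-arm residual along the filtration $W \subset (W,L) \subset (W,L,Y)$, using $Y - E_a(Y\mid W) = \{Y - E_a(Y\mid L,W)\} + \{E_a(Y\mid L,W) - E_a(Y\mid W)\}$, to express $D_F$ as a sum of three increments: a \emph{level-0} term $E_1(Y\mid W) - E_0(Y\mid W) - \Delta$ depending on $W$ only; a \emph{level-1} term built from the residuals $E_a(Y\mid L,W) - E_a(Y\mid W)$; and a \emph{level-2} term built from the residuals $Y - E_a(Y\mid L,W)$, the latter two retaining the arm-specific weights $\indic\{A=a\}/\pi_a$ inherited from $D_F$.

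Next I would form the observed-data EIF by attaching to each increment the inverse probability that the information it uses has been observed: weight $1$ for the level-0 increment (baseline is observed for everyone), $C^L/p_l$ for the level-1 increment (it uses $L$), and $C^Y/p_y$ for the level-2 increment (it uses $Y$), where $p_l=P(C^L=1)$ and $p_y=P(C^Y=1)$. Computing $\var(D)$ then reduces to routine bookkeeping: the three increments are mutually orthogonal (the two within-arm residuals are successive martingale differences along the filtration, so their cross term vanishes by iterated expectation, and the level-0 term is a function of $W$ with the others conditionally centered given $W$), the two arms are orthogonal since $\indic\{A=1\}\indic\{A=0\}=0$, and the censoring indicators factor out using Assumption~\ref{assump:independent-censoring} with $E\{(C^L/p_l)^2\}=1/p_l$ and $E\{(C^Y/p_y)^2\}=1/p_y$. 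Within arm $a$ the weight $1/\pi_a=2$ (inherited from $D_F$) produces the factors of $2$. Collecting the three orthogonal variances reproduces exactly the three terms of \eqref{eq:thm:var-bound-ate}.

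The main obstacle is not this variance computation but the justification that the weighted object $D$ really is the EIF --- that it is a valid gradient in the observed-data model and, crucially, that it is the \emph{canonical} gradient (the projection onto the observed-data tangent space), so that its variance is a genuine lower bound rather than merely the variance of some inefficient RAL estimator. This is where the coarsening-at-random theory does the work: one must verify that the layered inverse-probability-weighted form lies in the observed-data tangent space and is orthogonal to the censoring tangent space, which is precisely what the orthogonal-increment structure of the monotone MCAR filtration delivers, together with Assumption~\ref{assump:monotone-censoring} (monotonicity ensures $C^Y=1$ implies $L$ is observed, so each increment's weight is well defined). Once $D$ is identified as the EIF, the claimed ``at least'' lower bound follows immediately from the convolution / local asymptotic minimax theorem for RAL estimators.
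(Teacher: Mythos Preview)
Your proposal is correct and structurally matches the paper's argument: both arrive at the same three-piece orthogonal decomposition $D=D_0+D_1+D_2$ of the efficient influence function, establish pairwise orthogonality via the martingale-increment structure along the filtration $W\subset(W,A,L)\subset(W,A,L,Y)$, and then sum the three variances. The one difference is how the EIF itself is obtained. The paper recasts the problem as a two-time-point sequential-treatment problem --- merging $A$ and $C^L$ into a single binary ``treatment'' $\tilde A=\indic(A=a,C^L=1)$ and viewing $C^Y$ as a second treatment --- and then cites the EIF for the resulting g-computation functional from existing longitudinal results (Rosenblum 2011; van der Laan 2010), doing this once per arm and taking the difference. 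You instead start from the known full-data EIF under randomization and map it to the observed data via the layered inverse-probability construction for monotone MCAR coarsening. Both routes yield the identical $D_0,D_1,D_2$ (with the factors $1/(p_ap_l)$ and $1/(p_ap_y)$ that become $2/p_l$ and $2/p_y$ at $p_a=1/2$), and from that point the orthogonality lemma and variance bookkeeping are the same. Your route is a bit more self-contained; the paper's route trades the CAR-theory verification you flag as ``the main obstacle'' for a literature citation.
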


Lemma \ref{thm:avar-ate} is a consequence of the efficient influence function of $\Delta$ in our  semiparametric model that only makes the assumptions  in Section~\ref{sec:notation}. 
This efficient influence function is given in 
 Section~\ref{proof:thm:avar-ate} of the Supplementary Material, and follows from  \citet{robins1992semipara,Scharfstein1999,van2003semipara}.
The TMLE estimator of \citet{Gruber2012TMLE}, which will be discussed in Section \ref{sec:tmle}, achieves this variance lower bound when all working models are correct. When certain working models for this estimator are misspecified, one can approximate the corresponding $\are$ by substituting regression model fits for conditional expectations in (\ref{eq:thm:var-bound-ate}); see Section~\ref{subsec:are-fitted}.


The first term in (\ref{eq:thm:var-bound-ate}) characterizes the variance in the conditional treatment effect across different levels of $W$. We define the \textit{treatment effect heterogeneity}, denoted by $\gamma$, by dividing the variance of the conditional treatment effect by the sum of variances of $Y$ in each arm:
\begin{align}
\gamma = \frac{\var\{E_1(Y \mid W) - E_0(Y \mid W) \}}{\sum_{a\in\{0,1\}} \var_a(Y)}. \label{def:gamma}
\end{align}
The treatment effect heterogeneity $\gamma$  is invariant to linear transformations of $Y$, and is nonnegative. 
When $\gamma = 0$ there is no treatment effect heterogeneity, i.e., the conditional treatment effect $E_1(Y \mid W)-E_0(Y \mid W)=\Delta$ with probability 1.


For each arm $a \in \{0,1\}$, the variance of $Y$ given $A=a$ can be decomposed as follows:
\begin{equation}
\var_a(Y) = \var_a\{Y - E_a(Y\mid L,W)\} +  \var_a\{ E_a(Y\mid L,W) - E_a(Y \mid W) \} + \var\{E_a(Y\mid W)\}, \label{var_decomp_simple}
\end{equation}
as proved in Section~\ref{proof:lem:vardecomp} in the Supplementary Material. The last term in the display above motivates 
 the following definition of the proportion of the variance in $Y$ explained by $W$ (summed across arms): 
\begin{equation}
R^2_W = \frac{\sum_{a\in\{0,1\}} \var\{E_a(Y \mid W)\} }{\sum_{a\in\{0,1\}} \var_a(Y)}  = 1 - \frac{\sum_{a\in\{0,1\}} \var_a\{Y - E_a(Y \mid W)\} }{\sum_{a\in\{0,1\}} \var_a(Y)} 
.
\end{equation}
Similarly, the middle term in (\ref{var_decomp_simple}) motivates the following definition of the proportion of additional variance in $Y$ explained by $L$ after accounting for $W$ (summed across arms):
\begin{equation}
R^2_{L \mid W} = \frac{\sum_{a\in\{0,1\}} \var_a\{E_a(Y \mid L, W) - E_a(Y \mid W)\} }{\sum_{a\in\{0,1\}} \var_a(Y) }. \label{def:RsquaredLgivenW}
\end{equation}

A RAL estimator of $\Delta$ is efficient (at a given data generating distribution) if it achieves the asymptotic variance lower bound (\ref{eq:thm:var-bound-ate}) in the semiparametric model. 

\begin{theorem} 
\label{cor:re-ate}

Suppose Assumptions \ref{assump:randomization}, \ref{assump:independent-censoring}, and \ref{assump:monotone-censoring} hold.  Consider any analysis time.  The ARE between any  efficient RAL estimator of $\Delta$ and the unadjusted estimator is 
\begin{equation}
\are = \frac{1}{ 1 + (p_y/2) \gamma - R^2_W -  (1 - p_y / p_l) R^2_{L\mid W} }. \label{eq:re-ate}
\end{equation}
\end{theorem}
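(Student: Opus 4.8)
The plan is to exploit the definition $\are = \avar(\text{unadjusted})/\avar(\text{efficient})$ and instead compute the reciprocal ratio $\avar(\text{efficient})/\avar(\text{unadjusted})$, since inverting it at the very end yields \eqref{eq:re-ate}. By Lemma \ref{thm:avar-ate}, an efficient RAL estimator attains the lower bound \eqref{eq:thm:var-bound-ate}, so I take $\avar(\text{efficient})$ to be the three-term expression there. For the denominator I use the asymptotic variance of the unadjusted estimator recorded in Section \ref{sec:notation}, namely $\avar(\text{unadjusted}) = (2/p_y)\sum_{a\in\{0,1\}}\var_a(Y)$.

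The main step is to divide \eqref{eq:thm:var-bound-ate} term by term by $(2/p_y)\sum_a \var_a(Y)$ and recognize each quotient. The first term, $\var\{E_1(Y\mid W)-E_0(Y\mid W)\}$, divided by $(2/p_y)\sum_a\var_a(Y)$, equals $(p_y/2)\gamma$ by the definition \eqref{def:gamma}. The second term carries a factor $2/p_l$; against the denominator's factor $2/p_y$ this produces $(p_y/p_l)$ times $\sum_a \var_a\{E_a(Y\mid L,W)-E_a(Y\mid W)\}\big/\sum_a\var_a(Y)$, which is exactly $(p_y/p_l)R^2_{L\mid W}$ by \eqref{def:RsquaredLgivenW}. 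The third term shares the factor $2/p_y$ with the denominator, so it reduces to the bare ratio $\sum_a\var_a\{Y-E_a(Y\mid L,W)\}\big/\sum_a\var_a(Y)$.

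To finish I would rewrite this last ratio using the per-arm variance decomposition \eqref{var_decomp_simple}. Summing that identity over $a\in\{0,1\}$ and solving for $\sum_a \var_a\{Y-E_a(Y\mid L,W)\}$ expresses it as $\sum_a\var_a(Y)$ minus $\sum_a\var_a\{E_a(Y\mid L,W)-E_a(Y\mid W)\}$ minus $\sum_a\var\{E_a(Y\mid W)\}$; dividing by $\sum_a\var_a(Y)$ and invoking the definitions of $R^2_{L\mid W}$ and $R^2_W$ gives $1 - R^2_{L\mid W} - R^2_W$. Here randomization (Assumption \ref{assump:randomization}), which makes $A$ independent of $W$, is what lets the variance of the $W$-measurable quantity $E_a(Y\mid W)$ be read off as the $R^2_W$ contribution. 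Collecting the three pieces yields $\avar(\text{efficient})/\avar(\text{unadjusted}) = (p_y/2)\gamma + (p_y/p_l)R^2_{L\mid W} + (1 - R^2_{L\mid W} - R^2_W) = 1 + (p_y/2)\gamma - R^2_W - (1-p_y/p_l)R^2_{L\mid W}$, and taking reciprocals gives \eqref{eq:re-ate}.

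There is no deep obstacle once Lemma \ref{thm:avar-ate} is granted: the argument is a normalization-and-matching computation. The only place demanding care is the third term, where the residual variance $\var_a\{Y-E_a(Y\mid L,W)\}$ is not itself one of the predefined summary quantities and must be eliminated through the decomposition \eqref{var_decomp_simple}. Keeping track of which term carries $2/p_l$ versus $2/p_y$, so that the coefficients $p_y/2$, $p_y/p_l$, and $1$ emerge correctly and the $R^2_{L\mid W}$ contributions combine into $-(1-p_y/p_l)$, is the main bookkeeping risk.
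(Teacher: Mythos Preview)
Your proposal is correct and follows the same approach as the paper: compute $\avar(\unadj)=(2/p_y)\sum_a\var_a(Y)$, take $\avar(\eff)$ from Lemma~\ref{thm:avar-ate}, and form the ratio. The paper's proof is terse---after deriving $\avar(\unadj)$ it simply says the result ``follows immediately from Lemma~\ref{thm:avar-ate}''---whereas you spell out the term-by-term normalization and the use of the decomposition \eqref{var_decomp_simple} to resolve the residual-variance term; this is exactly the algebra the paper leaves implicit.
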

\noindent The denominator of the right side of the above display never exceeds 1, which follows since $\gamma \leq 2 R^2_W$ by the Cauchy-Schwarz inequality.


\subsection{Impact of Baseline Variables on Precision Gain} \label{sec:impactbaseline}


In order to isolate the impact of baseline variables $W$ on the ARE \eqref{eq:re-ate}, we consider the case where there is no impact of $L$, i.e., when 
$R^2_{L \mid W} = 0$. This is the case, for example, when $L$ is independent of $Y$ given $A$ and $W$.
Then the $\are$ from adjusting for baseline variables $W$ depends on three factors: the proportion of variance in $Y$ explained by $W$ ($R^2_W$), the proportion of participants with $Y$ observed among those enrolled ($p_y$), and the treatment effect heterogeneity ($\gamma$). 
We plot in Figure \ref{fig:are-w-ate} the $\are$ versus $p_y$, considering different combinations of $R^2_W$ and $\gamma$.

\begin{figure}[htbp]
    \caption{Asymptotic relative efficiency between an efficient estimator and the unadjusted estimator for estimating $\Delta$, when only the baseline variable $W$ is prognostic ($R^2_{L \mid W} = 0$ in all curves).}
    \label{fig:are-w-ate}
\begin{center}
    \includegraphics[width=0.8\textwidth]{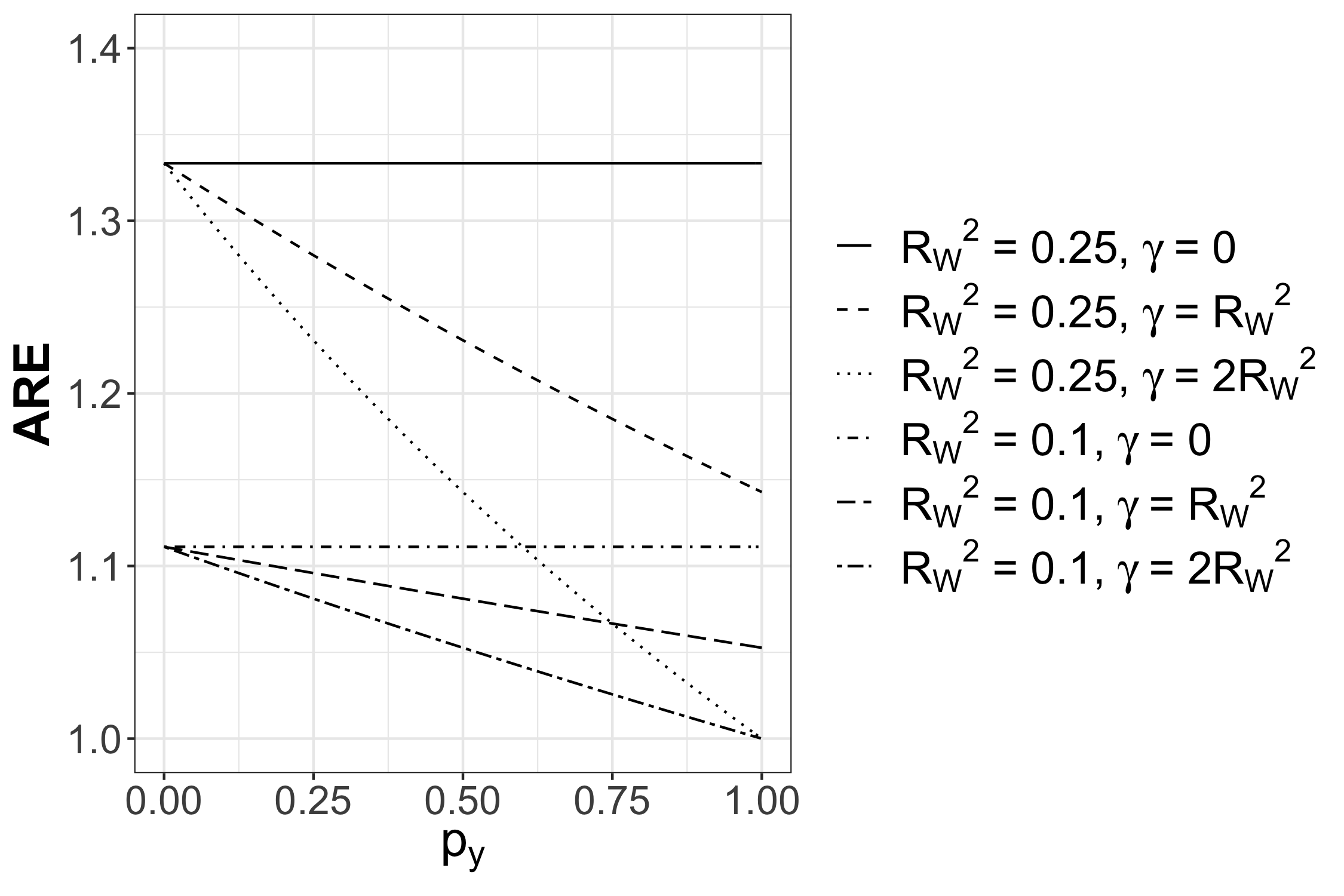}
\end{center}
\end{figure}

 A summary of what is happening in Figure~\ref{fig:are-w-ate} is that with $p_y=1$ and $R^2_W$ fixed, greater treatment effect heterogeneity ($\gamma$) lessens the precision gain from prognostic baseline variables. 
 However, with fixed $R^2_W>0$ and $\gamma>0$, some precision gain is restored as we decrease $p_y$ from $1$ to $0$ (i.e., as we move from right to left along any line with $\gamma>0$ in Figure~\ref{fig:are-w-ate}). 
   We describe the intuition for these phenomena below. The key idea is that precision gains result from adjusting for chance imbalances in $W$ both (i) between study arms among the participants with $Y$ observed, and (ii) between the full sample and the subset with $Y$ observed. 
 Whether the adjustments in (i) and (ii) lead to precision gains or not is determined by the treatment effect heterogeneity $\gamma$, which influences whether  the estimated means in each arm are adjusted in opposite directions (leading to variance reduction) or in the same direction (leading to cancellation and no impact). This is analogous to constructive versus destructive interference between waves, which can lead to cancellation or amplification as the waves come together.



To give intuition for the phenomena in Figure~\ref{fig:are-w-ate}, for the remainder of this subsection we fix $R^2_W$ and consider the simple case of a single, binary-valued $W$ representing being less than 65 years old at baseline, and primary outcome $Y$  being the indicator of having mRS at most 3 (a good outcome) at 180 days.
 We focus on the stratum $W=1$ and assume that 
the under 65 subset of the study population would have better outcomes on average than the overall study population if all were assigned to the control arm, i.e.,  $E(Y|A=0,W=1)-E(Y|A=0)>0$. 
We next consider $p_y=1$ and the opposite extremes of no treatment effect heterogeneity (where efficiency gains from adjusting for $W$ are maximal) and maximum treatment effect heterogeneity (where there are no efficiency gains from adjusting for $W$). 


First, consider the case of $p_y=1$ and no treatment effect heterogeneity ($\gamma=0$), i.e., 
 $E(Y|A=1,W)-E(Y|A=0,W)$ equals the constant $E(Y|A=1)-E(Y|A=0)=\Delta$. 
 Combining this with the assumption above that those under 65 have better outcomes on average than the overall study population under assignment to the control arm ($E(Y|A=0,W=1)-E(Y|A=0)>0$), it follows that those under 65 also have better outcomes on average than the overall study population under assignment to the {\em treatment} arm, i.e.,  
$E(Y|A=1,W=1)-E(Y|A=1)=E(Y|A=0,W=1)-E(Y|A=0)$. 
 If by chance there are proportionally more under 65 participants assigned to the treatment arm than the control arm, then the unadjusted estimator of the mean outcome in the treatment (control) arm is biased upward (downward) conditional on the chance imbalance \citep{jiang2019robust}. An efficient estimator adjusts for this by decreasing the  
 unadjusted estimate of $E(Y|A=1)$ and by increasing the unadjusted estimate of $E(Y|A=0)$; the net impact is to decrease the estimate of $\Delta$.  Alternatively, if by chance there are proportionally fewer under 65 participants assigned to the treatment arm, then by a symmetric argument an efficient estimator adjusts for this by increasing the 
 unadjusted estimate of $\Delta$. The overall impact of such adjustments across many hypothetical trials, is to remove the variance caused by chance imbalances across arms in the stratum $W=1$. This is, intuitively, why there is an efficiency gain when $\gamma=0$.
 
 Now consider $p_y=1$ and the opposite extreme of maximum  treatment effect heterogeneity ($\gamma=2R^2_W$), which occurs when $E(Y|A=1,W)+E(Y|A=0,W)$ equals the constant $E(Y|A=1)+E(Y|A=0)$. An analogous argument (given in Section~\ref{appen:adjustmentintuition} of the Supplementary Material) to the previous paragraph shows that adjustment for chance imbalances in $W$ leads to equal changes (in the same direction) in the estimated means in each arm; the net impact is that the difference between estimated means across arms is not changed (due to cancellation). Adjustment has no impact in this case, and leads to no variance reduction.

Above, we considered $p_y=1$, i.e., no pipeline participants. We now fix $R^2_W>0$ and $\gamma>0$, and consider the impact of decreasing $p_y$ from $1$ to $0$. 
This leads to increased $\are$ because an efficient estimator can extract information from pipeline participants if  $\gamma>0$; in contrast, the unadjusted estimator does not use any information from pipeline participants. Intuition for why $\gamma>0$ is necessary in order for pipeline participant information to be useful 
is 
given  in Section~\ref{appen:adjustmentintuition} of the Supplementary Material.



\subsection{Impact of a Short-term Outcome on Precision Gain} \label{impactshortterm}
To isolate the impact of a short-term outcome $L$ on precision gain, we  set $R^2_W = 0$ in \eqref{eq:re-ate} so that only $L$ is prognostic.
The precision gain from adjusting for short-term outcome $L$ depends on two factors: the proportion of variance in $Y$ explained by $L$ after accounting for $W$ ($R^2_{L \mid W}$), and the proportion of participants with $Y$ observed among those with $L$ observed ($p_y / p_l$). 
In Figure \ref{fig:are-l-ate}, we plot the $\are$ against $p_y / p_l$, and 
 consider two values of $R^2_{L \mid W}$. 

\begin{figure}[htbp]
 \caption{Asymptotic relative efficiency between an efficient estimator and the unadjusted estimator for estimating $\Delta$, when only the short-term outcome $L$ is prognostic ($R^2_W = 0$ in both curves). The solid line corresponds to a higher prognostic value $R^2_{L \mid W}=0.25$ and the dashed line corresponds to a lower prognostic value $R^2_{L \mid W}=0.1$.
    } 
    \label{fig:are-l-ate}
\begin{center}
 \includegraphics[width=0.48\textwidth]{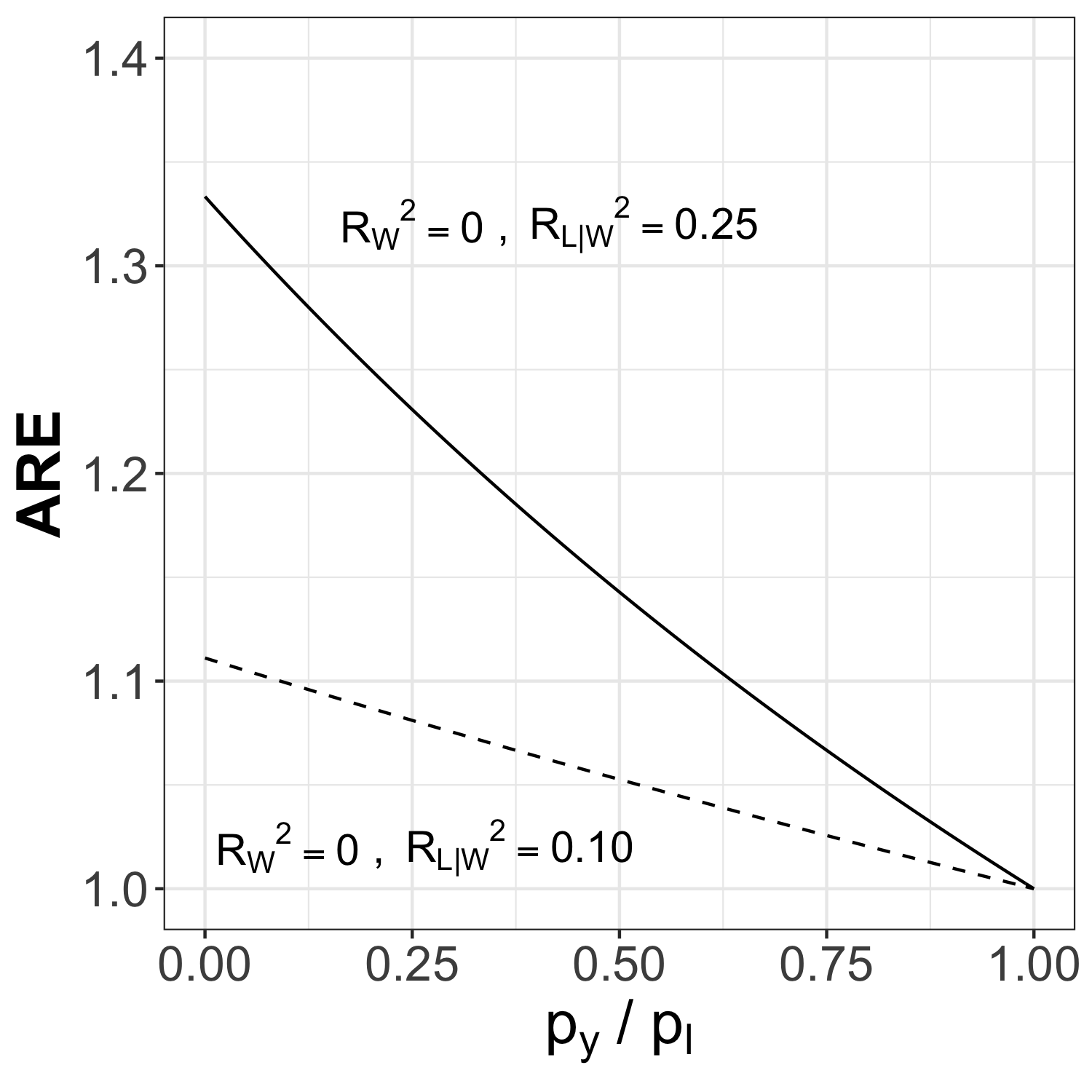}
\end{center}
\end{figure}

Smaller $p_y/p_l$ and larger $R^2_{L \mid W}$ generally increase the precision gain from adjusting for $L$; this is because $L$ adds value just for participants with $L$ but not $Y$ observed, 
and $R^2_{L \mid W}$ quantifies the prognostic value of $L$ beyond the variance in $Y$ explained by $W$. 
When there are no pipeline participants, which would occur at the end of a trial with no early stopping or dropout, 
 we have $p_y/p_l = 1$ and 
the $\are$ is 1, i.e., adjusting for $L$ is useless.
This is because adjusting for $L$ helps by, in each arm separately, accounting for chance imbalances between the  participants with $Y$ observed and the participants with $L$ but not $Y$ observed; when  $p_y/p_l = 1$, these groups are identical and no adjustment can be made.

Recall that we are operating in the semiparametric model defined by Assumptions 1-3; in particular, we are not making any assumptions about the relationships among the variables $(Y,A,L)$. Therefore, we cannot adjust for observed imbalances in $L$ between arms $A \in \{0,1\}$, since these imbalances may be due to the impact of arm assignment. 

Given fixed $R^2_{L \mid W}>0$, larger $p_y / p_l$ decreases the precision gain from adjusting for $L$, as seen in Figure~\ref{fig:are-l-ate}. In a trial with constant enrollment rate and $d_L < d_Y$, the precision gain from adjusting for $L$ typically attenuates at later stages of a group sequential design due to $p_y / p_l$ being  nondecreasing over time. In the case where missingness is only due to administrative censoring, $p_y / p_l$ starts at $0$ at time $d_L$ and eventually becomes 1 after enrollment stops and enough time has elapsed for all enrolled participants to have $Y$ observed.
 
\subsection{Comparison of Equally Prognostic Baseline Variable and Short-term Outcome} \label{impactcomparison}
For any $q: 0 < q \leq 1$, we compare the $\are$ between two cases: $R^2_W = q, R^2_{L \mid W} = 0$ (only baseline variable prognostic) and 
 $R^2_W = 0, R^2_{L \mid W} = q$ (only short-term outcome prognostic).  
 The $\are$ in the 
 former case is larger or equal to that in the latter case. Equality occurs 
  if and only if  
 $p_l=1$ and treatment effect heterogeneity in the first case is the maximum possible ($\gamma=2R^2_W=2q)$, as
 proved 
in Section~\ref{appen:proof-section3.4} of the Supplementary Material. Intuition for this is given in  Section~\ref{appen:adjustmentintuition} of the Supplementary Material.
  The equality can be seen graphically in that 
the two lines in Figure~\ref{fig:are-l-ate} (corresponding to prognostic value $0.25$ and $0.1$) are identical to the two 
  corresponding lines in Figure~\ref{fig:are-w-ate} with $\gamma=2R^2_W$; the horizontal axes of the two figures are identical in the special case of $p_l=1$.



\section{Estimator that Adjusts for Baseline Variables and Short-term Outcomes} \label{sec:tmle}


\subsection{Targeted Minimum Loss-Based Estimator}
 The targeted minimum loss-based estimator (TMLE) of  \cite{Gruber2012TMLE}, which 
 builds on the ideas of \citet{robins1992semipara,Scharfstein1999, 
 Robins2000TMLE,BangRobins2005TMLE,van2006tmle}, 
 is implemented in the \textsf{R} package \textsf{ltmle} \citep{ltmlePackage}. 
   We use it to estimate $\Delta$ and call it the \textit{adjusted estimator}.

Our implementation of this TMLE for a binary outcome $Y$ involves fitting logistic regression  working models.  Such models are fit first for the conditional probabilities of censoring given the observed history before censoring: $P(C^L=1 \mid A, W)$ and $P(C^Y=1 \mid L, A, W)$.  We also fit such a model for study arm assignment $P(A=1 \mid W)$. Lastly, we fit logistic regression models for $E(Y| L,A,W)$ and for $E(Y| A,W)$, given the corresponding observed histories; we refer to these as outcome regression models. The last model is fit  using sequential regression, an idea from  \cite{Robins2000TMLE,BangRobins2005TMLE}; see our R code (Github link given at end of paper) for the implementation of the software of \cite{ltmlePackage} that we used. All models involve an intercept and main terms for each regressor variable. Each regression is performed using all participants for whom the relevant variables are uncensored.
For a continuous-valued outcome $Y$, linear outcome regression models could be used or, if the outcome is bounded then it can be rescaled to the interval $[0,1]$ and logistic regression can be used as described by  \cite{grubervanderLaan2010}.

The adjusted estimator initially uses the aforementioned regression model fits. It then updates each outcome regression model by adding a new covariate built from the other model fits, as described by  \cite{Gruber2012TMLE}. The final estimator of the average treatment effect $\Delta$ is based on the updated outcome regression model fit of $E(Y|A,W)$. It is computed by first generating a prediction $Y_{a,i}$ of the outcome $Y$ under each hypothetical assignment to arm $a \in \{0,1\}$ for each participant $i$ by substituting $W=W_i$ and $A=a$ into this model fit. The estimator of $\Delta$ is $(1/n)\sum_{i=1}^n (Y_{1,i}-Y_{0,i})$. 


Assumptions 1-3 from Section~\ref{sec:notation} imply that the censoring models and the model for $P(A=1|W)$ are correctly specified. 
The results of \citet[Section 4]{Gruber2012TMLE} imply that the adjusted estimator is a consistent estimator of the average treatment effect $\Delta$.
This holds regardless of the correctness of the outcome regression models. That is, these models could be arbitrarily misspecified and still the adjusted estimator converges in probability to $\Delta$ as sample size goes to infinity.
 If the outcome regression models are correct, then the adjusted estimator achieves the asymptotic variance lower bound in (\ref{eq:thm:var-bound-ate}), in which case it is semiparametric, locally efficient \citep[Section 4]{Gruber2012TMLE}. 
 Other estimators with the above properties include the augmented, inverse probability of treatment (and censoring) estimators of \citet{Robins2000TMLE} and \citet{BangRobins2005TMLE}.


\subsection{Approximating the Relative Efficiency between the Adjusted Estimator and the Unadjusted Estimator} \label{subsec:are-fitted}

We focused on the asymptotic relative efficiency between an efficient estimator and the unadjusted estimator in Section \ref{sec:theory-ate}. 
To connect our results regarding efficient estimators to the adjusted estimator, we use results from \citet[Section 5.3]{Gruber2012TMLE}. These results imply that if all regression models used in the TMLE are correctly specified, then 
 plugging the corresponding model fits into the formulas (\ref{def:gamma})-(\ref{def:RsquaredLgivenW}) and then using these estimates of $\gamma,R^2_W,R^2_{L \mid W}$ in (\ref{eq:re-ate}) results in a consistent estimator of the $\are$ (called the plug-in estimator).
 This estimator of the $\are$ is asymptotically conservative, meaning that the $\are$s may be underestimated but not overestimated, if 
  the models for censoring and for $P(A=1|W)$ are correctly specified; since these models are correct under Assumptions \ref{assump:randomization}-\ref{assump:monotone-censoring}, 
it follows that the plug-in estimator can serve as a (possibly conservative) asymptotic approximation to the $\are$ between the adjusted estimator and the unadjusted estimator.

Denote by $\widehat{E}(Y \mid X=x)$ the predicted value of  $E(Y \mid X=x)$ from a model fit. 
Let $\widehat{\var}(\cdot)$ denote the sample variance over participants from both arms, and $\widehat{\var}_a(\cdot)$ denote the sample variance over participants from arm $a$.
We estimate $R^2_W$, $R^2_{L\mid W}$, and  $\gamma$, respectively, by
\begin{equation*}
	\widehat{R}^2_W  = \frac{ \sum_{a\in\{0,1\}} \widehat{\var}\{ \widehat{E}(Y \mid W, A=a) \}}{ \sum_{a\in\{0,1\}} \widehat{\var}_a(Y) },
\end{equation*}
\begin{equation*}
	\widehat{R}^2_{L \mid W}  = \frac{ \sum_{a\in\{0,1\}} \widehat{\var}_a\{\widehat{E}(Y \mid W, L, A=a) - \widehat{E}(Y \mid W, A=a)\} }{ \sum_{a\in\{0,1\}} \widehat{\var}_a(Y) },
\end{equation*}
\begin{equation*}
    \widehat\gamma = \frac{\widehat\var\{ \widehat{E}(Y \mid W, A=1) - \widehat{E}(Y \mid W, A=0) \}}{\sum_{a\in\{0,1\}} \widehat{\var}_a(Y)}.
\end{equation*}
In Sections~\ref{sec:application}-\ref{sec:simulation} 
when calculating the conditional expectations $\widehat{E}$ in the formulas for $\widehat{R}^2_W$ and $\widehat{R}^2_{L \mid W}$, we used logistic regression model fits with main terms.  When calculating $\widehat\gamma$, we used  similar model fits except including interactions, since we wanted to give an opportunity to detect treatment effect heterogeneity.  
The estimated $R$-squared quantities above are relevant for continuous and binary outcomes in measuring the precision gain from adjustment when estimating $\Delta$; this generalizes an idea from \citet{moore2009binary}, 
who give an R-squared formula related to $\widehat{R}^2_W$ except that it marginalizes over $A$.



\section{Applying Precision Gain Formula to Trial Data}
\label{sec:application}


\subsection{Trial Example: MISTIE-II} \label{subsec:mistie}

MISTIE-II is a Phase II randomized clinical trial evaluating a new surgical treatment for intracerebral hemorrhage. The treatment is called Minimally-Invasive Surgery Plus rt-PA for Intracerebral Hemorrhage Evacuation, abbreviated as MISTIE \citep{MISTIE_trial2008}. In the MISTIE-II dataset, the primary and short-term outcomes of each participant correspond to the Modified Rankin Scale (mRS), measured at different times after enrollment. We use the variables in Table \ref{tab:MISTIEvars}. The primary outcome $Y$ is a binary indicator of a successful outcome ($\mbox{mRS}\leq3$) at 180 days after enrollment. 
The short-term outcomes $L^{(1)}$ and $L^{(2)}$ are the indicators of mRS no larger than 4 at 30 and 90 days after enrollment, respectively. The cutoff 4 for $L^{(1)}$ and $L^{(2)}$ was chosen because there were relatively few participants with mRS 3 or less at 30 or 90 days after enrollment.
The treatment assignment indicator $A$ denotes whether a participant is assigned to the surgical treatment ($A=1$) or to standard of care ($A=0$). Baseline variables $W^{(1)}$-$W^{(4)}$ are age (dichotomized at 65 years), NIHSS (NIH Stroke Scale total score, quantifying stroke-caused impairment), ICH (intracerebral hemorrhage volume), and GCS (Glasgow Coma Scale), all measured before randomization. The baseline variables (except for age) are treated as continuous variables in the regression models. The dataset has 100 participants.


\begin{table}[htbp] 
\caption{Variables that we use from the MISTIE-II dataset. Ordinal-valued basline variables below are treated as continuous in our regression models. 
}
\label{tab:MISTIEvars}
\begin{center}
\begin{tabular}{lll}
\hline
Name 			& Description              				\\
\hline
$W^{(1)}$		& baseline age (years), dichotomized at  $\leq 65$    		\\
$W^{(2)}$		& baseline NIHSS, ordinal           \\
$W^{(3)}$		& baseline ICH, ordinal 			    \\
$W^{(4)}$		& baseline GCS, ordinal   		    \\
$A$    			& treatment indicator, 1 being MISTIE               	\\
$L^{(1)}$		& mRS at 30 days, dichotomized by thresholding at  $\leq 4$     \\
$L^{(2)}$    	& mRS at 90 days, dichotomized by thresholding at  $\leq 4$ 	\\
$Y$        		& mRS at 180 days, dichotomized by thresholding at  $\leq 3$   \\ \hline
\end{tabular}
\end{center}
\end{table}

We let $W^\text{full} = (W^{(1)}, W^{(2)}, W^{(3)}, W^{(4)})$ and $L^\text{full} = (L^{(1)}, L^{(2)})$. These variables are used in constructing data generating distributions for our simulations  in Section \ref{sec:simulation}. However, only the smaller subsets of variables $W = (W^{(1)}, W^{(4)})$ and $L = L^{(1)}$ are made available to the adjusted estimator (both in  Section~\ref{subsec:application-precision-gain} and~\ref{sec:simulation}). 
We included the ``extra'' variables $W^{(2)}, W^{(3)}, L^{(2)}$ (which the adjusted estimator does not have access to) in our data generating distributions for simulations 
so that the regression models used by the adjusted estimator will be misspecified; this was done since we expect at least some misspecification to occur in practice.

\subsection{Approximate Precision Gain by Using the Adjusted Estimator on MISTIE-II Data}
\label{subsec:application-precision-gain}


Applying the method in Section \ref{subsec:are-fitted} to the MISTIE-II data, the estimated R-squared quantities are $\widehat{R}^2_W = 0.36$, $\widehat{R}^2_{L \mid W} = 0.08$, and $\widehat\gamma = 0.02$. This indicates that the baseline variables are strongly prognostic, that (after accounting for the baseline variables) the short-term outcome $L$ is only mildly prognostic, and that there is little (if any) treatment effect heterogeneity. 

At the end of a trial, if every participant has $Y$ observed, we have $p_l = p_y = 1$. Substituting these and the estimated values $\widehat{R}^2_W, \widehat{R}^2_{L \mid W}, \widehat{\gamma}$ into (\ref{eq:re-ate}) gives an approximate $\are$ of 
$1.53$ comparing the TMLE versus the unadjusted estimator. If instead there had been no treatment effect heterogeneity ($\widehat \gamma = 0$), then the approximate $\are$ would be slightly larger  ($1.55$), assuming $\widehat R^2_W$ and $\widehat R^2_{L \mid W}$ remain fixed. Now consider a hypothetical  interim analysis where 80\% of the enrolled participants have $Y$ observed and 95\% of the enrolled participants have $L$ observed (e.g., at the 3rd interim analysis of the group sequential design with $\nmax = 480$ considered in Section \ref{subsec:simulationresults-2}; see Table \ref{tab:sample-size} in the Supplementary Material). Then we have $p_y = 0.8$, $p_l = 0.95$, and the approximate $\are$ is $1.56$. 

Equivalently, in each of the three aforementioned situations, the adjusted estimator approximately reduces the required sample size to achieve a desired power by $35\%$, $36\%$, and $36\%$, respectively, compared to the unadjusted estimator. (The asymptotic sample size reduction is computed as $1-1/\are$ as described in Section~\ref{sec:theory-singlearm} of the Supplementary Material.) These are all quite similar, which is primarily due to the relatively small values of $\widehat \gamma$ and $\widehat R^2_{L \mid W}$. 
The magnitudes of the AREs above are not unusual in the MISTIE-II population--see \citet{optimising2009should}--and result from some baseline variables (especially NIHSS) being strongly prognostic for the outcome.


Approximations of the $\are$ as above can be applied to trial planning, where one may use previous data to estimate the variance explained by baseline variables and short-term outcomes and the treatment effect heterogeneity, and use \eqref{eq:re-ate} to get a rough projection of the precision gain from adjusting for baseline variables and short-term outcomes. If treatment effect heterogeneity is difficult to estimate \textit{a priori}, one could use $\gamma = 0$ as the best case scenario (no treatment effect heterogeneity, corresponding to maximal precision gain), with the expectation that $\gamma > 0$ will result in attenuated precision gain. 

A conservative approach is to assume no precision gain when selecting the initial sample size per stage, but then use preplanned sample-size re-estimation (using the estimated variance) to potentially shrink the per-stage sample size  to reflect the estimated precision gains based on accrued data. 
Such adaptations based on estimating a nuisance parameter are generally acceptable to regulators such as the U.S. Food and Drug Adminsitration  \citep{fda:2016,FDA}. This approach has important  limitations given in Section~\ref{sec:discussion}.


\section{Simulations of Group Sequential Design} \label{sec:simulation}


\subsection{Data Generating Distributions for Simulated Trials Based on MISTIE-II}
\label{subsec:dgm}

We conducted simulation studies to assess how good of an approximation is provided by the  precision gain formula (\ref{eq:re-ate}), and to examine the sample size reduction by using the adjusted versus unadjusted estimator in a group sequential design. In order to mimic key features of the MISTIE-II trial, we use a resampling-based algorithm to generate participants for our simulated trials. This algorithm generates simulated trials with the following properties:
\begin{enumerate}
	\item[(i)] The treatment assignment is independent of baseline variables.
	\item[(ii)] The relative efficiency between the adjusted estimator and the unadjusted estimator is similar to that calculated directly from the MISTIE-II dataset.
\end{enumerate}


We define eight data generating distributions, called ``settings''. The goal is to consider four situations involving $W$ and/or $L$ being prognostic or not for $Y$; these are denoted by $\itprogwl$ (both prognostic),  $\itprogw$ (only $W$ prognostic), $\itprogl$ (only $L$ prognostic),  and  $\itprognon$ (neither prognostic).  For each of these four situations (called ``prognostic settings''), we construct distributions with the following two average treatment effects:  $\Delta=0$ (no effect) or $\Delta=0.122$ (benefit equal to the unadjusted estimate from the MISTIE-II dataset).
Details of the data generating algorithm for the eight settings are in Section \ref{appen:dgm} of the Supplementary Material. 
Table \ref{tab:rsq-mistie} gives the approximate $R$-squared and $\gamma$ values for the eight settings. The only source of missing data in our simulations is  administrative censoring, i.e., participants who enrolled but have not been in the trial long enough to have outcomes measured.

\begin{table}[htbp]
\caption{Approximate $R$-squared and $\gamma$ for each of the eight data generating distributions (settings) in our simulation study, computed by simulating $10^6$ participants under each distribution and computing $\widehat{R}^2_W, \widehat{R}^2_{L \mid W}, \widehat{\gamma}$, respectively.}
\label{tab:rsq-mistie}
\begin{center}
\begin{tabular}{llllllllllll}
\hline
                 & \multicolumn{2}{l}{$\progwl$} & & \multicolumn{2}{l}{$\progw$} & & \multicolumn{2}{l}{$\progl$} & & \multicolumn{2}{l}{$\prognon$} \\
                 & $\Delta=0$         & $\Delta=0.122$       &  & $\Delta=0$         & $\Delta=0.122$       & & $\Delta=0$         & $\Delta=0.122$        & & $\Delta=0$          & $\Delta=0.122$         \\
\hline
$R^2_{W}$        & 0.35          & 0.36        &  & 0.35          & 0.36        & & 0             & 0            & & 0              & 0             \\
$R^2_{L \mid W}$ & 0.08          & 0.07         & & 0             & 0          &  & 0.30          & 0.30         & & 0              & 0             \\
$\gamma$         & 0             & 0.01         & & 0             & 0.01        & & 0             & 0            & & 0              & 0     \\
\hline
\end{tabular}
\end{center}
\end{table}

%

\subsection{Group Sequential Trial Design Analysis Timing and Early Stopping Rule} \label{subsec:design}


We assume that $d_L = 30$ days and $d_Y = 180$ days. 
The goal is to control Type I error at level 0.025 and to have at least 80\% power to reject the null hypothesis $H_0: \Delta \leq 0$ when $\Delta = 0.122$.

We used the group sequential design framework in \citet{hampson2013group}, which involves interim analyses where a choice is made to stop or continue enrollment; if enrollment is stopped, then one waits until all pipeline participants complete the trial and then  
a hypothesis test (called a decision analysis) is conducted. 
Full details of the design are  in Supplementary Material \ref{appen:gst-multiple-testing}. 

Denote by $K$ the total number of stages. We set $K=5$ stages in our simulated trials. Participants were enrolled at the constant rate of 140 participants per year based on the projection for the enrollment rate in the MISTIE-III trial \citep{MistieIIITrial}.
Given the maximum sample size $\nmax$, the timing of interim analyses is chosen such that for $1\leq k \leq 4$, at the $k$-th interim analysis there are $(k/K)\nmax$ participants with $Y$ observed.

For either estimator the corresponding Wald statistics at different analysis times have (asymptotically) a multivariate normal distribution, where the information at a given analysis time is the reciprocal of the estimator variance. (This  relies on Assumptions 1-3 and holds regardless of the correctness of the outcome regression models used in the adjusted estimator.) The covariance matrix can be estimated, e.g., using the nonparametric bootstrap. To reduce the computational burden in our simulation studies, for any data generating distribution and $\nmax$ we 
 precompute the covariance matrix of our statistics at different analysis times, which is then used to determine stopping boundaries. 
 The efficacy/futility stopping boundaries at each stage are calculated using the error spending approach presented in Supplementary Material \ref{appen:gst-boundary}. We used error spending functions $f(t) = 0.025\min(t^2,1)$ for Type I error  and $g(t) = 0.2\min(t^2,1)$ for Type II error, where $t$ denotes the information time (observed information divided by maximum information).

For each combination of prognostic setting ($\progwl, \progw, \progl, \prognon$) and estimator (unadjusted or adjusted), we use binary search to find the minimum total sample size $\nmax$ required to achieve approximately 80\% power under $\Delta=0.122$ at $0.025$ Type I error using the group sequential design. In practice, this could be approximated from  accruing data by estimating the variance of the adjusted estimator and then using sample size re-estimation to set $\nmax$.
However, we assumed that this was known in order to reduce the computational burden.
For the unadjusted estimator, in every setting we let $\nmax$ be 480. 
For the adjusted estimator, $\nmax$ is set to be 300 under all settings with $\progwl$ and $\progw$, and 480 under all settings with $\progl$ and $\prognon$. We simulated 50,000 trials under each of the eight settings.
The number of accrued participants, error spending functions, and the stopping boundaries for each analysis are listed in Table~\ref{tab:error-boundary} in the Supplementary Material.

\subsection{Simulation Results: Sample Size Reduction from Adjustment} \label{subsec:simulationresults-2}

Table \ref{tab:poweress} lists the simulation-based $\nmax$, Type I error (obtained under $\Delta=0$), power (obtained under $\Delta=0.122$), and the expected sample size (ESS) for each of the eight settings for each estimator. For each setting, the empirical Type I error rate is controlled at 0.025 and the power is approximately 80\%. ESS is calculated as the number of enrolled participants when the trial stops, averaging over 50,000 simulations. The performance of the unadjusted estimator is the same under all prognostic settings and is summarized in the first row.

Compared to the unadjusted estimator, the adjusted estimator  substantially reduces the sample size when the baseline variables are prognostic. In particular, comparing the first row (unadjusted) versus the third row (adjusted $\progw$), the adjusted estimator reduces the maximum sample size from 480 to 300, the expected sample size under $\Delta=0$ from 318 to 227, and the expected sample size under $\Delta=0.122$ from 382 to 260. The sample size reduction due to the prognostic short-term outcome is very small. When neither $W$ nor $L$ is prognostic ($\prognon$), performance of the two estimators is similar.


\begin{table}[htbp]
\caption{The maximum sample size ($\nmax$) and empirical Type I error, power, and expected sample size (ESS) under $\Delta=0$ and $\Delta=0.122$ for each estimator under each prognostic setting. The $\nmax$ was chosen for  each prognostic setting in order to achieve approximately 80\% power under $\Delta=0.122$.}
\label{tab:poweress}
\begin{center}
\begin{tabular}{llrrrrr}
\hline
Estimator  	& Progn. set.	& $\nmax$	& Type I error	& Power	& ESS($\Delta=0$) 	& ESS($\Delta=0.122$) \\
\hline
unadjusted 	& -			    & 480     	& 0.0250			& 0.811  	& 318   		& 382 \\
            &           		&          	&   				&       	&          	&     \\
adjusted    & $\progwl$   	& 300     	& 0.0254 			& 0.791  	& 225       	& 259 \\
           	& $\progw$		& 300 		& 0.0256      	    & 0.805  	& 227       	& 260 \\
           	& $\progl$   	& 480 		& 0.0253			& 0.805  	& 309       	& 375 \\
           	& $\prognon$ 	& 480 		& 0.0248        	& 0.811  	& 321       	& 384 \\
\hline \\
\end{tabular}
\end{center}
\end{table}

\subsection{Simulation Results: Relative Efficiency} \label{subsec:simulationresults}
In this subsection, 
we  simulated trials with no early stopping, i.e.,  each simulated trial always enrolls $\nmax$ participants and continues follow-up until all participants have $Y$ observed. We then look back at each simulated trial and compute what each estimator's value would be at each of the interim and decision analysis times. At any analysis time, the number of participants with $Y$ observed, $L$ observed, and only $(A,W)$ observed, respectively, is fixed; therefore, we can make a direct comparison between the ratio of estimator variances in our simulations and the predicted $\are$ from our formula (which assumes constant $p_l,p_y$). 

In order to make direct comparisons between  estimators at the same sample size, we change (just for this subsection) the setting of $\nmax$. 
The maximum sample size $\nmax$ is set to be 300 under $\progwl$ and $\progw$, and is set to 480 under $\progl$ and $\prognon$. Under each setting the same $\nmax$ is used for both the unadjusted estimator and the adjusted estimator.

Table \ref{tab:re} lists the approximate, asymptotic relative efficiency (ARE) computed by  evaluating (\ref{eq:re-ate}) but with conditional expectations replaced by empirical estimates as discussed in Section \ref{subsec:are-fitted}, and the relative efficiency (RE) from simulation. 
Under most settings, the ARE predicted by the theory is similar to the RE computed from the simulation. There are some discrepancies between these at the earlier analysis times. We think that this is due to the relatively small sample sizes at these analysis times compared to the number of variables adjusted for, which can lead to model overfit and so reduced performance of the adjusted estimator. For example, at interim analysis 1 under $\progwl$ there are only 60 participants with the primary outcome observed but 4 variables get adjusted for in the regression model for $E(Y|W,A,L)$; this violates the rule of thumb that one should have at least 20 observations per term in the regression model. We discuss possible remedies for this in Section~\ref{sec:discussion}.

A comparison between $\progw$ (when only $W$ is prognostic) and $\progl$ (when only $L$ is prognostic) shows that there is a much larger precision gain in the former case. This is despite the fact that $W$ and $L$ have roughly similar prognostic values for $Y$ marginally, as shown in Table~\ref{tab:rsq-mistie}. The reason, as discussed in Sections~\ref{sec:impactbaseline}-\ref{impactcomparison}, is that there is relatively little treatment effect heterogeneity, which means that the baseline variables contribute to precision gains in two ways: allowing adjustment for chance imbalances 
(i) between study arms among the participants with $Y$ observed, and (ii) between the full sample and the subset with $Y$ observed. The short-term outcome can only contribute in the second way, and so may lead to smaller gains.
Under $\progl$, comparing interim analyses to decision analyses shows that $L$ improves estimation precision only at interim analyses (i.e., when there are pipeline participants), which is in line with our theoretical results.

\begin{table}[htbp]
\caption{Comparison of the asymptotic relative efficiency (ARE) predicted by Theorem \ref{cor:re-ate} to the relative efficiency (RE)  from simulated trials, at each analysis time. Top half is under $\Delta=0$; bottom half is under $\Delta=0.122$.
The maximum sample size $\nmax$ is set to be 300 under $\progwl$ and $\progw$, and is set to 480 under $\progl$ and $\prognon$. Under each setting the same $\nmax$ is used for both the unadjusted estimator and the adjusted estimator.
The simulated RE is computed from 50,000 simulated trials. } 
\label{tab:re}
\begin{center}
\resizebox{\linewidth}{!}{
\begin{tabular}{lrrrrrrrrrr}
\hline
 & & \multicolumn{4}{c}{ARE from theory}  & & \multicolumn{4}{c}{RE from Simulation} \\
  \cline{3-6} \cline{8-11}
& & $\progwl$ & $\progw$ & $\progl$ & $\prognon$ & & $\progwl$ & $\progw$ & $\progl$ & $\prognon$ \\ 
  \\
  & & \multicolumn{9}{c}{\textit{under $\Delta=0$ }} \\
 \multirow{4}{*}{\begin{tabular}{l} Interim \\ Analysis \end{tabular}} 
   & 1 & 1.63 & 1.54 & 1.13 & 1.00 && 1.49 & 1.41 & 1.08 & 0.96 \\ 
   & 2 & 1.59 & 1.54 & 1.07 & 1.00 && 1.54 & 1.49 & 1.06 & 0.98 \\ 
   & 3 & 1.58 & 1.54 & 1.05 & 1.00 && 1.54 & 1.51 & 1.04 & 0.99 \\ 
   & 4 & 1.57 & 1.54 & 1.04 & 1.00 && 1.54 & 1.51 & 1.04 & 0.99 \\
  \\
 \multirow{5}{*}{\begin{tabular}{l} Decision \\ Analysis \end{tabular}} 
   & 1 & 1.53 & 1.53 & 1.00 & 1.00 && 1.49 & 1.51 & 0.99 & 0.99 \\ 
   & 2 & 1.53 & 1.53 & 1.00 & 1.00 && 1.50 & 1.53 & 0.99 & 0.99 \\ 
   & 3 & 1.53 & 1.53 & 1.00 & 1.00 && 1.51 & 1.52 & 0.99 & 0.99 \\ 
   & 4 & 1.53 & 1.53 & 1.00 & 1.00 && 1.51 & 1.54 & 1.00 & 1.00 \\ 
   & 5 & 1.53 & 1.53 & 1.00 & 1.00 && 1.51 & 1.54 & 1.00 & 1.00 \\ 
  \\
   & & \multicolumn{9}{c}{\textit{under $\Delta=0.122$ }} \\
 \multirow{4}{*}{\begin{tabular}{l} Interim \\ Analysis \end{tabular}} 
   & 1 & 1.64 & 1.56 & 1.12 & 1.00 && 1.49 & 1.43 & 1.08 & 0.96 \\ 
   & 2 & 1.61 & 1.55 & 1.07 & 1.00 && 1.56 & 1.51 & 1.06 & 0.98 \\ 
   & 3 & 1.59 & 1.55 & 1.05 & 1.00 && 1.57 & 1.53 & 1.04 & 0.99 \\ 
   & 4 & 1.58 & 1.55 & 1.04 & 1.00 && 1.57 & 1.53 & 1.03 & 0.99 \\ 
  \\
 \multirow{5}{*}{\begin{tabular}{l} Decision \\ Analysis \end{tabular}} 
   & 1 & 1.55 & 1.55 & 1.00 & 1.00 && 1.51 & 1.52 & 0.99 & 0.99 \\ 
   & 2 & 1.55 & 1.55 & 1.00 & 1.00 && 1.53 & 1.53 & 0.99 & 0.99 \\ 
   & 3 & 1.55 & 1.55 & 1.00 & 1.00 && 1.53 & 1.53 & 0.99 & 0.99 \\ 
   & 4 & 1.55 & 1.55 & 1.00 & 1.00 && 1.54 & 1.54 & 1.00 & 1.00 \\ 
   & 5 & 1.55 & 1.55 & 1.00 & 1.00 && 1.54 & 1.54 & 1.00 & 1.00 \\ 
   \hline
\end{tabular}
}
\end{center}
\end{table}

\section{Discussion} \label{sec:discussion}

We considered independent
censoring, and  the only source of missing data in our simulations was administrative censoring. 
If outcomes are missing at random \citep{rubin1976MCAR}, then the adjusted estimator may still be  consistent under correct specification of certain working models, while the unadjusted estimator may be biased. 
If dropout is missing not at random \citep{rubin1976MCAR}, then both  estimators may be inconsistent and our formulas will not work. 




The theoretical ARE can be larger than the simulation-based ARE, especially at the early interim analyses. One reason is that in practice there is a finite sample penalty for adjustment (for each degree of freedom in the models fit) that is not reflected in the theoretical calculations, and that dissipates as sample size grows. 
This issue may be mitigated by only scheduling  interim analyses for times when at least 50\% of the participants have primary outcomes observed, and by constraining the number of variables adjusted for such that one has at least 20 participants with primary outcomes observed per variable.

The sample size reductions due to improved precision from adjustment came primarily from selecting smaller $\nmax$. Since the prognostic value of variables is typically not precisely known before the trial starts, one may use a preplanned rule
 for sample size re-estimation to set $\nmax$ based on accruing data. First, the original, maximum sample size is set conservatively assuming no gains from adjustment. During the trial  
the asymptotic variance of the adjusted estimator  is estimated and $\nmax$ is set to be the  sample size for which the desired power is achieved  when $\Delta$ equals the a priori specified, clinically meaningful, minimum treatment effect. The sample size re-estimation should only be conducted after a sufficient number of participants have had their primary outcomes measured, e.g., when roughly 50\% of the originally planned sample size have this measured. 

This approach is only feasible if the delay time $d_Y$ is not too long compared to the enrollment rate (otherwise the originally planned sample size will all be enrolled before a sufficient number of primary outcomes are observed). This was feasible in the MISTIE trial context with $d_Y=180$ days and enrollment rate 140 per year, since sample size re-estimation could be done at 2.1 years. At this time, 
 230 (48\%) of the originally planned sample size (480)  have primary outcomes observed. So restricting to approximately 300 participants (the required $\nmax$ when $W$ is as prognostic as in the MISTIE-II data) is possible.

A limitation of our simulation study is that we precomputed $\nmax$ for each setting, in order to save computation time. In practice, this would be calculated as a function of the estimated variance using accruing data in the trial. A future area of research is to run simulation studies to evaluate the resulting impact on power and sample size.


\textsf{R} code \citep{R} for our simulations can be downloaded at \url{https://github.com/tqian/gst_tmle}.

\section*{Acknowledgments}

This work was supported by the Patient-Centered Outcomes Research Institute (ME-1306-03198), the US Food and Drug Administration (HHSF223201400113C), and NIH grant UL1TR001079. 
This work is solely the responsibility of the authors and does not represent the views of the above agencies.
We thank Mary Joy Argo for helpful comments.

\bibliographystyle{biom}
\bibliography{MISTIE_TMLE_short-term}

\newpage

\appendix

\section*{Supplementary Material}

\numberwithin{equation}{section}
\numberwithin{theorem}{section}
\numberwithin{assumption}{section}
\numberwithin{remark}{section}
\numberwithin{table}{section}
\numberwithin{figure}{section}

In Section \ref{appen:are-sample-size-reduction}, we discuss the relationship between asymptotic relative efficiency and sample size reduction for a Wald test of a single stage trial.
In Section \ref{appen:gst}, we present the group sequential design used in the simulation studies in Section \ref{sec:simulation}. In Section \ref{appen:dgm}, we specify the data generating distribution used in the simulation studies in Section \ref{sec:simulation}. Intuition for the impact of treatment effect heterogeneity on precision gains is given in Section~\ref{appen:adjustmentintuition}. In Section \ref{appen:theory-predictive-prognostic}, we provide additional theoretical results regarding the precision gain from adjusting for prognostic baseline variables, when the baseline variable is either purely predictive or purely prognostic. In Section \ref{sec:theory-singlearm}, we provide theoretical results regarding the precision gain from adjusting for prognostic baseline variables and the short-term outcome, when the parameter is $E(Y \mid A=a)$ for each $a\in\{0,1\}$ (rather than the average treatment effect). 
Simulations to evaluate the theory in Section \ref{sec:theory-singlearm} are provided in Section \ref{appen:re-h1}. Section \ref{appen:sec:proof} includes proofs of the results in the main paper as well as results in Section \ref{appen:theory-predictive-prognostic} and Section \ref{sec:theory-singlearm}. Auxiliary lemmas that are used in the proofs in Section \ref{appen:sec:proof} are themselves proven in Section \ref{appen:proof-aux}.

\section{Relationship between asymptotic relative efficiency and sample size reduction for a Wald test of a single stage trial}
\label{appen:are-sample-size-reduction}

Consider one-sided Wald tests of the null hypothesis $H_0:\Delta \leq 0$ versus the alternative $H_1:\Delta > 0$, using RAL estimators $\widehat{\Delta}_1$ and $\widehat{\Delta}_2$, respectively (where the test statistics are the estimators divided by their standard errors). The asymptotic relative efficiency can be interpreted as the limit as sample size goes to infinity of the inverse of the ratio of the required sample sizes for the two estimators to achieve a given power at local alternatives \citep[Section 8.2]{vdv}. For example, $\are(\widehat{\Delta}_1, \widehat{\Delta}_2) = 1.2$ means that 
by using $\widehat\Delta_1$ instead of $\widehat\Delta_2$, the required sample size is reduced by $1 - 1 / 1.2 \approx 17\%$ asymptotically. 

\section{Full Description of Group Sequential Design Used in Simulation} \label{appen:gst}
Components of the group sequential design used in simulation are presented in the following order: the multiple testing procedure (Section \ref{appen:gst-multiple-testing}), the computation of the test boundaries (Section \ref{appen:gst-boundary}), the sample size at each analysis for the simulated trials (Section \ref{appen:sample-size-each-analysis}), and the value of Type I error and Type II error spent and the testing boundary at each analysis (Section \ref{appen:testing-boundary}).
\subsection{Multiple Testing Procedure} \label{appen:gst-multiple-testing}

In the simulation studies in Section \ref{sec:simulation} we use the  group sequential test from \citet{hampson2013group} with $K$ stages. Given $\alpha, \beta \in (0,1)$, the design goal is to control the Type I error rate at level $\alpha$ and have power $1-\beta$ at alternative $\Delta=\delta>0$. Such a group sequential test can terminate enrollment at an interim analysis, and if such early stopping happens follow-up continues until all pipeline participants have $Y$ observed before conducting a decision analysis to reject or accept $H_{0}$. For each stage $k$, denote by $S_{k}$ and $\tilde{S}_{k}$ the test statistics at the $k$th interim analysis (where the decision to stop or continue enrollment occurs) and the $k$th decision analysis (where the hypothesis test is conducted), respectively; let  $u_{k}$ and $l_{k}$ denote the efficacy and futility boundaries for terminating enrollment at interim analysis $k$, and let $c_{k}$ denote the critical value for the corresponding decision analysis. These are used in the group sequential testing procedure below, reproduced from \citet[Figure 1]{hampson2013group}:

\bigskip

\makebox[\textwidth]{
$\begin{cases}
\mbox{At interim analysis }k=1,\ldots,K-2,\\
\quad\mbox{if }S_{k}\leq l_{k}\mbox{ or }S_{k}\geq u_{k} & \mbox{stop recruitment and proceed to decision analysis }k;\\
\quad\mbox{otherwise} & \mbox{continue recruitment and proceed to interim analysis \ensuremath{k+1}.}\\
\mbox{At interim analysis }K-1,\\
\quad\mbox{if }S_{K-1}\leq l_{K-1}\mbox{ or }S_{K-1}\geq u_{K-1} & \mbox{stop recruitment and proceed to decision analysis }K-1;\\
\quad\mbox{otherwise} & \mbox{complete recruitment and proceed to decision analysis \ensuremath{K}.}\\
\mbox{At decision analysis }k=1,\ldots,K,\\
\quad\mbox{if }\tilde{S}_{k}\geq c_{k} & \mbox{reject }H_{0};\\
\quad\mbox{if }\tilde{S}_{k}<c_{k} & \mbox{accept }H_{0}.
\end{cases}$
}

\subsection{Computation of Test Boundaries} \label{appen:gst-boundary}
Following \citet{hampson2013group}, consider a Type I error spending function $f(\cdot)$ and Type II error spending function $g(\cdot)$, which are non-decreasing with $f\left(0\right)=g\left(0\right)=0$ and $f\left(t\right)=\alpha$ and $g\left(t\right)=\beta$ for $t\geq1$. The maximum information level ${\cal I}_{\max}$ is chosen depending on the power goal and the error spending functions. Denote by ${\cal I}_{k}$ and $\tilde{{\cal I}}_{k}$ the information levels at the $k$-th interim analysis and decision analysis, respectively. Denote by ${\cal C}_{k}=\left(l_{k},u_{k}\right)$ the critical region at interim analysis $k$, $1\leq k \leq K$. The test boundaries $u_k$, $l_k$, and $c_k$ are calculated by (12)-(15) in \citet[Section 4.1.1]{hampson2013group}, and we paraphrase as follows. Let $u_1$ and $l_1$ be the solutions to
\begin{equation}
P(S_1\geq u_1;\Delta=0)=f({\cal I}_1/{\cal I}_{\max})\qquad\mbox{and}\qquad P(S_1\leq l_1;\Delta=\delta)=g({\cal I}_1/{\cal I}_{\max}). \nonumber
\end{equation}
For $2\leq k\leq K-1$, $u_k$ is the solution to
\begin{equation}
P(S_1\in{\cal C}_1,\ldots,S_{k-1}\in{\cal C}_{k-1},S_k\geq u_k;\Delta=0)=f({\cal I}_k/{\cal I}_{\max})-f({\cal I}_{k-1}/{\cal I}_{\max}),  \label{eq:err_bdry_effic}
\end{equation}
and $l_k$ is the solution to
\begin{equation}
P(S_1\in{\cal C}_1,\ldots,S_{k-1}\in{\cal C}_{k-1},S_k\leq l_k;\Delta=\delta)=g({\cal I}_k/{\cal I}_{\max})-g({\cal I}_{k-1}/{\cal I}_{\max}). \label{eq:err_bdry_fut}
\end{equation}
For $1\leq k\leq K-1$, the critical value $c_k$ is the solution to
\begin{align}
P(S_1\in{\cal C}_1,\ldots,S_{k-1}\in{\cal C}_{k-1}, & S_k\geq u_k,\tilde{S}_k<c_k;\Delta=0) \nonumber \\
& = P(S_1\in{\cal C}_1,\ldots,S_{k-1}\in{\cal C}_{k-1},S_k\leq l_k,\tilde{S}_k\geq c_k;\Delta=0).  \nonumber 
\end{align}
And the critical value $c_K$ for the last stage is the solution to
\begin{equation}
P(S_1\in{\cal C}_1,\ldots,S_{K-1}\in{\cal C}_{K-1},\tilde{S}_K\geq c_K;\Delta=0)=\alpha-f({\cal I}_{K-1}/{\cal I}_{\max}).  \nonumber 
\end{equation}

\subsection{Sample Size at Each Analysis}
\label{appen:sample-size-each-analysis}

Table \ref{tab:sample-size} lists the sample size and analysis timing of the group sequential designs with $\nmax=480$ and $\nmax=300$ used in Section \ref{subsec:simulationresults}. For $1\leq k \leq 4$, Analysis $k$ indicates interim analysis at stage $k$ and $k^*$ indicates the corresponding decision analysis if enrollment is early stopped at that stage. Analysis $5^*$ indicates the final decision analysis. There is not any  interim analysis for the final stage. Fully observed participants are those with $W,L,Y$ observed; partially observed participants are those with $W,L$ but not $Y$ observed. The three groups of participants listed in Table \ref{tab:sample-size} are inclusive of all enrollees and mutually exclusive.

\begin{table}[htbp]
\caption{Analysis time and sample size at each interim and decision analysis for group sequential designs with $\nmax=480$ and $\nmax=300$. For $1\leq k \leq 4$, Analysis $k$ indicates interim analysis and $k^*$ indicates the corresponding decision analysis if enrollment is early stopped. Analysis $5^*$ indicates the final decision analysis.}
\label{tab:sample-size}
\begin{center}
\begin{tabular}{llrrrrrrrrr}
\hline
Analysis                        & & 1     & $1^*$   & 2     & $2^*$   & 3     & $3^*$   & 4     & $4^*$   & $5^*$   \\
\hline
                          & & \multicolumn{9}{c}{Design with  $\nmax = 480$}  \\
Time (year)                           & & 1.2 & 1.7   & 1.9   & 2.4   & 2.6   & 3.0   & 3.2   & 3.7   & 3.9   \\
\# Fully observed ($W,L,Y$)         & & 96    & 165  & 192  & 261   & 288  & 357  & 384  & 453  & 480  \\
\# Partially observed ($W,L$) only      & & 57    & 0     & 57    & 0     & 57    & 0     & 57    & 0     & 0     \\
\# Pipeline with only $W$ observed & & 12     & 0     & 12    & 0     & 12    & 0     & 12    & 0     & 0     \\
\\
                          & &\multicolumn{9}{c}{Design with  $\nmax = 300$}             \\
Time (year)                           & & 0.9 & 1.4   & 1.4   & 1.8   & 1.8   & 2.3   & 2.2   & 2.6   & 2.6   \\
\# Fully observed ($W,L,Y$)         & & 60    & 129  & 120  & 189   & 180 & 249  & 240  & 300  & 300  \\
\# Partially observed ($W,L$) only      & & 57    & 0     & 57    & 0     & 57    & 0     & 57    & 0     & 0     \\
\# Pipeline with only $W$ observed & & 12     & 0     & 12    & 0     & 12    & 0     & 3     & 0     & 0     \\
\hline \\
\end{tabular}
\end{center}
\end{table}

\subsection{Type I Error and Type II Error Spent and Testing Boundary}
\label{appen:testing-boundary}

Table \ref{tab:error-boundary} lists the Type I error per stage $f({\cal I}_k/{\cal I}_{\max})-f({\cal I}_{k-1}/{\cal I}_{\max})$, Type II error per stage $g({\cal I}_k/{\cal I}_{\max})-g({\cal I}_{k-1}/{\cal I}_{\max})$, and the testing boundaries at each stage under different settings used in the simulation studies. Given any prognostic setting $\progwl, \progw, \progl, \prognon$ and estimator pair, the designs are the same for $\Delta=0$ and $\Delta=0.122$.

\begin{table}[htbp]
\caption{Type I error per stage $f({\cal I}_k/{\cal I}_{\max})-f({\cal I}_{k-1}/{\cal I}_{\max})$, Type II error per stage $g({\cal I}_k/{\cal I}_{\max})-g({\cal I}_{k-1}/{\cal I}_{\max})$, and boundaries for the designs in the different settings used in Section \ref{subsec:simulationresults-2}.}
\label{tab:error-boundary}
\begin{center}
\begin{small}
\begin{tabular}{lrrrrr}
\hline
Analysis $(k)$                          & 1     & 2     & 3     & 4     & 5      \\
\hline
                                    & \multicolumn{5}{c}{TMLE under $\progwl$, $\nmax = 300$} \\
Type I error per stage $\times 10^{-3}$      & 0.9    & 3.2   & 5.1   & 7.2   & 8.6   \\
Type II error per stage  $\times 10^{-3}$       & 7.5   & 25.4    & 41.1    & 57.3    & 68.6    \\
Efficacy boundary at interim analysis ($u_k$) & 3.11    & 2.71    & 2.47    & 2.27    & -     \\
Futility boundary at interim analysis ($l_k$)   & -1.20 & -0.08   & 0.70    & 1.36    & -     \\
Critical value at decision analysis ($c_k$)     & 1.30    & 1.54    & 1.74    & 1.91    & 2.07    \\
\\
                                    & \multicolumn{5}{c}{TMLE under $\progw$, $\nmax = 300$}  \\
Type I error per stage $\times 10^{-3}$     & 0.8   & 2.9   & 5.0   & 7.0   & 9.3   \\
Type II error per stage  $\times 10^{-3}$     & 6.5   & 23.4    & 40.1    & 55.8    & 74.1    \\
Efficacy boundary at interim analysis ($u_k$) & 3.16    & 2.73    & 2.48    & 2.28    & -     \\
Futility boundary at interim analysis ($l_k$)   & -1.26 & -0.14   & 0.66    & 1.33    & -     \\
Critical value at decision analysis ($c_k$)     & 1.32    & 1.55    & 1.75    & 1.91    & 2.06    \\
\\
                                    & \multicolumn{5}{c}{TMLE under $\progl$, $\nmax = 480$}  \\
Type I error per stage $\times 10^{-3}$     & 1.2   & 3.2   & 5.3   & 7.5   & 7.8   \\
Type II error per stage  $\times 10^{-3}$     & 9.3   & 25.8    & 42.6    & 60.2    & 62.1    \\
Efficacy boundary at interim analysis ($u_k$) & 3.05    & 2.68    & 2.44    & 2.24    & -     \\
Futility boundary at interim analysis ($l_k$)   & -0.99 & 0.04    & 0.80    & 1.47    & -     \\
Critical value at decision analysis ($c_k$)     & 1.24    & 1.50    & 1.72    & 1.91    & 2.06    \\
\\
                                    & \multicolumn{5}{c}{TMLE under $\prognon$, $\nmax = 480$}  \\
Type I error per stage $\times 10^{-3}$     & 0.9   & 2.9   & 5.1   & 6.9   & 9.3   \\
Type II error per stage  $\times 10^{-3}$     & 7.3   & 22.9    & 40.6    & 54.9    & 74.3    \\
Efficacy boundary at interim analysis ($u_k$) & 3.12    & 2.73    & 2.47    & 2.28    & -     \\
Futility boundary at interim analysis ($l_k$)   & -1.16 & -0.08   & 0.72    & 1.37    & -     \\
Critical value at decision analysis ($c_k$)     & 1.24    & 1.50    & 1.72    & 1.91    & 2.04    \\
\\
                                                          & \multicolumn{5}{c}{Unadjusted estimator, $\nmax = 480$}  \\
Type I error per stage $\times 10^{-3}$     & 1.0   & 3.0   & 4.9   & 6.9   & 9.2   \\
Type II error per stage  $\times 10^{-3}$     & 7.8   & 24.1    & 39.6    & 54.9    & 73.6    \\
Efficacy boundary at interim analysis ($u_k$) & 3.10    & 2.72    & 2.48    & 2.29    & -     \\
Futility boundary at interim analysis ($l_k$)   & -1.37 & -0.43 & 0.24    & 0.82    & -     \\
Critical value at decision analysis ($c_k$)     & 1.18    & 1.36    & 1.53    & 1.67    & 2.11    \\
\hline \\
\end{tabular}
\end{small}
\end{center}
\end{table}

\section{Resampling-based Algorithm to Simulate Trial Data} \label{appen:dgm}

For the data generating distribution to have the two properties in Section \ref{subsec:dgm}, 
we design the algorithm to generate a simulated trial of $n$ independent and identically distributed samples from the 100 participants in MISTIE-II dataset. 
Recall that the three properties are:
\begin{enumerate}
  \item[(i)] The treatment assignment is independent of baseline variables.
  \item[(ii)] The relative efficiency between the adjusted estimator and the unadjusted estimator in the simulated data is similar to that calculated directly from the MISTIE-II dataset.
\end{enumerate}
Briefly, the algorithm ensures property (i) by adding a ``twin'' with identical baseline variables and opposite treatment assignment to each participant in the MISTIE-II dataset. The variables $L^\text{full}$ and $Y$ for each ``twin'' are then generated using regression model fits. The original MISTIE-II data and the set of ``twins''  results in a 200 participant data set. This is done once, before any of our simulations are conducted. To generate each simulated data set, we resample participant vectors from this 200 participant data set with replacement and then make modifications to the replicate data set depending on the desired simulation setting. Details of the algorithm are given below.

\medskip

\textit{Step 1: Construct a set of 100 pairs of ``twins''.} We start with the 100-participant MISTIE-II dataset, and we augment the data with a hypothetical ``twin'' for each participant. A ``twin'' has identical baseline variables as the original participant, but opposite treatment assignment. We fit logistic regression models for $L^{(1)}$ on $(W^\text{full},A)$, for $L^{(2)}$ on $(W^\text{full},A,L^{(1)})$, and for $Y$ on $(W^\text{full},A,L^{(1)},L^{(2)})$, using the original 100 participants in MISTIE-II dataset. The preliminary $L_i$ and $Y_i$ of each newly added twin are then predicted based on these logistic regression models by rounding the predicted success probability to be $0$ or $1$. The indicator of whether a participant is an original participant in the MISTIE-II data set or a hypothetical twin is included as a variable in this augmented data set of 200 participants, which will be used in Step 2. Step 1 is only done once at the beginning of the simulation.

\textit{Step 2: Sample $n$ participants from the augmented data set with 200 participants and calibrate the treatment effect on the sampled participants.} 
We sample $n$ participants uniformly with replacement from the augmented dataset constructed in Step 1. Then, for each participant in the $n$ participants whose ``hypothetical twin indicator'' is true, with probability $0.03$ we reset its $Y_i$ to equal $A_i$. This resetting step increases the treatment effect of the augmented data to $0.122$, matching that of the original data.

\textit{Step 3: Calibrate relative efficiency between the adjusted estimator and the unadjusted estimator.} For $a\in\{ 0,1\}$, we empirically estimate the marginal distribution $p_{Y;a}=P(Y=1\mid A=a)$ using the corresponding sample proportions in the original data. Then, for each participant in the $n$ participants whose ``hypothetical twin indicator'' is true, with probability $0.164$ we reset $Y_i$ by a realization of an independent Bernoulli random draw with success probability $p_{Y;A_i}$. This resetting step adds random noise to reduce the prognostic value in $W$ and $L$, so that the relative efficiency between the adjusted estimator and the unadjusted estimator for the simulated data is comparable to the estimated $\are$ from the original MISTIE-II dataset.

\bigskip

We then make modifications that are described next to assess validity of the theory under various settings. We define eight data generating distributions, called ``settings'', each based on modifying the aforementioned data generating process. The goal is to consider four situations involving $W$ and/or $L$ being prognostic or not for $Y$; these are denoted by $\itprogwl$ (both prognostic),  $\itprogw$ (only $W$ prognostic), $\itprogl$ (only $L$ prognostic),  and  $\itprognon$ (neither prognostic).  For each of these four situations, we consider the following two average treatment effects:  $\Delta=0$ (no effect) or $\Delta=0.122$ (benefit). 

We first describe how we generated simulated trials with $\Delta=0.122$. 
The aforementioned data generating algorithm corresponds to $\itprogwl$. To create data generating distributions with $\itprogw$, we first generate each participant's data as in $\itprogwl$ and then replace $L$ by an independent draw from the marginal distribution in the MISTIE-II data. Similarly, in order to make only $L$ prognostic for $Y$ ($\itprogl$), we first generate each participant's data as in $\itprogwl$ and then replace $W$ by an independent draw from the marginal distribution in the MISTIE-II data. Lastly, to make neither $W$ nor $L$ prognostic, we first generate each participant's data as in $\itprogwl$ and then replace each of $W$ and $L$ by independent draws from the corresponding marginal distributions in the MISTIE-II data, respectively.

To generate participants under $\Delta=0$, the data generating process above is followed, and then each study arm assignment $A$ is replaced by an independent Bernoulli draw with probability $1/2$. Under the data generating process with $\Delta=0.122$, there is slight treatment effect heterogeneity $\gamma \approx 0.01$. By construction, under $\Delta=0$ there is no treatment effect heterogeneity ($\gamma = 0$).

\section{Intuition for Impact of Treatment Effect Heterogeneity on Precision Gains} \label{appen:adjustmentintuition}

To give intuition for the phenomena in Figure~\ref{fig:are-w-ate} in Section~\ref{sec:impactbaseline}, for the remainder of this appendix we fix $R^2_W$ and consider the simple case of a single, binary-valued $W$ representing being less than 65 years old at baseline, and primary outcome $Y$  being the indicator of having mRS at most 3 (a good outcome) at 180 days.
 We focus on the stratum $W=1$ and assume that 
the under 65 subset of the study population would have better outcomes on average than the overall study population if all were assigned to the control arm, i.e.,  $E(Y|A=0,W=1)-E(Y|A=0)>0$. 
We next consider $p_y=1$ and the extreme of maximum treatment effect heterogeneity (where there are no efficiency gains from adjusting for $W$). (The opposite extreme of $p_y=1$ and no treatment effect heterogeneity is discussed in detail in Section~\ref{sec:impactbaseline}.)

Consider $p_y=1$ and with maximum treatment effect heterogeneity ($\gamma = 2R^2_W$), which occurs when $E(Y|A=1,W)+E(Y|A=0,W)$ equals the constant $E(Y|A=1)+E(Y|A=0)$. It follows from the assumption above that those under 65 have better outcomes on average than the overall study population in the control arm ($E(Y|A=0,W=1)-E(Y|A=0)>0$), that those under 65 have worse outcomes on average than the overall study population under assignment to the treatment arm (and by precisely the same magnitude), which follows since $E(Y|A=1,W=1)-E(Y|A=1)= -\{E(Y|A=0,W=1)-E(Y|A=0)\}$. If by chance there are proportionally more under 65 participants assigned to the treatment arm than the control arm, then the unadjusted estimator of the mean outcome in each arm is biased downward, and by the same amount, conditional on the chance imbalance. Since this bias cancels out when estimating $\Delta$, an efficient estimator of $\Delta$ makes no adjustment. Alternatively, if by chance there are proportionally fewer under 65 participants assigned to the treatment arm, then by a symmetric argument an efficient estimator of $\Delta$ makes no adjustment. The overall impact is no adjustment due to chance imbalances in the stratum $W=1$, which means that no improvement in variance is made compared to the unadjusted estimator. This is, intuitively, why there is no efficiency gain under maximum  treatment effect heterogeneity ($\gamma=2R^2_W$) when there are no pipeline participants ($p_y=1$).

We next give intuition for why one needs $\gamma>0$ in order for pipeline participants to contribute useful information toward estimating $\Delta$ (when $p_y < 1$). Consider the case where those in stratum $W=1$ benefit more than the overall population, i.e.,  $E(Y|A=1,W=1)-E(Y|A=0,W=1)>\Delta$; this is only possible if $\gamma>0$.
 Assume that adjustment for chance imbalance in the stratum $W=1$ between arms has already been done to the unadjusted estimator (using only those with $Y$ observed), as described  above. 
Consider the proportion of participants having $W=1$ among all participants and also among the subset of participants with $Y$ observed. If the former proportion is larger than the latter, 
then the stratum $W=1$ (of participants who benefit more than average) is underrepresented among those with outcomes observed. 
 An efficient estimator adjusts the estimate of $\Delta$ upward to compensate. 
 Alternatively, if the former proportion is smaller than the latter, by a symmetric argument an efficient estimator adjusts the estimate of $\Delta$ downward. In this way, variance due to imbalance in the proportion with $W=1$ between the overall sample (including pipeline participants) and those with $Y$ observed is removed. 
 The only cases where adjusting for the aforementioned imbalance at each level of $W$ has no impact are when $E(Y|A=1,W)-E(Y|A=0,W) = \Delta$, i.e., zero treatment effect heterogeneity, or $p_y=1$ (no pipeline participants).

\section{Additional Results: When Baseline Variable is Purely Predictive or Purely Prognostic.}
\label{appen:theory-predictive-prognostic}

The following theorem examines the impact of treatment effect heterogeneity on precision gain from adjusting for prognostic baseline variables under two extreme cases. Define $\are(\eff, \unadj)$ to be the asymptotic relative efficiency in (\ref{eq:re-ate}). More generally, $\are(\widehat{\Delta}_1, \widehat{\Delta}_2)$ denotes the asymptotic relative efficiency of the estimators $\widehat{\Delta}_1, \widehat{\Delta}_2$.

\begin{corollary} 
\label{cor:prognostic-predictive}
Suppose Assumptions \ref{assump:randomization}, \ref{assump:independent-censoring}, and \ref{assump:monotone-censoring} hold. In addition, assume $R^2_{L \mid W} = 0$. For the estimand $E(Y\mid A=1) - E(Y\mid A=0)$, we have the following result regarding the asymptotic relative efficiency between an efficient RAL estimator and the unadjusted estimator.
\begin{enumerate}
\item[(i)] (Maximal treatment effect heterogeneity.) 

If $\var\{E_1(Y \mid W) - E_0(Y \mid W)\} > 0$ and $\var\{E(Y \mid W)\} = 0$, then $\gamma = 2 R_W^2$ and 
\begin{equation}
\are(\eff, \unadj) = \frac{1}{1 - (1-p_y) R^2_W}. \label{eq:rsq-onlyW-ate-predictive}
\end{equation}

\item[(ii)] (No treatment effect heterogeneity.) 

If $\var\{E_1(Y \mid W) - E_0(Y \mid W)\} = 0$ and $\var\{E(Y \mid W)\} > 0$, then $\gamma = 0$ and 
\begin{equation}
\are(\eff, \unadj) = \frac{1}{1 - R^2_W}. \label{eq:rsq-onlyW-ate-prognostic}
\end{equation}

\end{enumerate}
\end{corollary}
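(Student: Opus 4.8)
The plan is to reduce both parts of Corollary~\ref{cor:prognostic-predictive} to the master formula of Theorem~\ref{cor:re-ate}. Since we assume $R^2_{L\mid W}=0$, that formula collapses to $\are(\eff,\unadj) = 1/\{1 + (p_y/2)\gamma - R^2_W\}$, so the entire task is to pin down the relationship between $\gamma$ and $R^2_W$ under each of the two hypotheses on $\var\{E_1(Y\mid W)-E_0(Y\mid W)\}$ and $\var\{E(Y\mid W)\}$, and then substitute.

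First I would exploit randomization. Assumption~\ref{assump:randomization} gives $A\ci W$ with $P(A=1)=1/2$, so $E(Y\mid W) = \frac12 E_1(Y\mid W) + \frac12 E_0(Y\mid W)$ and hence $\var\{E(Y\mid W)\} = \frac14\var\{E_1(Y\mid W)+E_0(Y\mid W)\}$. The next step is the polarization identity $\var\{E_1(Y\mid W)\} + \var\{E_0(Y\mid W)\} = \frac12\big[\var\{E_1(Y\mid W)+E_0(Y\mid W)\} + \var\{E_1(Y\mid W)-E_0(Y\mid W)\}\big]$. Combining these two facts and dividing by the common denominator $\sum_a\var_a(Y)$ appearing in both $\gamma$ and $R^2_W$ yields the single master relation
\begin{equation*}
R^2_W = \frac{2\var\{E(Y\mid W)\}}{\sum_{a\in\{0,1\}}\var_a(Y)} + \tfrac12\gamma.
\end{equation*}
Everything else is reading off special cases.

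For part~(i), the hypothesis $\var\{E(Y\mid W)\}=0$ kills the first term, leaving $R^2_W = \gamma/2$, i.e.\ $\gamma = 2R^2_W$ (the condition $\var\{E_1(Y\mid W)-E_0(Y\mid W)\}>0$ just guarantees $R^2_W>0$). Substituting $\gamma=2R^2_W$ into the collapsed formula gives $1/\{1 + p_y R^2_W - R^2_W\} = 1/\{1-(1-p_y)R^2_W\}$, which is \eqref{eq:rsq-onlyW-ate-predictive}. For part~(ii), the hypothesis $\var\{E_1(Y\mid W)-E_0(Y\mid W)\}=0$ forces $\gamma=0$ directly from its definition \eqref{def:gamma}; substituting $\gamma=0$ into the collapsed formula gives $1/\{1-R^2_W\}$, which is \eqref{eq:rsq-onlyW-ate-prognostic}.

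There is no genuine obstacle here: the corollary is a direct specialization of Theorem~\ref{cor:re-ate}, and the only content is the algebraic bookkeeping. The one place requiring slight care is the master relation above, where one must use randomization to pass from the marginal quantity $\var\{E(Y\mid W)\}$ to the arm-specific quantity $\var\{E_1(Y\mid W)+E_0(Y\mid W)\}$; forgetting the factor of $1/4$ (equivalently, the $P(A=1)=1/2$ weighting) would throw off the decomposition of $R^2_W$ into its $\gamma$ and $\var\{E(Y\mid W)\}$ pieces.
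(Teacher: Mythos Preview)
Your proposal is correct and follows essentially the same approach as the paper: both reduce to Theorem~\ref{cor:re-ate} with $R^2_{L\mid W}=0$, use randomization to write $E(Y\mid W)=\tfrac12 E_1(Y\mid W)+\tfrac12 E_0(Y\mid W)$, and then establish $\gamma=2R^2_W$ (part~(i)) or $\gamma=0$ (part~(ii)) before substituting. The only cosmetic difference is that you package the algebra via the polarization identity into a single master relation $R^2_W = 2\var\{E(Y\mid W)\}/\sum_a\var_a(Y) + \gamma/2$, whereas the paper expands $\var\{E(Y\mid W)\}=0$ directly into the covariance identity $\var\{E_1(Y\mid W)\}+\var\{E_0(Y\mid W)\}=-2\cov\{E_1(Y\mid W),E_0(Y\mid W)\}$ and reads off $\gamma=2R^2_W$ from there; the two computations are equivalent.
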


\bigskip
We call a baseline variable $W$ \textit{purely predictive} if it satisfies the conditions in Corollary \ref{cor:prognostic-predictive}(i), as it predicts the conditional average treatment effect (i.e., it identifies which strata of $W$ benefit from the treatment on average), but it does not explain the variance in $Y$ marginally (i.e., margining out $A$). For such a baseline variable, $\are = 1$ in (\ref{eq:rsq-onlyW-ate-predictive}) when $p_y=1$. This means that adjusting for a purely predictive baseline variable results in no precision gain when there are no missing primary outcomes. An extreme case of this, for illustration only, is the following data generating distribution where the baseline variable is perfectly correlated with the outcome within each arm but still contributes nothing to the precision gain: $W,A$ are independently distributed as $\text{Bernoulli}(0.5)$ and $Y = \indic(A=W)$, where $\indic(\cdot)$ is the indicator function. In this case it is straightforward to verify that $R^2_W = 1$, but adjusting for $W$ would not impact asymptotic variance compared to the unadjusted estimator if all participants have $Y$ observed $(p_y=1)$ since $\gamma=2$. Though we don't expect to encounter a purely predictive variable in practice, the above thought experiment shows how $W$ could explain treatment effect heterogeneity (which is useful on its own) while being useless for improving precision in estimating the average treatment effect.

We call a baseline variable $W$ \textit{purely prognostic} if it satisfies the conditions in Corollary \ref{cor:prognostic-predictive}(ii), as the treatment effect is constant across strata of $W$, and $W$ explains some of the marginal variance in $Y$. For such a baseline variable, the $\are$ in (\ref{eq:rsq-onlyW-ate-prognostic}) does not depend on $p_y$. An intuition for  why this holds, as discussed in Section~\ref{sec:impactbaseline}, is that adjusting for such a baseline variable reduces variance by correcting for chance imbalances between arms among those who have $Y$ observed, while the baseline variables for the pipeline participants contain no information about $\Delta$ since $\gamma=0$ (and we are ignoring $L$ here).

\section{Precision Gain When Estimating the Treatment Specific Mean $E(Y|A=a)$} \label{sec:theory-singlearm}


In this section we focus on estimating the population mean outcome under assignment to a single arm (called the treatment specific mean): $E(Y\mid A=a)$ for each  $a\in\{0,1\}$ separately. The following theorem gives the lower bound on the asymptotic variance for all regular asymptotically linear (RAL) estimators of $E(Y\mid A=a)$.

\begin{theorem} 
\label{thm:avar}

Denote by $p_a = P(A=a)$ for $a\in\{0, 1\}$. Assume $p_y > 0$ and $0<p_a<1$. Under Assumptions \ref{assump:randomization}, \ref{assump:independent-censoring}, and \ref{assump:monotone-censoring} (except allowing $p_a \neq 1/2$) the semiparametric lower bound on the asymptotic variance of all RAL estimators for $E(Y\mid A=a)$ is
\begin{equation}
\var_a\{E_a(Y \mid W)\} + \frac{1}{p_a p_l} \var_a\{ E_a(Y \mid L,W) - E_a(Y \mid W) \} + \frac{1}{p_a p_y} \var_a\{ Y - E_a(Y \mid L,W) \}. \label{eq:thm:var-bound}
\end{equation}
\end{theorem}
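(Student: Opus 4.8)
The plan is to obtain the bound as the variance of the efficient influence function of $\psi_a := E(Y\mid A=a)$ in the semiparametric model defined by Assumptions \ref{assump:randomization}--\ref{assump:monotone-censoring} (with $p_a$ allowed to differ from $1/2$), exactly as Lemma \ref{thm:avar-ate} is obtained for $\Delta$. By the convolution theorem, the asymptotic variance of every RAL estimator is at least the variance of the canonical gradient, so it suffices to identify that gradient and compute its variance.

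First I would set up the observed-data likelihood factorization into the marginal law of $W$, the conditional law of $A$ given $W$, the conditional law of $L$ given $(W,A)$, the conditional law of $Y$ given $(L,W,A)$, and the coarsening mechanism for $(C^L,C^Y)$. Two features of the model drive the form of the gradient: randomization (Assumption \ref{assump:randomization}) makes the treatment mechanism $P(A=a\mid W)=p_a$ \emph{known}, so it contributes no score to the tangent space and the true $p_a$ appears in the weights; and independent, monotone censoring (Assumptions \ref{assump:independent-censoring}--\ref{assump:monotone-censoring}) is a coarsening-at-random mechanism whose propensities $P(C^L=1)=p_l$ and $P(C^Y=1)=p_y$ are constant in the data. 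Invoking the coarsening efficiency theory of \citet{robins1992semipara,Scharfstein1999,van2003semipara}, I would then write the efficient influence function as the augmented inverse-probability-weighted form built from nested innovations,
\begin{equation*}
\phi_a = \big\{E_a(Y\mid W) - \psi_a\big\} + \frac{\indic(A=a)\,C^L}{p_a p_l}\big\{E_a(Y\mid L,W) - E_a(Y\mid W)\big\} + \frac{\indic(A=a)\,C^Y}{p_a p_y}\big\{Y - E_a(Y\mid L,W)\big\},
\end{equation*}
where the always-observed $W$-level term carries no weight because $W$ is never censored and its marginal law is shared across arms by randomization.

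The next step is to verify that $\phi_a$ is genuinely the canonical gradient: that it is a valid (pathwise-derivative) gradient of $\psi_a$ and that it lies in the model tangent space, so that it is the \emph{efficient} influence function rather than merely an influence function. The variance computation is then routine: the three terms are mean zero and mutually orthogonal, because each successive innovation has conditional mean zero given the history on which the preceding terms depend, and the indicator/censoring weights are handled using $A \ci W$ together with the independence of $(C^L,C^Y)$ from $(W,A,L,Y)$. Taking $\var(\phi_a)$ termwise yields $\var_a\{E_a(Y\mid W)\}$ from the first term, $\tfrac{1}{p_a p_l}\var_a\{E_a(Y\mid L,W)-E_a(Y\mid W)\}$ from the second, and $\tfrac{1}{p_a p_y}\var_a\{Y-E_a(Y\mid L,W)\}$ from the third, which is exactly \eqref{eq:thm:var-bound}. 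As a consistency check, $\phi_1 - \phi_0$ evaluated at $p_a = 1/2$ reproduces the efficient influence function behind Lemma \ref{thm:avar-ate}, since the arm-specific innovations are orthogonal (the indicators $\indic(A=1)$ and $\indic(A=0)$ are mutually exclusive).

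The main obstacle I anticipate is not the variance algebra but the efficiency claim itself: rigorously characterizing the tangent space and showing that $\phi_a$ belongs to it under the coarsening model. In particular I would need to argue that the scores for the (unknown) censoring propensities are orthogonal to $\phi_a$, so that treating $p_l,p_y$ as unknown does not lower the bound, and to confirm that monotonicity (Assumption \ref{assump:monotone-censoring}) is precisely what permits the two censoring events to be summarized by the single nested weighting above. These steps are standard in the coarsening-at-random literature but require the most care; once they are in place, the stated lower bound follows immediately from the convolution theorem.
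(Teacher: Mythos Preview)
Your proposal is correct and lands on the same efficient influence function and the same orthogonality/variance computation as the paper. The one noteworthy difference is how the EIF is obtained. You invoke coarsening-at-random theory directly and write down $\phi_a$ as the AIPW form with nested innovations, then flag the verification that $\phi_a$ is the \emph{canonical} gradient (tangent-space membership, orthogonality to censoring scores) as the main burden. The paper instead sidesteps that burden by a device: it merges $A$ and $C^L$ into a single binary ``treatment'' $\tilde A=\indic(A=a,C^L=1)$, recognizes $E(Y\mid A=a)$ as a longitudinal g-computation parameter for the regime $(\tilde A,C^Y)=(1,1)$, and then quotes the known EIF for that parameter from \citet{rosenblum2011EIF} and \citet{van2010EIF}. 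After simplifying under Assumptions~\ref{assump:randomization}--\ref{assump:monotone-censoring}, their $D_0+D_1+D_2$ coincides exactly with your $\phi_a$. What the paper's route buys is that the efficiency claim is discharged by citation rather than by a tangent-space argument; what your route buys is a more transparent derivation that does not require the $\tilde A$ trick or the binary-$L$ restriction the paper invokes to apply those references. Either way, the remaining steps (pairwise orthogonality of the three pieces and the termwise variance calculation) are identical.
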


Analogous to the $R$-squared quantities defined in Section \ref{sec:theory-ate}, we define new $R$-squared quantities relevant to estimating the population mean of the primary outcome $Y$ in each arm separately.
For each $a \in \{0,1\}$, define the proportion of variance in $Y$ explained by $W$ in arm $a$ as 
$
R^2_{W;a} = \var\{E_a(Y \mid W)\} / \var_a(Y)
$;
the proportion of additional variance in $Y$ explained by $L$ after accounting for $W$ in arm $a$ as
$
R^2_{L \mid W;a} = \var_a\{E_a(Y \mid L, W) - E_a(Y \mid W)\} / \var_a(Y)
$.
We also will refer to the proportion of variance in $Y$ explained by $L$ alone in arm $a$, defined as  $R^2_{L;a} = \var_a\{E_a(Y \mid L)\} / \var_a(Y)$.



The following result gives the $\are$ between an efficient RAL estimator and the unadjusted estimator for $E(Y\mid A=a)$.

\begin{theorem} 
\label{cor:re}

Assume all conditions in Theorem \ref{thm:avar} hold. For each arm $a\in\{0,1\}$,  the asymptotic relative efficiency between an efficient RAL estimator and the unadjusted estimator of   
the treatment specific mean $E(Y\mid A=a)$  is 
\begin{equation}
\frac{1}{1 - (1 - p_a p_y) R^2_{W;a} - (1 - p_y / p_l) R^2_{L\mid W;a}}. \label{eq:re}
\end{equation}

\end{theorem}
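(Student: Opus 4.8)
The plan is to obtain (\ref{eq:re}) as an algebraic consequence of the efficiency bound in Theorem \ref{thm:avar} together with the asymptotic variance of the unadjusted estimator and the arm-specific variance decomposition (\ref{var_decomp_simple}). First I would pin down the unadjusted estimator's asymptotic variance. The unadjusted estimator of $E(Y \mid A=a)$ is the sample mean of $Y$ over the arm-$a$ participants with the primary outcome observed, i.e., $\sum_i \indic(A_i=a) C^Y_i Y_i / \sum_i \indic(A_i=a) C^Y_i$. Writing this as a ratio of sample averages and linearizing, its influence function is $\indic(A=a) C^Y (Y - E_a(Y))/(p_a p_y)$, where I use Assumption \ref{assump:independent-censoring} (so that $E[C^Y]=p_y$ factors out) together with $P(A=a)=p_a$. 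A short second-moment computation, again invoking independence of $C^Y$ from $(W,A,L,Y)$, then gives $\avar_{\unadj} = \var_a(Y)/(p_a p_y)$.

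Second, by definition an efficient RAL estimator attains the bound (\ref{eq:thm:var-bound}) of Theorem \ref{thm:avar}, so its asymptotic variance is $\var_a\{E_a(Y\mid W)\} + (p_a p_l)^{-1}\var_a\{E_a(Y\mid L,W) - E_a(Y\mid W)\} + (p_a p_y)^{-1}\var_a\{Y - E_a(Y\mid L,W)\}$. The asymptotic relative efficiency is then the ratio $\avar_{\unadj}/\avar_{\eff}$.

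Third, I would abbreviate the three nonnegative components as $A_1 = \var_a\{E_a(Y\mid W)\}$, $A_2 = \var_a\{E_a(Y\mid L,W) - E_a(Y\mid W)\}$, and $A_3 = \var_a\{Y - E_a(Y\mid L,W)\}$, so that decomposition (\ref{var_decomp_simple}) reads $\var_a(Y) = A_1 + A_2 + A_3$; here I would note that $\var\{E_a(Y\mid W)\}$ equals $\var_a\{E_a(Y\mid W)\}$ because Assumption \ref{assump:randomization} makes the marginal and arm-$a$ distributions of $W$ coincide. Multiplying numerator and denominator of the ARE by $p_a p_y$ turns it into $(A_1+A_2+A_3)/(p_a p_y A_1 + (p_y/p_l) A_2 + A_3)$; dividing through by $\var_a(Y)$ and substituting $A_1/\var_a(Y) = R^2_{W;a}$, $A_2/\var_a(Y) = R^2_{L\mid W;a}$, and $A_3/\var_a(Y) = 1 - R^2_{W;a} - R^2_{L\mid W;a}$ yields a denominator equal to $1 - (1-p_a p_y)R^2_{W;a} - (1 - p_y/p_l)R^2_{L\mid W;a}$, which is exactly (\ref{eq:re}).

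There is essentially no deep obstacle here: given Theorem \ref{thm:avar}, the statement is a bookkeeping computation. The one place that warrants care is the linearization establishing $\avar_{\unadj} = \var_a(Y)/(p_a p_y)$ --- in particular correctly handling the random denominator of the ratio estimator and using Assumptions \ref{assump:randomization}--\ref{assump:independent-censoring} to factor out the censoring probability --- and the small point that the marginal variance $\var\{E_a(Y\mid W)\}$ appearing in (\ref{var_decomp_simple}) may be replaced by the arm-$a$ variance $\var_a\{E_a(Y\mid W)\}$ used in the definition of $R^2_{W;a}$, which holds by randomization.
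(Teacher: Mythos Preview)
Your proposal is correct and follows essentially the same approach as the paper: compute $\avar(\unadj)=\var_a(Y)/(p_a p_y)$ for the ratio estimator (the paper does this via an explicit Slutsky/CLT argument rather than naming the influence function, but the content is identical), take the efficiency bound (\ref{eq:thm:var-bound}), and combine them using the variance decomposition (\ref{var_decomp_simple}) to obtain (\ref{eq:re}). Your remark that $\var\{E_a(Y\mid W)\}=\var_a\{E_a(Y\mid W)\}$ by randomization is exactly the small identification the paper also uses.
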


\begin{figure}[htbp]
\caption{Asymptotic relative efficiency between an efficient RAL estimator and the unadjusted estimator for estimating $E(Y\mid A=a)$. Plot (a) is when only the baseline variable $W$ is prognostic; plot (b) is when only the short-term outcome $L$ is prognostic.}
\label{fig:are}
\begin{center}
\begin{subfigure}{.5\textwidth}
  \centering
  \includegraphics[width=\linewidth]{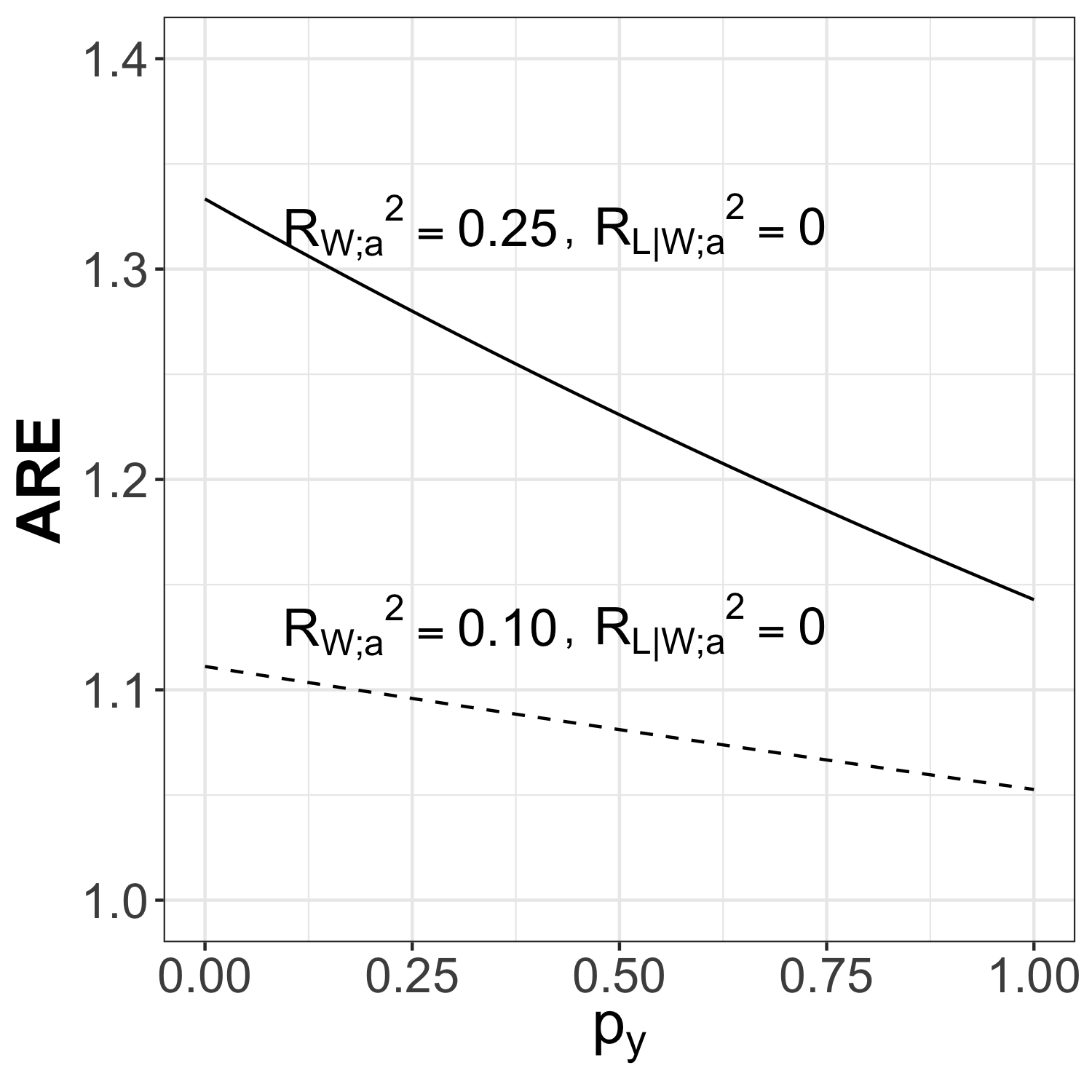}
  \caption{}
  \label{fig:are-w}
\end{subfigure}%
\begin{subfigure}{.5\textwidth}
  \centering
  \includegraphics[width=\linewidth]{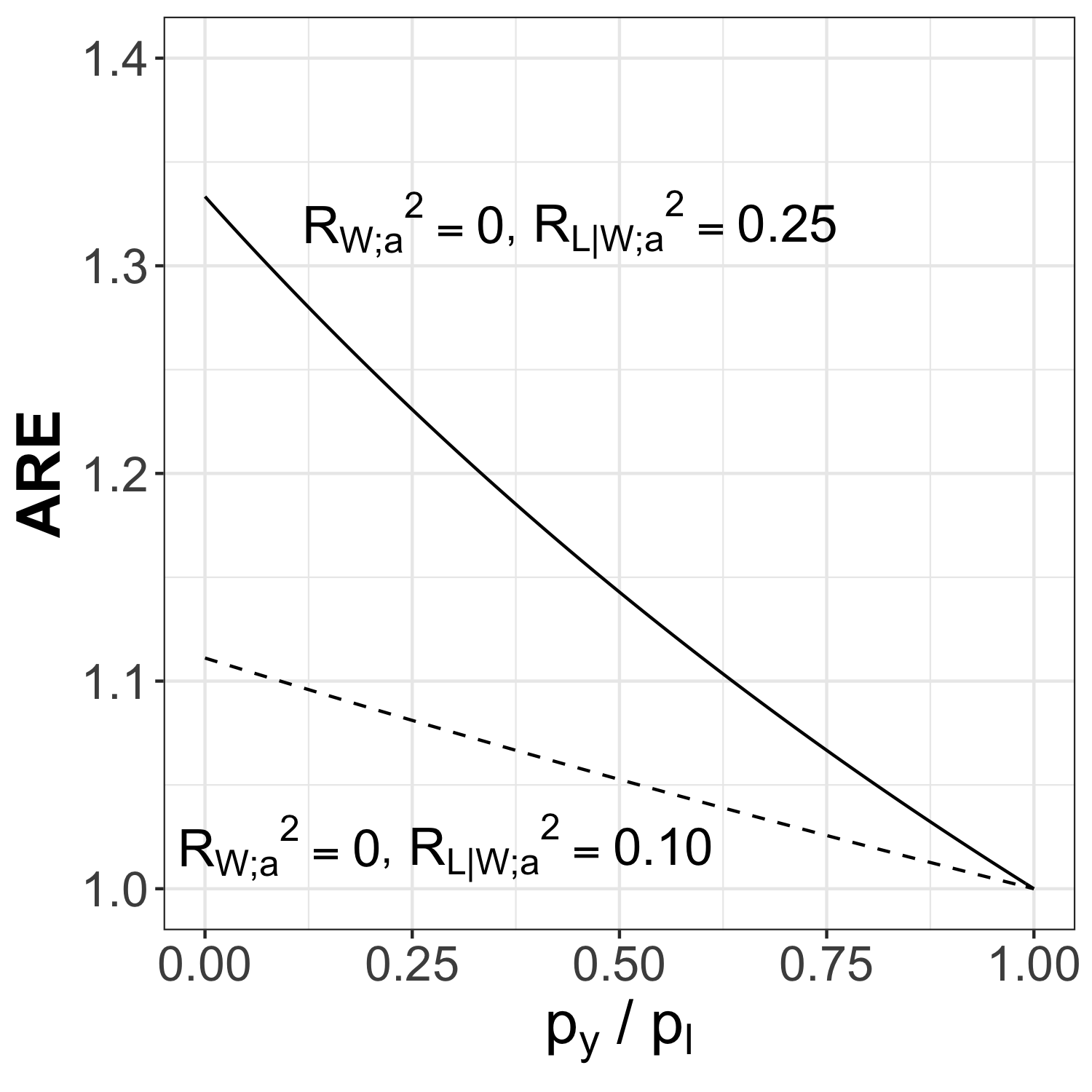}
  \caption{}
  \label{fig:are-l}
\end{subfigure}
\end{center}
\end{figure}

When only $W$ is prognostic ($R^2_{L \mid W;a} = 0$), the asymptotic relative efficiency between an efficient RAL estimator and the unadjusted estimator is equal to when the $L$ variable is not available to the efficient RAL estimator (i.e., when the efficient RAL estimator adjusts for $W$ alone). Similarly, when only $L$ is prognostic ($R^2_{W;a} = 0$), the asymptotic relative efficiency is equal to when the $W$ variable is not available to the efficient RAL estimator (i.e., when only adjusts for $L$ alone). In Figure \ref{fig:are-w} we illustrate the implication of Theorem \ref{cor:re} under those special cases. We plot $\are$ against $p_y$ when only $W$ is prognostic ($R^2_{L \mid W;a} = 0$), and in Figure \ref{fig:are-l} we plot $\are$ against $p_y / p_l$ when only $L$ is prognostic ($R^2_{W;a} = 0$). In each plot, we separately consider prognostic value being $0.1$ and $0.25$. We consider three implications of the curves in Figure \ref{fig:are}. First, the precision gain ($\are$) from adjusting for a prognostic baseline variable increases with a greater proportion of pipeline participants (i.e., smaller $p_y$). The reason is that each additional pipeline participant (who has observed baseline variables but missing outcome) contributes some information to the adjusted estimator (through the baseline variables) but no information to the unadjusted estimator. For a similar reason, the precision gain from adjusting for a prognostic short-term outcome increases with a greater proportion of participants with $W,L$ but not $Y$ observed  (i.e., smaller $p_y/p_l$).

Second, when every participant has their primary outcome observed (e.g., at the final analysis time of a group sequential design with no dropouts), adjusting for the prognostic baseline variables still improves estimation precision as long as $R^2_{W;a} > 0$, but adjusting for prognostic short-term outcome no longer does.

Third, for any given $(p_y,p_l)$, adjusting for a prognostic baseline variable alone always leads to larger precision gain than adjusting for an equally prognostic short-term outcome alone. 

Following the discussion in Section \ref{appen:are-sample-size-reduction}, we define the asymptotic equivalent reduction in sample size ($\redss$) of $\widehat{\Delta}_1$ compared to $\widehat{\Delta}_2$ as
\begin{equation}
\redss(\widehat{\Delta}_1, \widehat{\Delta}_2) = 1 - \are(\widehat{\Delta}_1, \widehat{\Delta}_2)^{-1}. \label{def:redss}
\end{equation}
Adjusting for prognostic $W$ alone with $R^2_{W;a}=q$ (for $0<q<1$) yields $\redss = q(1 - p_y/2)$, and adjusting for prognostic $L$ alone with $R^2_{L;a} = q$ yields $\redss = q(1 - p_y/p_l)$. The ratio of the two $\redss$s equals (assuming $q>0$)
\begin{equation}
r = \frac{\redss \text{ from adjusting only for prognostic } W \text{ with } R^2_{W;a}=q}{\redss \text{ from adjusting only for prognostic } L \text{ with } R^2_{L;a}=q} = \frac{1 - p_y/2}{1 - p_y/p_l}. \label{def:ratio-redss}
\end{equation}
$r>1$ means that the sample size reduction from adjusting for a prognostic $W$ is larger than that from adjusting for an equally prognostic $L$; $r<1$ means the opposite. Figure \ref{fig:contour} plots $r$ against $(p_l, p_y)$ under the constraint $0 < p_y \leq p_l \leq 1$. For all such $(p_l, p_y)$, $r > 1$. In addition, we have $r \geq 2$ if $p_y \geq 2/3$. This means that if at most $1/3$ of the enrollees are in the pipeline, the sample size reduction from adjusting for a prognostic short-term outcome alone is at most half of that from adjusting for an equally prognostic baseline variable alone.
Roughly speaking, small $p_y$ makes $r$ close to 1, meaning that adjusting for $W$ or adjusting for $L$ results in a similar sample size reduction when there are relatively few with $Y$ observed. This may occur at early stages of a group sequential design, if the delay of the primary outcome is long relative to the enrollment rate.

\begin{figure}[htbp]
\caption{Contour plot of $r(p_l, p_y)$, where $r$ is the ratio of the reductions in sample size when only $W$ is prognostic with $R^2_{W;a}=q$ and when only $L$ is prognostic with $R^2_{L;a}=q$, for any fixed $q \in (0,1)$. The estimand is the treatment specific mean $E(Y\mid A=a)$.}
\label{fig:contour}
\begin{center}
\includegraphics[width = 0.7\textwidth]{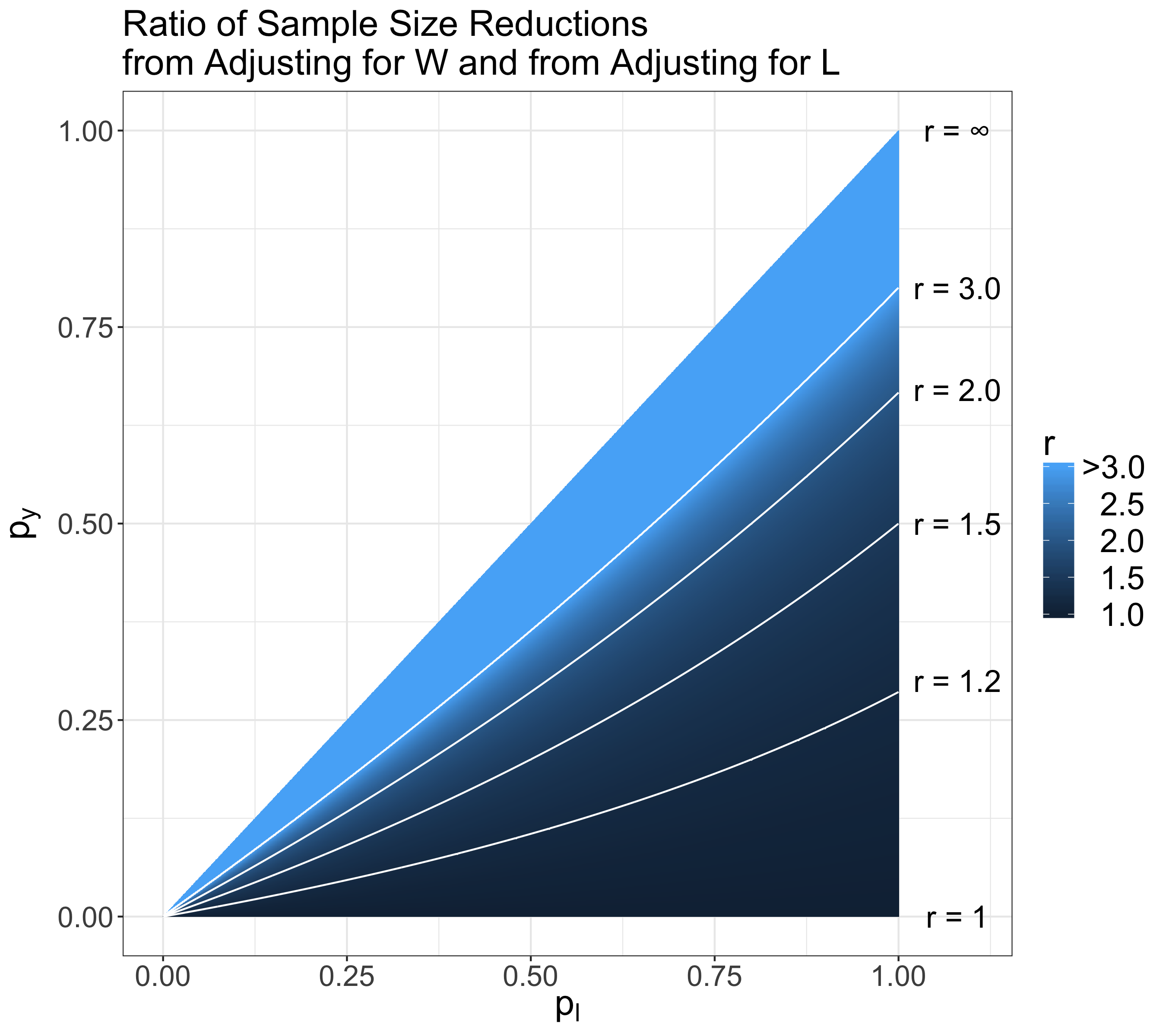}
\end{center}
\end{figure}

\section{Simulation Results for Relative Efficiency When Estimating the Treatment Specific Mean}
\label{appen:re-h1}

The simulation setup, including the data generating distributions (settings), are the same as Section \ref{sec:simulation}. 
Table \ref{tab:rsq-mistie-a0a1} gives the values of $R^2_{W;a}$ and $R^2_{L \mid W;a}$ for $a\in\{0,1\}$, which are defined in the paragraph after Theorem \ref{thm:avar}. Tables~\ref{tab:re-H0} and~\ref{tab:re-H1} list the asymptotic relative efficiency (ARE) approximated by substituting model-based estimates of 
the conditional expectations in the quantities in (\ref{eq:re}).

\begin{table}[htbp]

\caption{$R^2_{W;a}$ and $R^2_{L \mid W;a}$ for $a\in\{0,1\}$ for the data generating distributions from the settings in Section \ref{sec:simulation}, approximated based on a simulated dataset with 1,000,000 participants under each setting.}
\label{tab:rsq-mistie-a0a1}
\begin{center}
\begin{tabular}{llllllllllll}
\hline
                 & \multicolumn{2}{l}{$\progwl$} & & \multicolumn{2}{l}{$\progw$} & & \multicolumn{2}{l}{$\progl$} & & \multicolumn{2}{l}{$\prognon$} \\
                 & $\Delta=0$         & $\Delta=0.122$       &  & $\Delta=0$         & $\Delta=0.122$       & & $\Delta=0$         & $\Delta=0.122$        & & $\Delta=0$          & $\Delta=0.122$         \\
\hline
$R^2_{W;0}$        & 0.35          & 0.33        &  & 0.35          & 0.33        & & 0             & 0            & & 0              & 0             \\
$R^2_{L \mid W;0}$ & 0.08          & 0.06         & & 0             & 0          &  & 0.30          & 0.28         & & 0              & 0             \\
$R^2_{W;1}$        & 0.35          & 0.38        &  & 0.35          & 0.38        & & 0             & 0            & & 0              & 0             \\
$R^2_{L \mid W;1}$ & 0.08          & 0.08         & & 0             & 0          &  & 0.30          & 0.31         & & 0              & 0             \\
\hline
\end{tabular}
\end{center}
\end{table}

\begin{table}[htbp]
\caption{Comparison of the asymptotic relative efficiency (ARE) predicted by Theorem \ref{cor:re} and the relative efficiency computed from simulated trials under $\Delta=0$. The maximum sample size $\nmax$ is set to be 300 under $\progwl$ and $\progw$, and is set to 480 under $\progl$ and $\prognon$. Under each setting the same $\nmax$ is used for both the unadjusted estimator and the adjusted estimator.
The simulated RE is based on 50,000 simulated trials.} 
\label{tab:re-H0}
\begin{center}
\resizebox{\linewidth}{!}{
\begin{tabular}{lrrrrrrrrrr}
  \hline
Under $\Delta=0$ & & \multicolumn{4}{c}{ARE from theory}  & & \multicolumn{4}{c}{RE from Simulation} \\
  \cline{3-6} \cline{8-11}
& & $\progwl$ & $\progw$ & $\progl$ & $\prognon$ & & $\progwl$ & $\progw$ & $\progl$ & $\prognon$ \\ 
  \hline
  \\
   &  & \multicolumn{9}{c}{\textit{Estimand: } $E(Y\mid A=0)$} \\
 \multirow{4}{*}{\begin{tabular}{l} Interim \\ Analysis \end{tabular}} 
   & 1 & 1.44 & 1.36 & 1.13 & 1.00 && 1.34 & 1.29 & 1.09 & 0.97 \\ 
   & 2 & 1.35 & 1.31 & 1.07 & 1.00 && 1.32 & 1.28 & 1.06 & 0.99 \\ 
   & 3 & 1.32 & 1.29 & 1.05 & 1.00 && 1.29 & 1.27 & 1.04 & 0.99 \\ 
   & 4 & 1.29 & 1.26 & 1.04 & 1.00 && 1.27 & 1.25 & 1.03 & 0.99 \\ 
  \\
 \multirow{5}{*}{\begin{tabular}{l} Decision \\ Analysis \end{tabular}} 
   & 1 & 1.21 & 1.21 & 1.00 & 1.00 && 1.19 & 1.20 & 0.99 & 0.99 \\ 
   & 2 & 1.21 & 1.21 & 1.00 & 1.00 && 1.19 & 1.20 & 0.99 & 0.99 \\ 
   & 3 & 1.21 & 1.21 & 1.00 & 1.00 && 1.20 & 1.20 & 0.99 & 0.99 \\ 
   & 4 & 1.21 & 1.21 & 1.00 & 1.00 && 1.20 & 1.21 & 1.00 & 1.00 \\ 
   & 5 & 1.21 & 1.21 & 1.00 & 1.00 && 1.20 & 1.21 & 1.00 & 1.00 \\ 
  \\
  & & \multicolumn{9}{c}{\textit{Estimand: } $E(Y\mid A=1)$} \\
 \multirow{4}{*}{\begin{tabular}{l} Interim \\ Analysis \end{tabular}} 
   & 1 & 1.44 & 1.37 & 1.13 & 1.00 && 1.34 & 1.28 & 1.10 & 0.97 \\ 
   & 2 & 1.36 & 1.31 & 1.07 & 1.00 && 1.33 & 1.29 & 1.06 & 0.99 \\ 
   & 3 & 1.32 & 1.29 & 1.05 & 1.00 && 1.31 & 1.27 & 1.04 & 0.99 \\ 
   & 4 & 1.29 & 1.26 & 1.04 & 1.00 && 1.28 & 1.26 & 1.03 & 0.99 \\ 
  \\
 \multirow{5}{*}{\begin{tabular}{l} Decision \\ Analysis \end{tabular}} 
   & 1 & 1.21 & 1.21 & 1.00 & 1.00 && 1.20 & 1.20 & 0.99 & 0.99 \\ 
   & 2 & 1.21 & 1.21 & 1.00 & 1.00 && 1.20 & 1.21 & 0.99 & 0.99 \\ 
   & 3 & 1.21 & 1.21 & 1.00 & 1.00 && 1.20 & 1.21 & 0.99 & 0.99 \\ 
   & 4 & 1.21 & 1.21 & 1.00 & 1.00 && 1.21 & 1.21 & 1.00 & 1.00 \\ 
   & 5 & 1.21 & 1.21 & 1.00 & 1.00 && 1.21 & 1.21 & 1.00 & 1.00 \\ 
   \hline \\
\end{tabular}
}
\end{center}
\end{table}

\begin{table}[htbp]
\caption{Comparison of the asymptotic relative efficiency (ARE) predicted by Theorem \ref{cor:re} and the relative efficiency computed from simulated trials under $\Delta=0.122$. The maximum sample size $\nmax$ is set to be 300 under $\progwl$ and $\progw$, and is set to 480 under $\progl$ and $\prognon$. Under each setting the same $\nmax$ is used for both the unadjusted estimator and the adjusted estimator.
The simulated RE is based on 50,000 simulated trials.} 
\label{tab:re-H1}
\begin{center}
\resizebox{\linewidth}{!}{
\begin{tabular}{lrrrrrrrrrr}
  \hline
Under $\Delta=0.122$ & & \multicolumn{4}{c}{RE approximated by Theory}  & & \multicolumn{4}{c}{RE from Simulation} \\
  \cline{3-6} \cline{8-11}
& & $\progwl$ & $\progw$ & $\progl$ & $\prognon$ & & $\progwl$ & $\progw$ & $\progl$ & $\prognon$ \\ 
  \hline
  \\
   &  & \multicolumn{9}{c}{\textit{Estimand: } $E(Y\mid A=0)$} \\
 \multirow{4}{*}{\begin{tabular}{l} Interim \\ Analysis \end{tabular}} 
   & 1 & 1.59 & 1.54 & 1.14 & 1.00 && 1.49 & 1.43 & 1.10 & 0.97 \\ 
   & 2 & 1.48 & 1.45 & 1.08 & 1.00 && 1.46 & 1.41 & 1.07 & 0.99 \\ 
   & 3 & 1.43 & 1.41 & 1.06 & 1.00 && 1.44 & 1.40 & 1.05 & 0.99 \\ 
   & 4 & 1.39 & 1.38 & 1.04 & 1.00 && 1.40 & 1.38 & 1.03 & 0.99 \\ 
  \\
 \multirow{5}{*}{\begin{tabular}{l} Decision \\ Analysis \end{tabular}} 
   & 1 & 1.30 & 1.30 & 1.00 & 1.00 && 1.28 & 1.28 & 0.99 & 0.99 \\ 
   & 2 & 1.30 & 1.30 & 1.00 & 1.00 && 1.29 & 1.29 & 0.99 & 0.99 \\ 
   & 3 & 1.30 & 1.30 & 1.00 & 1.00 && 1.30 & 1.30 & 0.99 & 0.99 \\ 
   & 4 & 1.30 & 1.30 & 1.00 & 1.00 && 1.30 & 1.30 & 1.00 & 1.00 \\ 
   & 5 & 1.30 & 1.30 & 1.00 & 1.00 && 1.30 & 1.30 & 1.00 & 1.00 \\ 
  \\
&  & \multicolumn{9}{c}{\textit{Estimand: } $E(Y\mid A=1)$} \\
 \multirow{4}{*}{\begin{tabular}{l} Interim \\ Analysis \end{tabular}} 
   & 1 & 1.37 & 1.30 & 1.11 & 1.00 && 1.26 & 1.21 & 1.08 & 0.97 \\ 
   & 2 & 1.30 & 1.26 & 1.07 & 1.00 && 1.26 & 1.22 & 1.05 & 0.99 \\ 
   & 3 & 1.27 & 1.24 & 1.05 & 1.00 && 1.24 & 1.21 & 1.04 & 0.99 \\ 
   & 4 & 1.24 & 1.22 & 1.04 & 1.00 && 1.23 & 1.19 & 1.03 & 0.99 \\ 
  \\
 \multirow{5}{*}{\begin{tabular}{l} Decision \\ Analysis \end{tabular}} 
   & 1 & 1.18 & 1.18 & 1.00 & 1.00 && 1.16 & 1.16 & 0.99 & 0.99 \\ 
   & 2 & 1.18 & 1.18 & 1.00 & 1.00 && 1.16 & 1.16 & 0.99 & 0.99 \\ 
   & 3 & 1.18 & 1.18 & 1.00 & 1.00 && 1.17 & 1.16 & 0.99 & 0.99 \\ 
   & 4 & 1.18 & 1.18 & 1.00 & 1.00 && 1.17 & 1.16 & 1.00 & 1.00 \\ 
   & 5 & 1.18 & 1.18 & 1.00 & 1.00 && 1.17 & 1.16 & 1.00 & 1.00 \\ 
   \hline \\
\end{tabular}
}
\end{center}
\end{table}

\section{Proofs}
\label{appen:sec:proof}

We prove the results in the main paper as well as results in Section \ref{appen:theory-predictive-prognostic} and Section \ref{sec:theory-singlearm}. Auxiliary lemmas (used in these proofs) are proved Section \ref{appen:proof-aux}.

\subsection{Identification of the average treatment effect}

Under Assumptions \ref{assump:randomization}-\ref{assump:monotone-censoring}, the average treatment effect can be expressed as follows:
\begin{align}
E(Y \mid A=1) - E(Y \mid A=0) & =& E [ E \{ E (Y \mid L,W,A=1,C^Y=1) \mid W,C^L=1,A=1\} ] \nonumber \\
&& -  E [ E \{ E (Y \mid L,W,A=0,C^Y=1) \mid W,C^L=1,A=0\} ]  \label{eq:gcomputation}
\end{align}

The proof of (\ref{eq:gcomputation}) is given as follows.

\begin{proof}

It suffices to show that for each $a\in\{0,1\}$ the following holds:
\begin{equation}
E(Y \mid A=a) = E [ E \{ E (Y \mid L,W,A=a,C^Y=1) \mid W,A=a,C^L=1\} ]. \label{eq:gcomputation-proofuse-1}
\end{equation}
Using the law of iterated expectation twice, we deduce
\begin{equation}
E(Y \mid A=a) = E [ E \{ E (Y \mid L,W,A=a) \mid W,A=a\} ]. \label{eq:gcomputation-proofuse-3}
\end{equation}
Let $f(L,W) = E (Y \mid L,W,A=a)$. By Assumption \ref{assump:independent-censoring} we have
\begin{equation}
E(f(L,W) \mid A=a) = E(f(L,W) \mid A=a, C^L=1), \label{eq:gcomputation-proofuse-4}
\end{equation}
and
\begin{equation}
E(Y \mid L,W,A=a) = E(Y \mid L,W,A=a, C^L=1, C^Y=1). \label{eq:gcomputation-proofuse-5}
\end{equation}
Equations (\ref{eq:gcomputation-proofuse-3})-(\ref{eq:gcomputation-proofuse-5}) together yield
\begin{equation}
E(Y \mid A=a) = E [ E \{ E (Y \mid L,W,A=a,C^L=1,C^Y=1) \mid W,A=a,C^L=1\} ]. \label{eq:gcomputation-proofuse-6}
\end{equation}
Equation (\ref{eq:gcomputation-proofuse-6}) and Assumption \ref{assump:monotone-censoring} yield
(\ref{eq:gcomputation-proofuse-1}). This completes the proof.

\end{proof}

\subsection{Lemma on Variance Decomposition} \label{proof:lem:vardecomp} 

\begin{lemma}
\label{lem:vardecomp}

For study arm $a\in\{0,1\}$, we have the following decomposition of the variance of $Y$ in that arm:
\begin{equation}
\var_a(Y) = \var_a\{E_a(Y\mid W)\} + \var_a\{ E_a(Y\mid L,W) - E_a(Y \mid W) \} + \var_a\{Y - E_a(Y\mid L,W)\}. \label{eq:vardecomp}
\end{equation}
In addition, we have
\begin{equation}
R^2_{W;a} + R^2_{L\mid W;a} + R^2_{r;a} = 1, \quad \mbox{ for each } a \in \{0,1\}. \label{eq:rsqsum1}
\end{equation}
\end{lemma}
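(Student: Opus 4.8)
The plan is to prove the variance decomposition \eqref{eq:vardecomp} by writing the centered outcome $Y - E_a(Y)$ (within arm $a$) as a telescoping sum of three mean-zero terms and then showing those terms are pairwise uncorrelated conditional on $A=a$; the $R$-squared identity \eqref{eq:rsqsum1} follows at once by dividing through by $\var_a(Y)$ and matching terms to the definitions.

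First I would fix $a\in\{0,1\}$ and define, conditional on $A=a$, the three random variables $T_1 = E_a(Y\mid W) - E_a(Y)$, $T_2 = E_a(Y\mid L,W) - E_a(Y\mid W)$, and $T_3 = Y - E_a(Y\mid L,W)$. By construction $T_1 + T_2 + T_3 = Y - E_a(Y)$, so $\var_a(Y) = \var_a(T_1 + T_2 + T_3)$. Two applications of the tower property (iterated expectation conditional on $A=a$) show that each $T_i$ has conditional mean zero: $E_a(T_1)=0$ trivially, while $E_a(T_2\mid W)=0$ and $E_a(T_3\mid L,W)=0$. Consequently $\var_a(T_i)=E_a(T_i^2)$ and each cross term reduces to a conditional covariance $\cov_a(T_i,T_j)=E_a(T_i T_j)$.

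The core step is to establish $\cov_a(T_i,T_j)=0$ for $i\neq j$, which I would obtain from the nested conditioning structure. Since $E_a(T_3\mid L,W)=0$ and both $T_1$ and $T_2$ are functions of $(L,W)$, conditioning on $(L,W)$ gives $E_a(T_1 T_3)=E_a\{T_1\, E_a(T_3\mid L,W)\}=0$ and likewise $E_a(T_2 T_3)=0$. Since $E_a(T_2\mid W)=0$ and $T_1$ is a function of $W$, conditioning on $W$ gives $E_a(T_1 T_2)=E_a\{T_1\, E_a(T_2\mid W)\}=0$. Expanding $\var_a(T_1+T_2+T_3)$ and discarding the three vanishing cross terms yields exactly \eqref{eq:vardecomp}, after identifying $\var_a(T_1)=\var_a\{E_a(Y\mid W)\}$ (a constant shift leaves the variance unchanged), $\var_a(T_2)=\var_a\{E_a(Y\mid L,W)-E_a(Y\mid W)\}$, and $\var_a(T_3)=\var_a\{Y-E_a(Y\mid L,W)\}$.

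For \eqref{eq:rsqsum1}, assuming $\var_a(Y)>0$, I would divide \eqref{eq:vardecomp} by $\var_a(Y)$. The middle term is $R^2_{L\mid W;a}$ and the last term is the residual proportion $R^2_{r;a}=\var_a\{Y-E_a(Y\mid L,W)\}/\var_a(Y)$, both directly from their definitions; the first term becomes $\var_a\{E_a(Y\mid W)\}/\var_a(Y)$. To match this with $R^2_{W;a}$, which is defined via the unconditional $\var\{E_a(Y\mid W)\}$, I would invoke randomization (Assumption \ref{assump:randomization}): because $E_a(Y\mid W)$ is a fixed function of $W$ and $W$ is independent of $A$, the law of $W$ within arm $a$ equals its marginal law, so $\var_a\{E_a(Y\mid W)\}=\var\{E_a(Y\mid W)\}$. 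This is the only place randomization enters, and flagging the $\var$ versus $\var_a$ distinction is the single subtlety beyond routine bookkeeping. The main (though standard) obstacle is the orthogonality argument itself: the care lies in tracking which conditioning set annihilates which term, namely that conditioning on $(L,W)$ kills $T_3$ while conditioning on $W$ alone kills $T_2$.
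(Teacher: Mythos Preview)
Your proof is correct and follows essentially the same route as the paper: write $Y-E_a(Y)$ as a telescoping sum of three pieces and show pairwise orthogonality via the tower property, then divide by $\var_a(Y)$. Your presentation is in fact a bit more direct than the paper's (which routes the orthogonality through an auxiliary identity $E\{YE(Y\mid X)\}=E\{E(Y\mid Z)E(Y\mid X)\}$ for $X\in\sigma(Z)$), and you correctly flag the $\var$ versus $\var_a$ subtlety in matching the first term to $R^2_{W;a}$, which the paper handles only implicitly here.
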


\begin{proof}

Adding and subtracting terms, we have
\begin{equation}
\var_a(Y) = \var_a \big[ \{Y - E_a(Y\mid L,W)\} + \{E_a(Y\mid L,W) - E_a(Y \mid W)\} + E_a(Y\mid W) \big].
\end{equation}
So for proving (\ref{eq:vardecomp}), it suffices to establish the following:
\begin{align}
\cov_a \big\{ Y - E_a(Y\mid L,W), E_a(Y\mid L,W) - E_a(Y \mid W) \big\} &= 0, \label{eq:vardecomp-proofuse-1} \\
\cov_a \big\{ Y - E_a(Y\mid L,W), E_a(Y\mid W) \big\} &= 0, \label{eq:vardecomp-proofuse-2} \\
\cov_a \big\{ E_a(Y\mid L,W) - E_a(Y \mid W), E_a(Y\mid W) \big\} &= 0, \label{eq:vardecomp-proofuse-3}
\end{align}
where $\cov_a$ denotes the conditional covariance given $A=a$.

First, we have
\begin{align}
& \cov_a \big\{ Y - E_a(Y\mid L,W), E_a(Y\mid L,W) - E_a(Y \mid W) \big\} \nonumber\\
= &  E_a \big\{ Y E_a(Y\mid L,W) - E_a(Y\mid L,W)^2
  - Y E_a(Y \mid W) + E_a(Y\mid L,W) E_a(Y \mid W) \big\}. \label{eq:vardecomp-proofuse-4}
\end{align}
By Lemma \ref{lem:cond-exp} with $X = Z = (L,W)$, we have
\begin{equation}
E_a \big\{ Y E_a(Y\mid L,W) \big\} = E_a \big\{ E_a(Y \mid L,W)^2 \big\}. \label{eq:vardecomp-proofuse-5}
\end{equation}
By Lemma \ref{lem:cond-exp} with $X = W$ and $Z = (L,W)$, we have
\begin{equation}
E_a \big\{ Y E_a(Y \mid W) \big\} = E_a \big\{ E_a(Y \mid L,W) E_a(Y\mid W) \big\}. \label{eq:vardecomp-proofuse-6}
\end{equation}
Equations (\ref{eq:vardecomp-proofuse-4}), (\ref{eq:vardecomp-proofuse-5}), and (\ref{eq:vardecomp-proofuse-6}) imply (\ref{eq:vardecomp-proofuse-1}).

Second, we have
\begin{equation}
\cov_a \big\{ Y - E_a(Y\mid L,W), E_a(Y\mid W) \big\} = E_a \big\{ Y E_a(Y\mid W) - E_a(Y\mid L,W) E_a(Y\mid W) \big\}. \label{eq:vardecomp-proofuse-7}
\end{equation}
Equations (\ref{eq:vardecomp-proofuse-7}) and (\ref{eq:vardecomp-proofuse-6}) imply (\ref{eq:vardecomp-proofuse-2}).

Third, since $E_a(Y\mid L,W) - E_a(Y \mid W)$ has expectation zero, we have
\begin{equation}
\cov_a \big\{ E_a(Y\mid L,W) - E_a(Y \mid W), E_a(Y\mid W) \big\}
= E_a \big\{ E_a(Y\mid L,W) E_a(Y\mid W) - E_a(Y \mid W)^2 \big\}. \label{eq:vardecomp-proofuse-8}
\end{equation}
In Lemma \ref{lem:cond-exp}, letting $X=Z=W$ and replacing $Y$ in the lemma by $E(Y \mid L,W)$ implies 
\begin{equation}
E_a \big\{ E_a(Y \mid L,W) E_a(Y\mid W) \big\} = E_a \big\{ E_a(Y \mid W)^2 \big\}. \label{eq:vardecomp-proofuse-9}
\end{equation}
Equations (\ref{eq:vardecomp-proofuse-8}) and (\ref{eq:vardecomp-proofuse-9}) imply (\ref{eq:vardecomp-proofuse-3}).

This proves (\ref{eq:vardecomp}). Equation (\ref{eq:rsqsum1}) follows immediately from (\ref{eq:vardecomp}) and the definition of $R^2_{W;a}$, $R^2_{L \mid W;a}$ and $R^2_{r;a}$. This completes the proof for Lemma \ref{lem:vardecomp}.

\end{proof}

\subsection{Proof of Theorem \ref{thm:avar}} \label{proof:thm:avar}

\begin{proof}

In the proof, we will use equation (\ref{eq:gcomputation-proofuse-6}) derived earlier; we rewrite it below:
\begin{equation}
E(Y \mid A=a) = E [ E \{ E (Y \mid L,W,A=a,C^L=1,C^Y=1) \mid W,A=a,C^L=1\} ]. \label{eq:thm:avar-proofuse-1}
\end{equation}
Treating the missingness indicators $C^L$ and $C^Y$ as binary treatments, the right-hand side of (\ref{eq:thm:avar-proofuse-1}) becomes the average of outcome $Y$ under time dependent treatment assignment: $A=a, C^L=1, C^Y=1$. Because there is no measurement made between $A$ and $C^L$, we can combine the two as a single treatment $\tilde{A}$, with $\tilde{A}=1$ if and only if $A=a$ and $C^L=1$. Equation (\ref{eq:thm:avar-proofuse-1}) becomes
\begin{equation}
E(Y \mid A=a) = E [ E \{ E (Y \mid L,W,\tilde{A}=1,C^Y=1) \mid W,\tilde{A}=1\} ]. \label{eq:thm:avar-proofuse-1.5}
\end{equation}
Using the fact that $L$ is binary-valued, by equations (24) and (28) in \citet{rosenblum2011EIF} or Theorem 1 in \citet{van2010EIF}, the efficient influence function for (\ref{eq:thm:avar-proofuse-1.5}) is
\begin{equation}
D(W,\tilde{A},L,C^Y,Y) = D_0(W) + D_1(W,\tilde{A},L) + D_2(W,\tilde{A},L,C^Y,Y), \label{eq:thm:avar-proofuse-d}
\end{equation}
where
\begin{equation}
D_0(W) = E(Y \mid W,\tilde{A}=1,C^Y=1) - E(Y \mid \tilde{A}=1), \label{eq:thm:avar-proofuse-d0}
\end{equation}
\begin{equation}
D_1(W,\tilde{A},L) = \frac{
\indic(\tilde{A}=1) \big\{ E(Y \mid L,W, \tilde{A}=1, C^Y=1) - E(Y \mid W,\tilde{A}=1,C^Y=1) \big\}
}{P(\tilde{A}=1\mid W)}, \label{eq:thm:avar-proofuse-d1}
\end{equation}
and
\begin{equation}
D_2(W,\tilde{A},L,C^Y,Y) = \frac{
\indic(C^Y=1) \indic(\tilde{A}=1) \big\{ Y - E(Y \mid L,W, \tilde{A}=1, C^Y=1) \big\}
}{P(C^Y=1 \mid L,W,\tilde{A}=1)P(\tilde{A}=1\mid W)}. \label{eq:thm:avar-proofuse-d2}
\end{equation}


By randomization and independent censoring assumptions, (\ref{eq:thm:avar-proofuse-d0}) simplifies to
\begin{equation}
D_0(W) = E(Y \mid W, A=a) - E(Y \mid A=a); \label{eq:thm:avar-proofuse-d0-simp}
\end{equation}
equation (\ref{eq:thm:avar-proofuse-d1}) simplifies to
\begin{equation}
D_1(W,\tilde{A},L) = \frac{
\indic(\tilde{A}=1) \big\{ E(Y \mid L,W, A=a) - E(Y \mid W,A=a) \big\}
}{P(\tilde{A}=1)}; \label{eq:thm:avar-proofuse-d1-simp}
\end{equation}
equation (\ref{eq:thm:avar-proofuse-d2}) simplifies to
\begin{equation}
D_2(W,\tilde{A},L,C^Y,Y) = \frac{
\indic(C^Y=1) \indic(\tilde{A}=1) \big\{ Y - E(Y \mid L,W,A=a) \big\}
}{P(C^Y=1 \mid C^L=1)P(\tilde{A}=1)}. \label{eq:thm:avar-proofuse-d2-simp}
\end{equation}

The following lemma states that $D_0$, $D_1$, and $D_2$ are pairwise uncorrelated.

\begin{lemma} \label{lem:thm-proofuse}

We have
\begin{align}
\cov\{ D_0(W), D_1(W,\tilde{A},L)\} & = 0, \label{eq:lem:thm-proofuse-toshow-d0d1}\\
\cov\{ D_0(W), D_2(W,\tilde{A},L,C^Y,Y)\} & = 0, \label{eq:lem:thm-proofuse-toshow-d0d2}\\
\cov\{ D_1(W,\tilde{A},L), D_2(W,\tilde{A},L,C^Y,Y)\} & = 0. \label{eq:lem:thm-proofuse-toshow-d1d2}
\end{align}

\end{lemma}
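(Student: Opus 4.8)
The plan is to exploit the \emph{sequential} (orthogonal) structure of the three terms: $D_2$ is a residual of $Y$ whose conditional mean given the ``history'' $(W,\tilde A, L, C^Y)$ vanishes, and $D_1$ is a residual of $E(Y\mid L,W,A=a)$ whose conditional mean given $W$ vanishes. Because each later term integrates to zero against its own past, it is automatically uncorrelated with any quantity that is measurable with respect to that past. The only external tools I need are the tower property and Assumption \ref{assump:independent-censoring}, which lets me delete the events $\{C^L=1\}$ and $\{C^Y=1\}$ from conditioning statements, together with the fact that $\tilde A=1$ encodes $\{A=a\}\cap\{C^L=1\}$. I first observe that $E[D_1]=0$ and $E[D_2]=0$ will follow from the conditional-mean-zero statements established below (and $E[D_0]=0$ follows from Assumption \ref{assump:randomization}, which makes the law of $W$ given $A=a$ equal to its marginal law). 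Consequently each covariance in the lemma reduces to the corresponding $E[D_iD_j]$.

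I would dispatch the two covariances involving $D_2$ at once. Write $\mathcal{H}=(W,\tilde A, L, C^Y)$ and note that both $D_0(W)$ and $D_1(W,\tilde A, L)$ are $\mathcal{H}$-measurable; hence it suffices to show $E[D_2\mid\mathcal{H}]=0$, after which $E[D_j D_2]=E\{D_j\,E[D_2\mid\mathcal{H}]\}=0$ for $j\in\{0,1\}$, giving (\ref{eq:lem:thm-proofuse-toshow-d0d2}) and (\ref{eq:lem:thm-proofuse-toshow-d1d2}) simultaneously. Conditioning the simplified form (\ref{eq:thm:avar-proofuse-d2-simp}) of $D_2$ on $\mathcal{H}$, the indicators and the deterministic denominator factor out, leaving on the event $\{\tilde A=1,\,C^Y=1\}$ the quantity $E[\,Y-E(Y\mid L,W,A=a)\mid W,A=a,C^L=1,L,C^Y=1\,]$. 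By Assumption \ref{assump:independent-censoring} the two censoring events may be dropped from the conditioning set, so the inner expectation of $Y$ equals $E(Y\mid L,W,A=a)$ and the residual has conditional mean zero.

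For $\cov\{D_0,D_1\}$ in (\ref{eq:lem:thm-proofuse-toshow-d0d1}), since $D_0$ is a function of $W$ it suffices to show $E[D_1\mid W]=0$. Conditioning the simplified form (\ref{eq:thm:avar-proofuse-d1-simp}) of $D_1$ on $W$, on the event $\{\tilde A=1\}$ the inner factor becomes $E[\,E(Y\mid L,W,A=a)-E(Y\mid W,A=a)\mid W,A=a,C^L=1\,]$. Assumption \ref{assump:independent-censoring} removes $\{C^L=1\}$, and then the tower property gives $E[\,E(Y\mid L,W,A=a)\mid W,A=a\,]=E(Y\mid W,A=a)$, so the two terms cancel and $E[D_1\mid W]=0$. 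Therefore $E[D_0D_1]=E\{D_0(W)\,E[D_1\mid W]\}=0$.

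The hard part is not the algebra but the bookkeeping of the censoring conditioning: each appeal to Assumption \ref{assump:independent-censoring} must be justified, namely that deleting $\{C^L=1\}$ and $\{C^Y=1\}$ leaves the relevant conditional expectation of $Y$ (or of a function of $L$) unchanged, and that folding $\{A=a\}$ together with $\{C^L=1\}$ into $\tilde A$ does not disturb these reductions (with Assumption \ref{assump:monotone-censoring} guaranteeing the events are compatible). Once these deletions are verified, the pairwise orthogonality is the routine consequence of nested conditional-mean-zero residuals and requires no further computation.
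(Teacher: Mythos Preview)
Your proof is correct. The underlying mechanism is the same as the paper's---nested conditional-mean-zero residuals---but your packaging is different and somewhat cleaner. The paper expands each covariance separately using the simplified forms (\ref{eq:thm:avar-proofuse-d0-simp})--(\ref{eq:thm:avar-proofuse-d2-simp}) and then cites, one at a time, the three orthogonality identities (\ref{eq:vardecomp-proofuse-1})--(\ref{eq:vardecomp-proofuse-3}) established earlier in the proof of Lemma~\ref{lem:vardecomp}. You instead argue once that $E[D_2\mid \mathcal{H}]=0$ for $\mathcal{H}=(W,\tilde A,L,C^Y)$, which kills both (\ref{eq:lem:thm-proofuse-toshow-d0d2}) and (\ref{eq:lem:thm-proofuse-toshow-d1d2}) by measurability of $D_0,D_1$ in $\mathcal{H}$, and then that $E[D_1\mid W]=0$, which gives (\ref{eq:lem:thm-proofuse-toshow-d0d1}). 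Your route avoids the detour through Lemma~\ref{lem:vardecomp} and makes the martingale-increment structure of the efficient influence function explicit; the paper's route reuses machinery it already built, at the cost of three separate citations. One small wording point: when you say ``on the event $\{\tilde A=1\}$ the inner factor becomes\ldots'' while conditioning $D_1$ on $W$, you are implicitly first conditioning on $(W,\tilde A)$ and then integrating out $\tilde A$; this is fine, but making that intermediate step explicit (and noting $P(\tilde A=1\mid W)=P(\tilde A=1)$ by Assumptions~\ref{assump:randomization} and~\ref{assump:independent-censoring}) would tighten the exposition.
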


\bigskip

Lemma \ref{lem:thm-proofuse} implies
\begin{equation}
\var\{D(W,\tilde{A},L,C^Y,Y)\} = \var\{D_0(W)\} + \var\{D_1(W,\tilde{A},L)\} + \var\{D_2(W,\tilde{A},L,C^Y,Y)\}. \label{eq:thm:avar-proofuse-2}
\end{equation}
By (\ref{eq:thm:avar-proofuse-d0-simp}) we have
\begin{equation}
\var\{D_0(W)\} = \var\{E_a(Y \mid W)\} = \var_a\{E_a(Y \mid W)\}, \label{eq:thm:avar-proofuse-vard0}
\end{equation}
where the last equality follows from randomization assumption.
By (\ref{eq:thm:avar-proofuse-d1-simp}) we have $E\{D_1(W,\tilde{A},L)\}=0$, so it follows from randomization and independent censoring that
\begin{align}
\var\{D_1(W,\tilde{A},L)\} & = E \Big[ \frac{ \indic(\tilde{A}=1)^2 \big\{ E(Y \mid L,W, A=a) - E(Y \mid W,A=a) \big\}^2 }{P(\tilde{A}=1)^2} \Big] \nonumber \\
& = \frac{E\{\indic(\tilde{A}=1)\}}{P(\tilde{A}=1)^2} E\big[ \big\{ E(Y \mid L,W, A=a) - E(Y \mid W,A=a) \big\}^2 \mid A=a \big]. \label{eq:thm:avar-proofuse-vard1-1}
\end{align}
By independent censoring we have $P(\tilde{A}=1) = p_a p_l$. It then follows from (\ref{eq:thm:avar-proofuse-vard1-1}) and randomization assumption that
\begin{equation}
\var\{D_1(W,\tilde{A},L)\} = \frac{1}{p_a p_l} \var_a\big\{ E_a(Y \mid L,W) - E_a(Y \mid W) \big\}. \label{eq:thm:avar-proofuse-vard1}
\end{equation}
Similarly, (\ref{eq:thm:avar-proofuse-d2-simp}) together with randomization and monotone censoring imply
\begin{equation}
\var\{D_2(W,\tilde{A},L,C^Y,Y)\} = \frac{1}{p_a p_y} \var_a\big\{ Y - E_a(Y \mid L,W) \big\}. \label{eq:thm:avar-proofuse-vard2}
\end{equation}

Because the semiparametric lower bound on the asymptotic variance for an estimand equals the variance of the efficient influence function, by (\ref{eq:thm:avar-proofuse-2}), (\ref{eq:thm:avar-proofuse-vard0}), (\ref{eq:thm:avar-proofuse-vard1}), and (\ref{eq:thm:avar-proofuse-vard2}) we proved (\ref{eq:thm:var-bound}).

\end{proof}

\subsection{Proof of Theorem \ref{cor:re}} \label{proof:cor:re}

\begin{proof}

The unadjusted estimator $\widehat{\tau}$ for $E(Y \mid A=a)$ is defined as
\begin{equation}
\widehat{\tau} = \frac{\sum_{i=1}^n Y_i \indic(A_i = a, C^Y_i = 1)}{\sum_{i=1}^n \indic(A_i = a, C^Y_i = 1)}. \label{def:unadj-noncausal}
\end{equation}
Under Assumptions \ref{assump:randomization} and \ref{assump:independent-censoring}, $\widehat{\tau}$ is unbiased:
\begin{align}
E(\widehat{\tau}) = E \Big[ E\Big\{ \frac{\sum_{i=1}^n Y_i \indic(A_i = a, C^Y_i = 1)}{\sum_{i=1}^n \indic(A_i = a, C^Y_i = 1)} \Big| A_1,\ldots,A_n,C_{Y_1},\ldots,C_{Y_n}  \Big\} \Big] = E(Y \mid A = a).
\end{align}

In the following we calculate the asymptotic variance of $\widehat{\tau}$.
\begin{align}
\sqrt{n}\{\widehat{\tau} - E(Y \mid A=a)\} & = \frac{ \frac{1}{\sqrt{n}}\sum_{i=1}^n Y_i \indic(A_i = a, C^Y_i = 1)}{ \frac{1}{n}\sum_{i=1}^n \indic(A_i = a, C^Y_i = 1)} - \sqrt{n}E(Y \mid A=a) \nonumber \\
&= \frac{ \frac{1}{\sqrt{n}}\sum_{i=1}^n \{Y_i - E(Y \mid A=a)\} \indic(A_i = a, C^Y_i = 1)}{ \frac{1}{n}\sum_{i=1}^n \indic(A_i = a, C^Y_i = 1)}. \label{eq:unadj-asymp-1}
\end{align}
By Weak Law of Large Numbers and the independent censoring assumption,
\begin{equation}
\frac{1}{n}\sum_{i=1}^n \indic(A_i = a, C^Y_i = 1) \stackrel{P}{\to} p_a p_y, \label{eq:unadj-asymp-2}
\end{equation}
where $\stackrel{P}{\to}$ denotes convergence in probability.
By Central Limit Theorem,
\begin{equation}
\frac{1}{\sqrt{n}}\sum_{i=1}^n \{Y_i - E(Y \mid A=a)\} \indic(A_i = a, C^Y_i = 1) \stackrel{d}{\to} N(0, \sigma^2), \label{eq:unadj-asymp-3}
\end{equation}
where by randomization and independent censoring we have
\begin{align}
\sigma^2 & = \var \big[ \{Y - E(Y \mid A=a)\} \indic(A = a, C^Y = 1) \big] = E \big[ \{Y - E(Y \mid A=a)\}^2 \indic(A = a, C^Y = 1)^2 \big] \nonumber \\
& = p_a p_y \var(Y \mid A=a). \label{eq:unadj-asymp-4}
\end{align}
Combining (\ref{eq:unadj-asymp-1})-(\ref{eq:unadj-asymp-4}), it follows from Slutsky's theorem that
\begin{equation}
\sqrt{n}\{\widehat{\tau} - E(Y \mid A=a)\} \stackrel{d}{\to} N\big(0, (p_a p_y)^{-1} \var(Y \mid A=a)\big). \nonumber
\end{equation}
So the asymptotic variance of $\widehat{\tau}$ is $(p_a p_y)^{-1} \var(Y \mid A=a)$, which by randomization yields (recall that by definition $\var_a(Y) = \var(Y \mid A=a)$)
\begin{equation}
\avar(\unadj) = \frac{1}{p_a p_y} \var_a(Y). \label{eq:cor-proofuse-1}
\end{equation}
Equations (\ref{eq:cor-proofuse-1}) and (\ref{eq:rsqsum1}) imply
\begin{equation}
\avar(\unadj) = \frac{1}{p_a p_y} \var_a(Y) (R^2_{W;a} + R^2_{L\mid W;a} + R^2_{r;a}). \label{eq:cor-proofuse-2}
\end{equation}

On the other hand, Theorem \ref{thm:avar} and the definition of $R^2_W$ and $R^2_r$ imply
\begin{equation}
\avar(\eff) = \var_a(Y) ( R^2_{W;a} + \frac{1}{p_a p_l} R^2_{L\mid W;a} + \frac{1}{p_a p_y} R^2_{r;a} ). \label{eq:cor-proofuse-3}
\end{equation}
Equations (\ref{eq:cor-proofuse-2}), (\ref{eq:cor-proofuse-3}), and (\ref{eq:rsqsum1}) yield (\ref{eq:re}). The proof is thus finished.

\end{proof}

\subsection{Generalization and proof of Lemma \ref{thm:avar-ate}} \label{proof:thm:avar-ate}

We provide proof for a generalization of Lemma \ref{thm:avar-ate}, which allows for constant randomization probability other than $1/2$.

\begin{lemma} 
\label{thm:avar-ate-2}
Suppose Assumptions \ref{assump:randomization}, \ref{assump:independent-censoring}, and \ref{assump:monotone-censoring} hold. Define $P(A=a) = p_a$ for $a \in \{0,1\}$. The lower bound on the asymptotic variance of all RAL estimators of $E(Y\mid A=1) - E(Y\mid A=0)$ is 
\begin{align}
   & \var\{E_1(Y \mid W) - E_0(Y \mid W) \}
+  \sum_{a\in\{0,1\}} \frac{1}{p_a p_l} \var_a\{ E_a(Y \mid L,W) - E_a(Y \mid W) \} \nonumber \\
+ &  \sum_{a\in\{0,1\}} \frac{1}{p_a p_y} \var_a\{ Y - E_a(Y \mid L,W) \}. \label{eq:thm:var-bound-ate-2}
\end{align}

\end{lemma}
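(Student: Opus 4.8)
The plan is to obtain the efficient influence function (EIF) of $\Delta = E(Y\mid A=1) - E(Y\mid A=0)$ by differencing the two treatment-specific EIFs already computed in the proof of Theorem~\ref{thm:avar}, and then to compute its variance via an orthogonal decomposition. Since $\Delta$ is a linear functional of the two smooth parameters $E(Y\mid A=1)$ and $E(Y\mid A=0)$, its pathwise derivative --- and hence its EIF --- is the difference of the corresponding EIFs. Writing $D^{(a)} = D_0^{(a)} + D_1^{(a)} + D_2^{(a)}$ for the arm-$a$ EIF in the simplified forms (\ref{eq:thm:avar-proofuse-d0-simp})--(\ref{eq:thm:avar-proofuse-d2-simp}), the EIF of $\Delta$ is $D = D^{(1)} - D^{(0)}$, and the semiparametric lower bound equals $\var(D)$. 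I would group the six components by level, setting $G_j = D_j^{(1)} - D_j^{(0)}$ for $j \in \{0,1,2\}$, so that $D = G_0 + G_1 + G_2$.

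The heart of the argument is showing $\var(D) = \var(G_0) + \var(G_1) + \var(G_2)$, i.e., that $G_0, G_1, G_2$ are pairwise uncorrelated and that within $G_1$ and $G_2$ the two arm contributions are themselves uncorrelated. There are two distinct mechanisms. First, for $j,k \in \{1,2\}$ the cross-arm terms vanish because $D_j^{(1)}$ and $D_k^{(0)}$ carry the disjoint indicators $\indic(A=1,C^L=1)$ and $\indic(A=0,C^L=1)$, whose product is identically zero; combined with each component having mean zero this gives $\cov(D_j^{(1)}, D_k^{(0)}) = 0$, and in particular $\cov(D_1^{(1)}, D_1^{(0)}) = \cov(D_2^{(1)}, D_2^{(0)}) = 0$. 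Second, the within-arm cross-level covariances $\cov(D_0^{(a)}, D_1^{(a)})$ and $\cov(D_0^{(a)}, D_2^{(a)})$ vanish by Lemma~\ref{lem:thm-proofuse}.

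The remaining terms to control --- and the step I expect to be the main obstacle, since they are not covered directly by Lemma~\ref{lem:thm-proofuse} --- are the cross-arm, cross-level covariances such as $\cov(D_0^{(1)}, D_1^{(0)})$. These I would dispatch by conditioning on $W$: since $D_0^{(1)}$ is a function of $W$ alone, it suffices to show $E(D_1^{(0)} \mid W) = 0$ and $E(D_2^{(0)} \mid W) = 0$. This follows by factoring out the censoring indicator via independent censoring (Assumption~\ref{assump:independent-censoring}), using randomization (Assumption~\ref{assump:randomization}) to replace $P(A=0\mid W)$ by $p_0$, and then noting that $E_0\{E_0(Y\mid L,W) - E_0(Y\mid W) \mid W\} = 0$ by iterated expectations; the analogous identity for $D_2^{(0)}$ uses monotone censoring (Assumption~\ref{assump:monotone-censoring}). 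Expanding each $\cov(G_j, G_k)$ for $j\neq k$ into its four within-/cross-arm pieces then shows every such covariance is zero.

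Finally, I would evaluate the three surviving variances. For $j=0$, since $E_1(Y)$ and $E_0(Y)$ are constants, $\var(G_0) = \var\{E_1(Y\mid W) - E_0(Y\mid W)\}$, the first term. For $j=1,2$ the cross-arm covariance vanishes, so $\var(G_j) = \var(D_j^{(1)}) + \var(D_j^{(0)})$, and I would reuse the single-arm computations (\ref{eq:thm:avar-proofuse-vard1}) and (\ref{eq:thm:avar-proofuse-vard2}) --- with $P(\tilde{A} = 1) = p_a p_l$ and effective denominator $p_a p_y$ --- to obtain $\sum_a (p_a p_l)^{-1}\var_a\{E_a(Y\mid L,W) - E_a(Y\mid W)\}$ and $\sum_a (p_a p_y)^{-1}\var_a\{Y - E_a(Y\mid L,W)\}$, the second and third terms. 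Summing the three yields (\ref{eq:thm:var-bound-ate-2}), and Lemma~\ref{thm:avar-ate} is the special case $p_0 = p_1 = 1/2$.
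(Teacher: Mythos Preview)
Your proposal is correct and follows essentially the same approach as the paper: your $G_j$ coincide exactly with the paper's $D_j$ in (\ref{eq:thm:avar-ate-proofuse-d0})--(\ref{eq:thm:avar-ate-proofuse-d2}), and the variance is obtained via the same three-term orthogonal decomposition together with the single-arm variance formulas (\ref{eq:thm:avar-proofuse-vard1})--(\ref{eq:thm:avar-proofuse-vard2}). The only difference is organizational --- the paper packages the pairwise orthogonality into a dedicated Lemma~\ref{lem:thm-ate-proofuse} proved directly via Lemma~\ref{lem:cond-exp-ate}, whereas you split each covariance into within-arm pieces (handled by Lemma~\ref{lem:thm-proofuse}) and cross-arm pieces (handled by conditioning on $W$) --- but both routes rest on the same iterated-expectation identities.
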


\begin{proof}

For notation simplicity, denote by $Q(\cdot)$ the conditional expectation $E(Y\mid \cdot)$. Using the derivation in (\ref{eq:thm:avar-proofuse-1})-(\ref{eq:thm:avar-proofuse-d2-simp}) twice for $A=1$ and $A=0$, we get the efficient influence function $D(W,A,C^L,L,C^Y,Y)$ for $E(Y \mid A=1) - E(Y \mid A=0)$:
\begin{equation}
D(W,A,C^L,L,C^Y,Y) = D_0(W) + D_1(W,A,C^L,L) + D_2(W,A,C^L,L,C^Y,Y),
\end{equation}
where
\begin{equation}
D_0(W) = \big\{ Q(W,A=1) - Q(A=1) \big\} - \big\{ Q(W,A=0) - Q(A=0) \big\}, \label{eq:thm:avar-ate-proofuse-d0}
\end{equation}
\begin{align}
D_1(W,A,C^L,L) = & \frac{A C^L}{p_1 p_l} \big\{ Q(W,L,A=1) - Q(W,A=1) \big\} \nonumber \\
& - \frac{(1-A)C^L}{p_0 p_l} \big\{ Q(W,L,A=0) - Q(W,A=0) \big\} , \label{eq:thm:avar-ate-proofuse-d1}
\end{align}
and
\begin{equation}
D_2(W,A,C^L,L,C^Y,Y) = \frac{A C^Y}{p_1 p_y} \big\{ Y - Q(W,L,A=1) \big\} - \frac{(1-A) C^Y}{p_0 p_y} \big\{ Y - Q(W,L,A=0) \big\}. \label{eq:thm:avar-ate-proofuse-d2}
\end{equation}

The following lemma states that $D_0$, $D_1$, and $D_2$ are pairwise uncorrelated.

\begin{lemma} \label{lem:thm-ate-proofuse}

We have
\begin{align}
\cov\{ D_0(W), D_1(W,A,C^L,L)\} & = 0, \label{eq:lem:thm-ate-proofuse-toshow-d0d1}\\
\cov\{ D_0(W), D_2(W,A,C^L,L,C^Y,Y)\} & = 0, \label{eq:lem:thm-ate-proofuse-toshow-d0d2}\\
\cov\{ D_1(W,A,C^L,L), D_2(W,A,C^L,L,C^Y,Y)\} & = 0. \label{eq:lem:thm-ate-proofuse-toshow-d1d2}
\end{align}

\end{lemma}

\bigskip

Lemma \ref{lem:thm-ate-proofuse} implies
\begin{align}
\var\{D(W,A,C^L,L,C^Y,Y)\} = & \var\{D_0(W)\} + \var\{D_1(W,A,C^L,L)\} \nonumber \\
& + \var\{D_2(W,A,C^L,L,C^Y,Y)\}. \label{eq:thm:avar-ate-proofuse-2}
\end{align}
By \ref{eq:thm:avar-ate-proofuse-d0} we have
\begin{equation}
\var\{D_0(W)\} = \var\{Q(W,A=1) - Q(W,A=0)\}. \label{eq:thm:avar-ate-proofuse-vard0}
\end{equation}
By (\ref{eq:thm:avar-ate-proofuse-d1}) we have
\begin{align}
\var\{D_1(W,A,C^L,L)\} = & E \Big[ \frac{A C^L}{p_1^2 p_l^2} \big\{ Q(W,L,A=1) - Q(W,A=1) \big\}^2 \Big] \nonumber \\
& + E \Big[ \frac{(1-A) C^L}{p_0^2 p_l^2} \big\{ Q(W,L,A=0) - Q(W,A=0) \big\}^2 \Big] \label{eq:thm:avar-ate-proofuse-vard1-1} \\
& = \sum_{a\in\{0,1\}} \frac{1}{p_a p_l} E \Big[ \big\{ Q(W,L,A=a) - Q(W,A=a) \big\}^2 \mid A=a \Big] \label{eq:thm:avar-ate-proofuse-vard1-2} \\
& = \sum_{a\in\{0,1\}} \frac{1}{p_a p_l} \var \big\{ Q(W,L,A=a) - Q(W,A=a) \mid A=a \big\}. \label{eq:thm:avar-ate-proofuse-vard1}
\end{align}
The step from (\ref{eq:thm:avar-ate-proofuse-vard1-1}) to (\ref{eq:thm:avar-ate-proofuse-vard1-2}) utilizes the independent censoring and randomization assumptions.
Similarly, (\ref{eq:thm:avar-ate-proofuse-d2}) together with randomization and independent censoring imply
\begin{equation}
\var\{D_2(W,A,C^L,L,C^Y,Y)\} = \sum_{a\in\{0,1\}} \frac{1}{p_a p_l} \var \big\{ Y - Q(W,L,A=a) \mid A=a \big\}. \label{eq:thm:avar-ate-proofuse-vard2}
\end{equation}

Because the semiparametric lower bound on the asymptotic variance for an estimand equals the variance of the efficient influence function, by (\ref{eq:thm:avar-ate-proofuse-2}), (\ref{eq:thm:avar-ate-proofuse-vard0}), (\ref{eq:thm:avar-ate-proofuse-vard1}), and (\ref{eq:thm:avar-ate-proofuse-vard2}) we proved Lemma \ref{thm:avar-ate}.

\end{proof}

\subsection{Proof of Theorem \ref{cor:re-ate}} \label{proof:cor:re-ate} 

\begin{proof}

The unadjusted estimator for the average treatment effect is
\begin{equation}
\widehat{\tau} = \frac{\sum_{i=1}^n Y_i \indic(A_i = 1, C^Y_i = 1)}{\sum_{i=1}^n \indic(A_i = 1, C^Y_i = 1)} - \frac{\sum_{i=1}^n Y_i \indic(A_i = 0, C^Y_i = 1)}{\sum_{i=1}^n \indic(A_i = 0, C^Y_i = 1)}. \nonumber
\end{equation}
Similar to the derivation from (\ref{def:unadj-noncausal}) to (\ref{eq:cor-proofuse-1}), when estimating the average treatment effect with $P(A=1) = P(A=0) = 1/2$, we have
\begin{equation}
\avar(\unadj) = \frac{2}{p_y} \sum_{a \in \{0,1\}} \var_a(Y).
\end{equation}
The result in Theorem \ref{cor:re-ate} then follows immediately from Lemma \ref{thm:avar-ate}.

\end{proof}

\subsection{Proof of Corollary \ref{cor:prognostic-predictive}}
\label{proof:cor:prognostic-predictive}

\begin{proof}

\begin{enumerate}

\item[(i)] Because
\begin{align}
\var\{E(Y \mid W)\} &= \var\{0.5 E(Y \mid W, A=1) + 0.5 E(Y \mid W, A=1)\}, \nonumber
\end{align}
$\var\{E(Y \mid W)\} = 0$ implies
\begin{align}
\var\{E(Y \mid W, A=1)\} + \var\{E(Y \mid W, A=0)\} = -2\cov\{E(Y \mid W, A=1),E(Y \mid W, A=0). \label{eq:proofuse:cor:prognostic-predictive-1}
\end{align}
By (\ref{eq:proofuse:cor:prognostic-predictive-1}) and the definition of $\gamma$ and $R^2_W$, we have $\gamma = 2 R^2_W$.
This combined with (\ref{eq:re-ate}) with $R^2_{L \mid W} = 0$ imply (\ref{eq:rsq-onlyW-ate-predictive}).

\item[(ii)] By the definition of $\gamma$ and $\var\{E(Y \mid W, A=1) - E(Y \mid W, A=0)\} = 0$, (\ref{eq:rsq-onlyW-ate-prognostic}) follows immediately from (\ref{eq:re-ate}) with $R^2_{L \mid W} = 0$.

\end{enumerate}

\end{proof}

\subsection{Proof of statements in Section \ref{impactcomparison}}
\label{appen:proof-section3.4}

For ease of reading we restate the claims to be proven in Section \ref{impactcomparison}: We compare the $\are$ between two cases: $R^2_W = q >0, R^2_{L \mid W} = 0$ (only baseline variable prognostic) and 
$R^2_W = 0, R^2_{L \mid W} = q >0$ (only short-term outcome prognostic).  Regardless of the value of $q>0$, the $\are$ in the 
former case is larger or equal to that in the latter case. 
Equality occurs if and only if  
$p_l=1$ and  $W$ is  uncorrelated with $Y$ (marginally) in the former case. 
The latter condition is equivalent to 
the treatment effect heterogeneity being the maximum possible $\gamma=2R^2_W=2q$.

\begin{proof}
We first show that $\gamma \leq 2R^2_W$ with equality holds only when $E_1(Y\mid W) = - E_0(Y \mid W)$ almost surely. By Cauchy-Schwarz inequality we have
\begin{align*}
    & \var\{ E_1(Y \mid W) - E_0(Y \mid W) \} \\
    & = \var\{ E_1(Y \mid W)\} + \var\{ E_0(Y \mid W) \} - 2 \cov\{E_1(Y \mid W), E_0(Y \mid W) \} \\
    & \leq \var\{ E_1(Y \mid W)\} + \var\{ E_0(Y \mid W) \} + 2 \left[\var\{ E_1(Y \mid W)\}\var\{ E_0(Y \mid W) \}\right]^{1/2} \\
    & \leq 2 \var\{ E_1(Y \mid W)\} + 2 \var\{ E_0(Y \mid W) \},
\end{align*}
where both inequalities becomes equality if and only if $E_1(Y \mid W) = - E_0(Y \mid W)$ almost surely. This proves the claim.

Theorem \ref{cor:re-ate} states that the $\are$ equals (where we indicate dependence on the arguments by explicitly writing them out)
\begin{equation}
\are(R^2_W, R^2_{L\mid W}, \gamma, p_y, p_l) = \frac{1}{ 1 + (p_y/2) \gamma - R^2_W -  (1 - p_y / p_l) R^2_{L\mid W} }. \nonumber
\end{equation}
When $R^2_W = q >0, R^2_{L \mid W} = 0$,
\begin{align*}
    \are(q, 0, \gamma, p_y, p_l) = \frac{1}{ 1 + (p_y/2) \gamma - q } \geq \frac{1}{1 - (1 - p_y)q},
\end{align*}
with equality holds if and only if $\gamma = 2q$.
When $R^2_W = 0, R^2_{L \mid W} = q >0$ (which implies $\gamma = 0$ because $\gamma \leq 2R^2_W$),
\begin{equation}
\are(0, R^2_{L\mid W}, 0, p_y, p_l) = \frac{1}{ 1 -  (1 - p_y / p_l) q }. \nonumber
\end{equation}
Because $p_l \leq 1$, $\are(q, 0, \gamma, p_y, p_l) \geq \are(0, R^2_{L\mid W}, 0, p_y, p_l)$ with equality holds if and only if $p_l = 1$ and $\gamma = 2q$, i.e., $E_1(Y\mid W) = - E_0(Y \mid W)$ almost surely.

Lastly, we show that $E_1(Y\mid W) = - E_0(Y \mid W)$ almost surely implies that $W$ is uncorrelated with $Y$ (marginally):
\begin{align*}
    & \cov(Y, W) = E\{Y (W - E(W))\} = E\{E(Y \mid W) (W - E(W))\} \\
    & = E\{E(Y \mid W) (W - E(W))\} = E\left[\{0.5 E_1(Y\mid W) + 0.5 E_0(Y\mid W)\} (W - E(W))\right] = 0.
\end{align*}
This completes the proof.

\end{proof}

\section{Proof of Auxiliary Lemmas}
\label{appen:proof-aux}

\subsection{Additional Supporting Lemmas}

\begin{lemma} \label{lem:cond-exp}

Consider three random variables $X$, $Y$, and $Z$. Denote by $\sigma(Z)$ the $\sigma$-field generated by $Z$. If $X \in \sigma(Z)$, then
\begin{equation}
E\big\{Y E(Y \mid X)\big\} = E\big\{ E(Y\mid Z) E(Y\mid X) \big\}. \label{eq:cond-exp}
\end{equation}
\end{lemma}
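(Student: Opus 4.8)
The plan is to prove the identity via a single application of the tower property of conditional expectation, conditioning on $Z$, after first establishing a measurability fact. The key observation is that the hypothesis $X \in \sigma(Z)$ means $\sigma(X) \subseteq \sigma(Z)$, i.e. $X$ is a measurable function of $Z$. Consequently $E(Y \mid X)$, being itself a measurable function of $X$, is also $\sigma(Z)$-measurable. This is the one substantive step; everything else is bookkeeping with iterated expectations.

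First I would rewrite the left-hand side using the law of iterated expectation, conditioning the inner expectation on $Z$:
\begin{equation}
E\big\{ Y\, E(Y \mid X) \big\} = E\big\{ E\big[ Y\, E(Y \mid X) \,\big|\, Z \big] \big\}. \nonumber
\end{equation}
Next I would invoke the measurability fact above: since $E(Y \mid X)$ is $\sigma(Z)$-measurable, it can be factored out of the inner conditional expectation as a known quantity given $Z$, yielding
\begin{equation}
E\big[ Y\, E(Y \mid X) \,\big|\, Z \big] = E(Y \mid X)\, E(Y \mid Z). \nonumber
\end{equation}
Taking the outer expectation then gives exactly the right-hand side of \eqref{eq:cond-exp}.

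The only point requiring care, and the step I expect to be the crux rather than a genuine obstacle, is justifying that $X \in \sigma(Z)$ implies $E(Y\mid X)$ is $\sigma(Z)$-measurable: by definition $E(Y \mid X)$ equals some Borel function $g(X)$ almost surely, and composing $g$ with the measurable map from $Z$ to $X$ shows $g(X)$ is $\sigma(Z)$-measurable. Once this is noted, the factorization in the second display is precisely the ``taking out what is known'' property of conditional expectation. No integrability subtleties arise beyond assuming $Y$ (equivalently $Y\,E(Y\mid X)$) is integrable, which is implicit in the surrounding setting where $Y$ is a bounded or square-integrable outcome, so I would simply assume the relevant expectations are finite.
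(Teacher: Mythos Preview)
Your proposal is correct and follows essentially the same approach as the paper's own proof: apply the tower property conditioning on $Z$, observe that $X \in \sigma(Z)$ implies $E(Y\mid X) \in \sigma(X) \subset \sigma(Z)$, and then pull $E(Y\mid X)$ out of the inner conditional expectation. The paper's argument is nearly verbatim what you wrote, minus the extra commentary on integrability.
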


\bigskip

\begin{proof}

By the law of iterated expectation, we have
\begin{equation}
E\big\{Y E(Y \mid X)\big\} = E\big[ E \big\{Y E(Y \mid X) \mid Z\big\} \big]. \label{eq:cond-exp-proofuse-1}
\end{equation}
Because $X \in \sigma(Z)$, we have $E(Y \mid X) \in \sigma(X) \subset \sigma(Z)$. This implies
\begin{equation}
E \big\{Y E(Y \mid X) \mid Z\big\} = E(Y \mid X) E (Y \mid Z). \label{eq:cond-exp-proofuse-2}
\end{equation}
Equations (\ref{eq:cond-exp-proofuse-1}) and (\ref{eq:cond-exp-proofuse-2}) imply (\ref{eq:cond-exp}). This completes the proof.

\end{proof}

\begin{lemma} \label{lem:cond-exp-ate}

Consider three random variables $W$, $L$, and $Y$. For any measurable functions $f(W)$ and $g(W,L)$, we have
\begin{align}
E\big[ f(W) \big\{ E(Y \mid W,L) - E(Y \mid W) \big\} \big] & = 0, \label{eq:cond-exp-ate-1} \\
E\big[ g(W,L) \big\{ Y - E(Y \mid W,L) \big\} \big] & = 0. \label{eq:cond-exp-ate-2}
\end{align}

\end{lemma}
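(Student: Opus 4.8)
The plan is to prove both identities by the law of iterated expectations, in each case conditioning on the coarser $\sigma$-field and exploiting the fact that the multiplying function is measurable with respect to that field, so it can be pulled outside the inner conditional expectation.

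For \eqref{eq:cond-exp-ate-2} I would condition on $(W,L)$. Writing $E[g(W,L)\{Y - E(Y \mid W,L)\}] = E\big[E\{g(W,L)(Y - E(Y\mid W,L)) \mid W,L\}\big]$, I note that both $g(W,L)$ and $E(Y\mid W,L)$ are $\sigma(W,L)$-measurable, so they factor out of the inner expectation and leave $E\big[g(W,L)\{E(Y\mid W,L) - E(Y\mid W,L)\}\big] = 0$. The cancellation in the bracket is exactly the defining property of $E(Y\mid W,L)$ as the conditional expectation of $Y$ given $(W,L)$.

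For \eqref{eq:cond-exp-ate-1} I would condition on $W$ instead. Since $f(W)$ is $\sigma(W)$-measurable, $E[f(W)\{E(Y\mid W,L) - E(Y\mid W)\}] = E\big[f(W)\,E\{E(Y\mid W,L) - E(Y\mid W) \mid W\}\big]$. The key step is the tower property: because $\sigma(W)\subseteq\sigma(W,L)$, we have $E\{E(Y\mid W,L)\mid W\} = E(Y\mid W)$, while trivially $E\{E(Y\mid W)\mid W\} = E(Y\mid W)$. Hence the inner conditional expectation is identically zero, and the outer expectation of the product vanishes.

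Neither step presents a genuine obstacle; both are standard orthogonality properties of conditional expectation, and they are precisely the facts invoked in the variance decomposition proof of Lemma \ref{lem:vardecomp}. The only point needing (implicit) care is integrability: one assumes $Y$ is integrable and that $f(W)$ and $g(W,L)$ are such that the relevant products are integrable, so that all conditional expectations are well-defined and the factoring manipulations are justified.
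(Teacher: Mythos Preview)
Your proposal is correct and follows essentially the same approach as the paper: for each identity you condition on the appropriate $\sigma$-field ($W$ for \eqref{eq:cond-exp-ate-1}, $(W,L)$ for \eqref{eq:cond-exp-ate-2}), pull the measurable factor outside, and apply the tower property to collapse the remaining conditional expectation to zero. The paper's write-up is nearly identical (modulo a harmless $f$/$g$ typo), and your remark on integrability is a sensible caveat that the paper leaves implicit.
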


\bigskip

\begin{proof}

By the law of iterated expectation, we have
\begin{align}
E \big\{ g(W) E(Y \mid W,L) \big\} &= E \big[ E \big\{ g(W) E(Y \mid W,L) \mid W \big\} \big ] \nonumber \\
&= E \big[ g(W) E \big\{ E(Y \mid W,L) \mid W \big\} \big ] \nonumber \\
&= E \big\{ g(W) E(Y \mid W) \big\}, \nonumber
\end{align}
which proves (\ref{eq:cond-exp-ate-1}).

Similarly, we have
\begin{equation}
E \big\{ g(W,L) Y \big\} = E \big[ E \big\{ g(W,L) Y \mid W,L \big\} \big] = E \big\{ g(W,L) E(Y \mid W,L) \big\}, \nonumber
\end{equation}
which proves (\ref{eq:cond-exp-ate-2}).

\end{proof}

\subsection{Proof of Lemma \ref{lem:thm-proofuse}} \label{proof:lem:thm-proofuse}

\begin{proof}

By (\ref{eq:thm:avar-proofuse-d0-simp}) and (\ref{eq:thm:avar-proofuse-d1-simp}) we have
\begin{equation}
\cov(D_0, D_1) \propto E \big[ \indic(\tilde{A}=1)
\{ E_a(Y \mid W) - E_a(Y) \} \{ E_a(Y \mid L,W) - E_a(Y \mid W) \} \big]. \label{eq:lem:thm-proofuse-d0d1}
\end{equation}
Combining (\ref{eq:vardecomp-proofuse-3}), (\ref{eq:lem:thm-proofuse-d0d1}), and Assumptions \ref{assump:randomization} and \ref{assump:independent-censoring}, we derive (\ref{eq:lem:thm-proofuse-toshow-d0d1}).

By (\ref{eq:thm:avar-proofuse-d0-simp}) and (\ref{eq:thm:avar-proofuse-d2-simp}) we have
\begin{equation}
\cov(D_0, D_2) \propto E \big[ \indic(\tilde{A}=1,C^Y=1)
\{ E_a(Y \mid W) - E_a(Y) \} \{ Y - E_a(Y \mid L,W) \} \big]. \label{eq:lem:thm-proofuse-d0d2}
\end{equation}
Combining (\ref{eq:vardecomp-proofuse-2}), (\ref{eq:lem:thm-proofuse-d0d2}), and Assumptions \ref{assump:randomization} and \ref{assump:independent-censoring}, we derive (\ref{eq:lem:thm-proofuse-toshow-d0d2}).

By (\ref{eq:thm:avar-proofuse-d1-simp}) and (\ref{eq:thm:avar-proofuse-d2-simp}) we have
\begin{equation}
\cov(D_1, D_2) \propto E \big[ \indic(\tilde{A}=1,C^Y=1)
\{ E_a(Y \mid L,W) - E_a(Y \mid W) \} \{ Y - E_a(Y \mid L,W) \} \big]. \label{eq:lem:thm-proofuse-d1d2}
\end{equation}
Combining (\ref{eq:vardecomp-proofuse-1}), (\ref{eq:lem:thm-proofuse-d1d2}), and Assumptions \ref{assump:randomization} and \ref{assump:independent-censoring}, we derive (\ref{eq:lem:thm-proofuse-toshow-d1d2}).

This completes the proof.

\end{proof}

\subsection{Proof of Lemma \ref{lem:thm-ate-proofuse}} \label{proof:lem:thm-ate-proofuse}

For notation simplicity, we use $E_1(\cdot)$ and $E_0(\cdot)$ to denote $E(\cdot \mid A=1)$ and $E(\cdot \mid A=0)$, respectively.
By (\ref{eq:thm:avar-ate-proofuse-d0}) and (\ref{eq:thm:avar-ate-proofuse-d1}) we have
\begin{align}
\cov(D_0, D_1) \propto & E \big[ A \big\{ Q(W,A=1) - Q(W,A=0) \big\} \big\{ Q(W,L,A=1) - Q(W,A=1) \big\} \big] \nonumber \\
& - E \big[ (1-A) \big\{ Q(W,A=1) - Q(W,A=0) \big\} \big\{ Q(W,L,A=0) - Q(W,A=0) \big\} \big] \nonumber \\
&= \sum_{a\in\{0,1\}} E_a \big[ \big\{ E_1(Y \mid W) - E_0(Y \mid W) \big\} \big\{ E_a(Y \mid W,L) - E_a(Y \mid W) \big\} \times P(A=a). \label{eq:lem:thm-ate-proofuse-d0d1}
\end{align}
Both terms in (\ref{eq:lem:thm-ate-proofuse-d0d1}) equals 0 by (\ref{eq:cond-exp-ate-1}) in Lemma \ref{lem:cond-exp-ate} with $f(W) = E_1(Y \mid W) - E_0(Y \mid W)$. This yields (\ref{eq:lem:thm-ate-proofuse-toshow-d0d1}).

By (\ref{eq:thm:avar-ate-proofuse-d0}) and (\ref{eq:thm:avar-ate-proofuse-d2}) we have
\begin{align}
\cov(D_0, D_2) \propto & E \big[ A \big\{ Q(W,A=1) - Q(W,A=0) \big\} \big\{ Y - Q(W,L,A=1) \big\} \big] \nonumber \\
& - E \big[ (1-A) \big\{ Q(W,A=1) - Q(W,A=0) \big\} \big\{ Y - Q(W,L,A=0) \big\} \big] \nonumber \\
&= \sum_{a\in\{0,1\}} E_a \big[ \big\{ E_1(Y \mid W) - E_0(Y \mid W) \big\} \big\{ Y - E_a(Y \mid W,L) \big\} \times P(A=a). \label{eq:lem:thm-ate-proofuse-d0d2}
\end{align}
Both terms in (\ref{eq:lem:thm-ate-proofuse-d0d2}) equals 0 by (\ref{eq:cond-exp-ate-2}) in Lemma \ref{lem:cond-exp-ate} with $g(W,L) = E_1(Y \mid W) - E_0(Y \mid W)$. This yields (\ref{eq:lem:thm-ate-proofuse-toshow-d0d2}).

By (\ref{eq:thm:avar-ate-proofuse-d1}) and (\ref{eq:thm:avar-ate-proofuse-d2}) we have
\begin{align}
\cov(D_1, D_2) \propto & E \big[ A \big\{ Q(W,L,A=1) - Q(W,A=1) \big\} \big\{ Y - Q(W,L,A=1) \big\} \big] \nonumber \\
& - E \big[ (1-A) \big\{ Q(W,L,A=0) - Q(W,A=0) \big\} \big\{ Y - Q(W,L,A=0) \big\} \big] \nonumber \\
&= \sum_{a\in\{0,1\}} E_a \big[ \big\{ E_a(Y \mid W,L) - E_a(Y \mid W) \big\} \big\{ Y - E_a(Y \mid W,L) \big\} \times P(A=a). \label{eq:lem:thm-ate-proofuse-d1d2}
\end{align}
Both terms in (\ref{eq:lem:thm-ate-proofuse-d1d2}) equals 0 by (\ref{eq:cond-exp-ate-2}) in Lemma \ref{lem:cond-exp-ate} with $g(W,L) = E_a(Y \mid W,L) - E_a(Y \mid W)$. This yields (\ref{eq:lem:thm-ate-proofuse-toshow-d1d2}).

This completes the proof.

\end{document}